\newcommand{\Z}{\mathbb{Z}}
\newcommand{\R}{\mathbb{R}}
\newcommand{\N}{\mathbb{N}}
\newcommand{\C}{\mathcal{C}}
\newcommand{\A}{\mathbb{A}}
\newcommand{\pw}{ {\star} }
\newcommand{\supp}{\operatorname{supp}}
\newcommand{\sinc}{\operatorname{sinc}}
\def\abs#1{| #1 |}
\def\norm#1{\| #1 \|}
\def\set#1{\{ #1 \}}
\def\ip#1#2{\langle #1,#2 \rangle}
\pgfplotsset{width=7cm,compat=newest}
\tikzset{
show curve controls/.style={
    decoration={
      show path construction,
      curveto code={
      \draw [blue, dashed]
          (\tikzinputsegmentfirst)    -- (\tikzinputsegmentsupporta)
          node [at end, draw, solid, red, inner sep=2pt]{};
        \draw [blue, dashed]
          (\tikzinputsegmentsupportb) -- (\tikzinputsegmentlast)
          node [at start, draw, solid, red, inner sep=2pt]{};
          \fill [black] (\tikzinputsegmentfirst) circle (2pt);
          \fill [red] (\tikzinputsegmentlast) circle (2pt);
      }
    },decorate
  }
} 
\colorlet{lightred}{red!60}
\colorlet{lightblue}{blue!30}
\newcommand{\mypatternnomarks}[2] % size N =#1 , pattern on the diagonal = list #2
{ %grid
\draw[style=help lines] (0,0) grid (#1,#1);
%
% shading pattern
\foreach \m/\n in {#2}
	\filldraw[fill=mycolor, draw=black, shift={(\m-1,\n-1)}] (0,0) rectangle (1,1) node (0.5,0.5) {};  
\draw[->] (-0.2,0) -- (#1+0.3,0) node[right] {$(t,\nu)$};
\draw[->] (0,-0.2) -- (0,#1+0.3) node[above] {$(t',\nu') $};
}
\newcommand{\mypattern}[2] % true length L = #1, pattern =  #2
{	\pgfmathsetmacro\NN{#1 * #1};
	%grid
	\draw[style=help lines] (0,0) grid (\NN,\NN);
	%
	% shading pattern
	\foreach \m/\n in {#2}
		\filldraw[fill=mycolor, draw=black, shift={(\m-1,\n-1)}] (0,0) rectangle (1,1) node (0.5,0.5) {};  
%
% 	\mypatternnomarks{\NN}{#2};
% 		\draw[style=help lines] (0,0) grid (#2-0.01,#2-0.01);
	\draw[style=dashed, step=#1] (0.1,0.1) grid (\NN,\NN);
% 	\draw[color=white, style=solid, step=#2] (0.1,0.1) grid (#2,#2);
	\pgfmathparse{#1 - 1}
	\let\Lminusone\pgfmathresult	
	\foreach \m in {0,...,\Lminusone}
	\foreach \n in {0,...,\Lminusone}
	{	
		\pgfmathsetmacro\c{int(#1*\m+\n+1)}
		\node[below] at (\c-0.5,0) {${\eta}_{\m\n}$}; %{$\steta_{\m}$};
		\node[left] at (0,\c-0.5)  { ${\eta}_{\m\n}$};  %{$\steta_{\m}$};
	}
	\draw[->] (-0.2,0) -- (\NN+0.3,0) node[right] {$(t,\nu)$};
	\draw[->] (0,-0.2) -- (0,\NN+0.3) node[above] {$(t',\nu') $};
}% end of mypat
\newcommand{\mysympatternwithdots}[6] % % L = #1, true size N = #2, pattern = #3, off-diag pattern = #4, dots go in #5, last marks says length = #6
{ \footnotesize
	%grid
	\draw[style=help lines] (0,0) grid (#2-0.01,#2-0.01);
	\draw[style=dashed, step=#1] (0.1,0.1) grid (#2,#2);
% 	\draw[color=white, style=solid, step=#2] (0.1,0.1) grid (#2,#2);
	% shading pattern
	\foreach \m/\n in {#3}
		\filldraw[fill=mycolor, draw=black, shift={(\m-1,\n-1)}] (0,0) rectangle (1,1) node (0.5,0.5) {};  
	\foreach \m/\n in {#4}
	{	\filldraw[fill=mycolor, draw=black, shift={(\m-1,\n-1)}] (0,0) rectangle (1,1) node (0.5,0.5) {};  
		\filldraw[fill=mycolor, draw=black, shift={(\n-1,\m-1)}] (0,0) rectangle (1,1) node (0.5,0.5) {};  
	}
	\pgfmathparse{#1 - 1}
	\let\Lminusone\pgfmathresult	
	\foreach \m in {0,...,\Lminusone}
	\foreach \n in {0,...,\Lminusone}
	{	
		\pgfmathsetmacro\c{int(#1*\m+\n+1)}
		\pgfmathsetmacro\d{int(#2)<int(#1*\m+\n+1)}
		\ifthenelse{\not\equal{\d}{0.0}}{\breakforeach}{%
			\node[below] at (\c-0.5,0) (eta\m\n) {${\eta}_{\m\n}$}; %{$\steta_{\m}$};
			\node[left] at (0,\c-0.5) (leta\m\n) { ${\eta}_{\m\n}$};  %{$\steta_{\m}$};
% 		\fi
		}
	}
	\draw[->] (-0.2,0) -- (#2+0.3,0) node[right] {$(t,\nu)$};
	\draw[->] (0,-0.2) -- (0,#2+0.3) node[above] {$(t',\nu') $};
}% end of mysympatternwithdots
\newcommand{\mysympattern}[3] % L = #1 , diag.pattern = #2, off-diag.pattern = #3
{ \footnotesize
	\mypattern{#1}{#2};
	% shading off-diagonal pattern
	\foreach \m/\n in {#3}
	{	\filldraw[fill=mycolor, draw=black, shift={(\m-1,\n-1)}] (0,0) rectangle (1,1) node (0.5,0.5) {};  
		\filldraw[fill=mycolor, draw=black, shift={(\n-1,\m-1)}] (0,0) rectangle (1,1) node (0.5,0.5) {};  
	}
}% end of mysympattern
\newcommand{\mynonsympattern}[3] % L, diag.pattern, off-diag.pattern
{ \footnotesize
\mypattern{#1}{#2}
% shading off-diagonal pattern
\foreach \m/\n in {#3}
	\filldraw[fill=mycolor, draw=black, shift={(\m-1,\n-1)}] (0,0) rectangle (1,1) node (0.5,0.5) {};  
}% end of mynosympattern
\newcommand{\myaxes}[1]
{
\pgfmathparse{int(#1)}
\let\L\pgfmathresult
% axes 
\draw[->] (-0.2,0) -- (\L+0.2,0) node[below] {$(t,\nu)$};
\draw[->] (0,-0.2) -- (0,\L+0.2) node[above] {$(t',\nu')$};
}
\newcommand{\myticks}[1]
{
\foreach \x in {#1}
{   \draw[shift={(\x,0)}] (0pt,2pt) -- (0pt,-2pt); % node[below] {$\xtext$};
    \draw[shift={(0,\x)}] (2pt,0pt) -- (-2pt,0pt); % node[left] {$\ytext$};
}
}
\newcommand{\tensorsupport}
{
\myaxes{8}
\myticks{2,4.5,6,7}
%dashed lines
\begin{scope}[dashed]
\foreach \x in {2,4.5,6,7}
{	\draw (\x,0) -- (\x,8);
	\draw (0,\x) -- (8,\x);
}
\end{scope}

% thick lines
\begin{scope}[ultra thick]
\foreach \a/\b in {2/4.5,6/7}
{
	\draw (\a,0) -- (\b,0);
	\draw (0,\a) -- (0,\b);
}
\end{scope}

% filled aread
\foreach \a/\b/\c/\d in {2/2/4.5/4.5,2/6/4.5/7,6/2/7/4.5,6/6/7/7}
	\filldraw[fill=mycolor, draw=black] (\a,\b) rectangle (\c,\d);

% \pgfsetfillpattern{crosshatch}{blue}
% \filldraw  (-0.3,-0.2)  rectangle  (1,0.5);
}
\newcommand{\wssussupport}
{
\myaxes{8}
\myticks{2,5,6,7}
\begin{scope}[ultra thick]
\draw (0.5,0) -- (7,0);
\draw (0,0.5) -- (0,7);
\draw[mycolor,latex-latex] (0.5,0.5) -- (7,7);
\end{scope}
}
\newcommand{\curvysupport}
{
\myaxes{8}
\myticks{1,7}

% \filldraw[fill=mycolor, draw=black] plot[smooth cycle,tension=0.75] coordinates{(1.5,1.5) (1.5,3) (2,4.3) (3,5.2) (4.3, 5.7) (5,7.5) (5.7,6.7) (7,7) (6.7,5.7) (7.5,5) (5.7,4.3) (5.2,3) (4.3,2) (3,1.5) } -- cycle;
\tikzstyle{every node}=[]; %ball color=red,circle,inner sep=0pt,minimum size=2pt];
\filldraw[fill=mycolor, draw=black]
			(1,1) .. controls +(90:3) and +(200:1) .. (3,5) node {}
						  .. controls +(20:2) and +(250:2) .. (5,7) node {}
						  .. controls +(-60:1) and +(200:0.5) .. (7,7) node {}
						  .. controls +(-110:0.5) and +(150:1) .. (7,5) node {}
						  .. controls +(-160:2) and +(70:2) .. (5,3) node {}
						  .. controls +(-110:1) and +(0:3) .. (1,1); node {}
\begin{scope}[dashed]
\foreach \x in {1, 7}
{	\draw (\x,0) -- (\x,8);
	\draw (0,\x) -- (8,\x);
}
\end{scope}

% thick lines
\begin{scope}[ultra thick]
\foreach \a/\b in {1/7}
{
	\draw (\a,0) -- (\b,0);
	\draw (0,\a) -- (0,\b);
}
\end{scope}
}
\newcommand{\torusdraw}[3]%   #1 =  \xx expression;    #2 = \yy expression;    #3 = \W;   
{	
	\tikzset{weak lines/.style={gray, very thin}} 
	\pgfmathparse{int(#3)}
	\let\W\pgfmathresult
	\def\A{7mm}
% 	\draw[style=weak lines] (0, 0) -- (\W*\A,0);
% 	\draw[style=weak lines] (0, 0) -- (0,\W*\A);	
	\foreach \x in {0,...,\W} 
	{	\foreach \y in {0,...,\W}	
		{ 	\pgfmathsetmacro\xx{int(#1)} 
 			\pgfmathsetmacro\yy{int(#2)} 
 			\pgfmathsetmacro\opx{(\yy+0.5)/(\W+0.5)} 
 			\pgfmathsetmacro\botx{\yy*100/(\W+1)} 
 			\pgfmathsetmacro\topx{(\yy+1)*100/(\W+1)} 
 			\pgfmathsetmacro\opy{(\xx+0.5)/(\W+0.5)} 
			\ifthenelse{\equal{\xx}{0.0}}%
			{	
					\fill[color=mypurple, fill opacity=\opx] (\x*\A -\A/2,\y*\A-\A/2) rectangle +(\A,\A);
% \shade[bottom color=mypurple!\botx, top color=mypurple!\topx] (\x*\A -\A/2,\y*\A-\A/2) rectangle +(\A,\A);
			}{};
			\ifthenelse{\equal{\yy}{0.0}}%
			{	\fill[color=myorange, fill opacity=\opy] (\x*\A-\A/2,\y*\A -\A/2) rectangle +(\A,\A);
			}{};
			\pgftext[at={\pgfpoint{\x*\A}{\y*\A}}]{\scriptsize\pgfmathprintnumber{\xx},\pgfmathprintnumber{\yy}};			
		}
	};
% 	\shade[upper left=mypurple,upper right=black,lower left=white,lower right=myorange] (1*\A-\A/2,0*\A-\A/2) rectangle +(\A,\A);
% 	\shade[upper left=mypurple,upper right=black,lower left=white,lower right=myorange] (0*\A-\A/2,1*\A-\A/2) rectangle +(\A,\A);
% 	\shade[upper left=mypurple,upper right=black,lower left=white,lower right=myorange] (2*\A-\A/2,2*\A-\A/2) rectangle +(\A,\A);
	\draw (-0.5,-0.5) rectangle (\W-0.5,\W-0.5);
	\useasboundingbox (-0.5,-0.5) rectangle (\W-0.5, \W-0.5);
}
\colorlet{mycolor}{blue!20}
\begin{document}

\title*{Cornerstones of Sampling of Operator Theory}
% Use \titlerunning{Short Title} for an abbreviated version of
% your contribution title if the original one is too long
\author{David Walnut, G\"otz E. Pfander, Thomas Kailath}
%\footnote{The inversion of the usual convention for listing authors is at the request of the third-named author
%so that those at the end of the alphabet can get some recognition.}
% Use \authorrunning{Short Title} for an abbreviated version of
% your contribution title if the original one is too long
\institute{David Walnut \at George Mason University, Fairfax, Virginia, USA \email{dwalnut@gmu.edu}
\and G\"otz E. Pfander \at Jacobs University, Bremen, Germany, \email{g.pfander@jacobs-university.de}
\and Thomas Kailath \at Stanford University, California, USA \email{kailath@stanford.edu}
\and The inversion of the traditional alphabetical ordering of authors is at the suggestion of the third author, who desires that those at the end of the alphabet get some recognition. }

%\institute{G\"otz E. Pfander \at Jacobs University, Bremen, Germany, \email{g.pfander@jacobs-university.de}
%\and David Walnut \at George Mason University, Fairfax, Virginia, USA \email{dwalnut@gmu.edu}
%\and Thomas Kailath \at Stanford University, California, USA \email{kailath@stanford.edu}}
%
% Use the package "url.sty" to avoid
% problems with special characters
% used in your e-mail or web address
%
\maketitle

%\abstract*{Online abstract, copy from below once done}

\abstract{This paper reviews some results on the identifiability of classes of operators whose Kohn-Nirenberg
symbols are band-limited (called {\em band-limited operators}), which we refer to as {\em sampling of operators}.
We trace the motivation and history of the subject back to the original work of the third-named author in the late 1950s
and early 1960s, and to the innovations in spread-spectrum communications that preceded that work.
We give a brief overview of the NOMAC (Noise Modulation and Correlation) and Rake receivers, which were early
implementations of spread-spectrum multi-path wireless communication systems.
We examine in
detail the original proof of the third-named author characterizing identifiability of channels in terms of the maximum time and
Doppler spread of the channel, and do the same for the subsequent generalization of that work by Bello.
%%%%%%%%%%
The mathematical limitations inherent in the proofs of Bello and the third author are removed by using mathematical tools
unavailable at the time.
%%%%%%%%%%%%%%%%%%%%
We survey more recent advances in sampling of operators and discuss the implications of
the use of periodically-weighted delta-trains as identifiers for operator classes that
satisfy Bello's criterion for identifiability, leading to new insights into the theory of finite-dimensional Gabor systems.
We present  novel results on operator sampling
in higher dimensions, and review implications and generalizations
of the results to stochastic operators, MIMO systems, and operators with unknown spreading domains.}

\section{Introduction}

The problem of identification of a time-variant communication channel arose in the 1950s as the problem of secure long-range wireless communications became increasingly important due to the geopolitical situation at the time.  Some of the theoretical and practical advances made then are described in this paper, and more recent advances extending the theory to more general operators, and onto a more rigorous mathematical footing, known as {\em  sampling of operators} are developed here as well.

The launching point for the theory of operator sampling is the early work of the third-named author in his Master's thesis at MIT, entitled ``Sampling models for linear time-variant filters''
\cite{Kai59}, see also \cite{Kai62,Kai63}, and \cite{Kai61} in which he reviews the identification problem for time-variant channels.  The third named author as well as  Bello in subsequent work \cite{Bel69} were attempting to
understand and describe the theoretical limits of identifiability of time-variant communication channels.
Section~\ref{section:historical} of this paper describes in some detail their work
and explores some of the important mathematical challenges they faced.  In Section~\ref{section:operatorsampling}, we describe the more recently developed framework of operator sampling.  Results addressing the problem considered by Bello are based on insights on finite dimensional Gabor systems which are presented in Section~\ref{section:finiteGabor}. Malikiosis's recent result \cite{M13} allows for the generalization of those results to a higher-dimensional setting, these are stated and proven in Section~\ref{section:higherdimensional}. We conclude the paper in Section~\ref{section:outlook} with a short summary of the sampling of operators literature, that is, of results presented in detail elsewhere.

\section{Historical Remarks.}\label{section:historical}

\subsection{The Cold War Origins of the Rake System.}

In 1958, Price and Green published {\em A Communication Technique for Multi-path Channels} in Proc. IRE \cite{PG58},
in which they describe a communication system called Rake, designed to solve the {\em multi-path problem}. When a wireless
transmitter does not have line-of-sight with the receiver, the  transmitted
signal is reflected possibly multiple times before reaching the receiver. Reflection by  stationary objects such as the ground or  buildings introduces random time delays to the signal, and  reflection
or refraction by  moving objects such as clouds, the troposphere, ionosphere, or a moving vehicle produce random frequency or
Doppler shifts in the signal as well.
Due to scattering and absorption, the reflected signals are randomly amplitude-attenuated too.  The problem is to recover the transmitted signal
as accurately as possible from the
superposition of time-frequency-shifted and randomly amplitude-attenuated versions of it. Since the location and velocities of the reflecting objects
change with time, the effects of the unknown, time-variant channel must be estimated and compensated for.

Price and Green's paper \cite{PG58} was the disclosure in the literature of a long-distance
system of wide-band or spread-spectrum communications that had been developed in response to strategic needs
related to the Cold War.  This fascinating story has been described in several articles by those directly
involved (\cite{Sch82, Sch83, P83, G08}). We present a summary of those remarks and of the Rake system below.
The goal is to motivate the original work of the third-named author on which
the theory of operator sampling is based.

In the years following World War II, the Soviet Union was exercising its power
in Eastern Europe with a major point of contention being Berlin, which the Soviets blockaded in the late 1940s.  This made secure communication
with Berlin a top priority.  As Paul Green describes it,
\begin{quote}
[T]he Battle of Berlin was raging, the Russians having isolated the city physically on land, so that the Berlin Airlift was resorted to,
and nobody knew when all the communication links would begin to be subjected to heavy Soviet jamming. \cite{G08}
\end{quote}
By 1950, with a shooting war in Korea about to begin, the Army Signal Corps approached researchers at MIT to develop secure,
and reliable wireless communication with the opposite
ends of the world.  According to Green,
\begin{quote}
It is difficult today to recall the fearful excitement of those times.  The Russians were thought to be 12 feet high in anything having to do with
applying mathematics to communication problems (``all Russians were Komogorovs or Kotelnikovs'')....[T]here was a huge backlog of unexploited
theory lying around, and people were beginning to build digital equipment with the unheard of complexity of a hundred or so vacuum tube-based bits (!).
And the money flowed. \cite{G08}
\end{quote}
The effort was called Project Lincoln (precursor to Lincoln Laboratory).  The
researchers were confronted by two main problems:  1) making a communications system robust to noise and deliberate jamming, and
2) enabling good signal recovery from multiple paths.

\subsection{Spread Spectrum communications and NOMAC}

The technique chosen to address the first problem is an application of the notion, already well-understood and used by that time,
that combatting distortions from noise and jamming can
be achieved by spreading the signal over a wide frequency band.  The idea of spreading the spectrum had been around for a long time \cite{P83, T80,
PSM82} and can be found even in a now famous Hedy Lamarr-George Antheil patent of 1942 \cite{MA42, P83},
which introduced the concept later called ``frequency hopping''. The system called NOMAC (Noise Modulation and Correlation) was developed in the early 1950s and used noise like (pseudo-noise or PN) signals to achieve spectrum spreading. Detailed discussion of its history can be found in \cite{P83,G08,WW92}.

The huge backlog of ``unexploited theory'' mentioned above included the recent work of Claude Shannon on communication theory \cite{Sha49}, of Norbert Wiener on correlation functions and least mean squares prediction and filtering \cite{W49}, and recent applications of statistical decision theory to detection problems in radar and communications.

The communication problem addressed by NOMAC was to encode data represented by a string of ones and zeros into analog signals that could be electromagnetically transmitted over a noisy communication channel in a way that foiled ``jamming'' by enemies.  The analog signals $x_1(\cdot)$ and $x_0(\cdot)$, commonly called Mark and Space, associated with the data digits 1 and 0, were chosen to be waveforms of approximate bandwidth B, and with small cross correlation.  The target application was $60$\,wpm teletype, with $22$ msec per digit (called a baud), which corresponds to a transmission rate of $1/0.022 \,\text{sec} =45$\,Hz. The transmitted signals were chosen to have a bandwidth of $10$\,KHz, which was therefore expected to yield a ``jamming suppression ration'' of $10,000/45=220$, or $23$\,db \cite{G08,WW92}. The jamming ratio is often called the ``correlation gain'', because the receiver structure involves cross correlation of the received signal with each of the possible transmitted signals.  If the correlation with the signal $x_1(\cdot)$ is larger than the one with the signal  $x_0(\cdot)$, then it is decided that the transmitted signal corresponded to the digit 1. This scheme can be shown to be optimum in the sense of minimum probability of error provided that the transmitted signals are not distorted by the communications channel and that the receiver noise is white Gaussian noise (see, for example, \cite{Hel60}). The protection against jamming is because unless the jammer has good knowledge of the noise like transmitted signals, any jamming signals would just appear as additional noise at the output of the crosscorrelations.

More details on the nontrivial ideas required for building a practical system can be found in the references.
We may mention that the key ideas arose from three classified MIT dissertations by Basore \cite{B52}, Pankowski \cite{P52}, and Green \cite{G53},
in fact, documents on NOMAC remained classified until 1961 \cite{G08}.

A transcontinental experiment was run on a NOMAC system, but was found to have very poor performance because of the presence of multiple paths;  the signals arriving at the receiver by these different paths sometimes interfere destructively. This is the phenomenon of ``fading'', which causes self jamming of the system.  Some improvement was achieved by adding additional circuitry and the receiver to separately identify and track the two strongest signals and combine them after phase correction; this use of time and space diversity enabled a correlation gain of $17$\,db, $6$\,db short of the expected performance. It was determined that this loss was because of the neglected weaker paths, of which there could be as many as 20 or 30. So attention turned to a system that would allow the use of all the different paths.

\subsection{The Rake system}

One conceptual basis for this new system was provided by the doctoral thesis of Robert Price \cite{P53}, the main results of which were published in 1956 \cite{P56}. In a channel with severe multi-path the signal at the receiver is composed of  large number of signals of different amplitudes and phases and so Price made the assumption that the received ``signal'' was a Gaussian random process.  He studied the problem of choosing between the hypothesis
$$
	H_i: w(\cdot)=Ax_i(\cdot)+n(\cdot), \quad i=0,1,
$$
where the random time variant linear communication channel $A$ is such that the  $\{Ax_i(\cdot)\}$ are Gaussian processes.  In this case, the earlier cross correlation detection scheme makes no sense, because the ``signal'' arriving at the receiver is not deterministic but is a sample function of a random process, which is not available to the receiver because it is corrupted by the additive noise.  Price worked out the optimum detection scheme and then ingeniously interpreted the mathematical formulas to conclude that the new receiver forms least mean-square estimates of the $\{Ax_i(\cdot)\}$ and then crosscorrelates the $w(\cdot)$ against these estimates.  In practice of course, one does not have enough statistical information to form these estimates and therefore more heuristic estimates are used and this was done in the actual system that was built.  The main heuristic, from Wiener's least mean-square smoothing filter solution and earlier insights, is that one should give greater weight to paths with higher signal-to-noise ratio.

So Price and Green devised a new receiver structure comprised of a delay line of length $3$\,ms intervals (the maximum expected time spread in their channel), with 30 taps spaced every $1 / 10$\,Khz, or $100\, \mu$s. This would enable the capture of all the multi-path signals in the channel.  Then the tap gains were made proportional to the strength of the signal received at that tap.  Since a Mark/Space decision was only needed every $22$\,ms (for the transmission rate of $60$\,wpm), and since the fading rate of the channel was slow enough that the channel characteristics remain constant over even longer than $22$\,ms, tap gains could be averaged over several $3$\,ms intervals.
 The new system was called  ``Rake'', because the delay line structure resembled that in a typical garden rake!%\footnote{``RAKE'' is not an acronym so
%the system will be referred to hereafter as ``Rake.''}

 Trials showed that this scheme worked well enough to recover the $6$\,db loss experienced by the NOMAC system.  The system was put into production and was successfully used for jam-proof communications between Washington DC and Berlin during the ``Berlin crisis'' in the early 60s.

 HF communications is no longer very significant, but the Rake receiver has found application in a variety of problems such as sonar, the detection of underground nuclear explosions, and planetary radar astronomy (pioneered by Price and Green, \cite{G68,P68}) and currently it is much used in mobile wireless communications.  It is interesting to note that the eight racks of equipment needed to build the Rake system in the 1960s is now captured in a small integrated circuit chip in a smart phone!

 However the fact that the Rake system did not perform satisfactorily when the fading rates of the communication channel were not very slow led MIT professor John Wozencraft, (who had been part of the Rake project team at Lincoln Lab) to suggest in 1957 (even before the open 1958 publication of the Rake system) to his new graduate student Thomas Kailath a fundamental study of linear time-variant communication channels and their identifiability for his Masters thesis.  While linear time-variant linear systems had begun to be studied at least as early as 1950 (notably by Zadeh \cite{Z50}), in communication systems there are certain additional constraints, notably limits on the bandwidths of the input signal and the duration of the channel memory. So a more detailed study was deemed to be worthwhile.

\subsection{Kailath's Time-Variant Channel Identification Condition}\label{section:KailathSufficient}

In the paper \cite{Kai59}, the author considers the problem of measuring a channel whose characteristics vary rapidly with time.
He considers the dependence of any theoretical channel estimation scheme on how rapidly
the channel characteristics change and concludes that there are theoretical limits on the ability
to identify a rapidly changing channel.  He models the channel $A$ as a linear time-variant filter and defines

\begin{quote}
$A(\lambda,t)=$ response of $A$, measured at time $t$ to a unit impulse input at time $t-\lambda$. %(1)
\end{quote}

$A(\lambda,t)$ is one form of the time-variant impulse response of the linear channel that emphasizes the role of the ``age'' variable $\lambda$. The channel response to an input signal $x(\cdot)$ is
$$Ax(t) = \int A(\lambda,t)\,x(t-\lambda)\,d\lambda.$$
An impulse response $A(\lambda,t)=A(\lambda)$ represents a time-invariant filter.  Further, the author states

\begin{quote}
Therefore the rate of variation of $A(\lambda,t)$ with $t$, for fixed $\lambda$, is a measure of the rate of variation of the filter.
It is convenient to measure this variation in the frequency domain by defining a function $\mathcal A$
$$ \mathcal A(\lambda, f)=\int_{-\infty}^\infty  A(\lambda,t) e^{-2\pi i f t } dt \quad
%\footnote{In the original, there is
%a typo inside the integral, $\mathcal A$
%instead of $A$, also, we use $i$ for the imaginary unit rather than $j$, and $\omega$ for the frequency %variable rather than $f$.}
$$
\end{quote}

Then he defines
$$ B=\max_{\lambda} [ b-a, \text{ where } \mathcal A (\lambda,f)=0 \text{ for } f\notin [a,b] \,].$$
While symmetric support is assumed in the paper, this definition makes clear that non-rectangular regions of support are already in view.
Additionally, he defines the memory as the maximum time-delay spread in response to an impulse of the channel as
$$L=\max_{t} [\min_{\lambda'}\text{ such that }  A (\lambda,t)=0, \ \lambda\geq \lambda' ].$$
In short, the assumption in the continuation of the paper is that
$$\supp\mathcal A (\lambda,f) \subseteq [0,L]\times[-W,W]$$
where $W=B/2$.  The function $\mathcal A (\lambda,f)$ is often called the \emph{spreading function} of the channel.
He then asks under what assumptions on $L$ and $B=2W$ can such a channel be measured?  In the context of the Rake system, 
this translates to the question of whether there are limits on the rate of variation of the filter that can assure that the measurement filter
can be presumed to be effective.

The author's assertion is that as long as $BL\le 1$, then a ``simple measurement scheme'' is sufficient.

\begin{quote}
We have assumed that the bandwidth of any ``tap function'', $A_\lambda(\cdot)\,[=A(\lambda,\cdot)]$ , is limited to a frequency region of width $B$,
say a low-pass region $(-W,W)$ for which $B=2W$. Such band-limited taps are determined according to the Sampling theorem, by their values
at the instants $i/2W$, $i=0,\pm 1, \pm 2, \ldots $.

If the memory, $L$, of the filter, $A(\lambda,t)$ is less than $1/2W$ these values are easily determined: we put in unit impulses to
$A(\lambda,t)$ at instants $0,\ 1/2W,\  2/2W, \ldots, T$, and read off from the responses the desired values of the impulse response $A(\lambda,t)$. [...] If $L\leq 1/2W$, the responses to the different input impulses do not interfere with one another and the above values can be unambiguously determined.
\end{quote}

In other words, sufficiently dense samples of the tap functions  can be obtained by sending an impulse train $\sum_n \delta_{n/2W}$ through the channel. Indeed,
$$A\big(\sum_n \delta_{n/2W}\big)(t) = \sum_n \int A(\lambda,t)\,\delta_{n/2W}(t-\lambda)\,d\lambda = \sum_n A(t-n/2W,t).$$
Evaluating the operator response at  $t=\lambda_0+n_0/2W$, $n_0\in\Z$, we obtain
\begin{eqnarray*}
A\big(\sum_n \delta_{n/2W}\big)(\lambda_0+n_0/2W)
&  =  &  \sum_n A(\lambda_0+(n_0-n)/2W,\lambda_0+n_0/2W) \\
&  =  &  A(\lambda_0,\lambda_0+n_0/2W)
\end{eqnarray*}
since $L\le 1/2W$ implies that $A(\lambda_0+(n_0-n)/2W,\lambda_0+n_0/2W)=0$ if $n\neq n_0$. In short,  for each $\lambda$, the samples $A(\lambda,\lambda+n/2W)$
for $n\in\Z$ can be recovered.

The described Kailath sounding procedure is depicted in Figure~\ref{fig:KailathSounding}.  In this visualization, we plot the kernel $\kappa(s,t)=A(t-s,t)$ of the operator $A$, that is,
$$Ax(t) = \int A(\lambda,t)\,x(t-\lambda)\,d\lambda = \int A(t-s,t)\,x(s)\,ds = \int \kappa (t,s)\,x(s)\,ds.$$

\begin{figure}\label{fig:KailathSounding}
% \begin{center}
\hspace{-.2cm}\begin{tikzpicture}
\node[above right]  at (0,1){
\includegraphics[width=11cm]{KailathSampling1}
};
\node[above right]  at (10.5,1.5) {$t$-axis};
\node[above right] at (-.6,4){$s$-axis};
\node[above right]  at (2.5,4.1){$\kappa(t,s)$};
%\node[above right]  at (1.5,6.5){$H \in OPW\big([0,1]{\times}[-1/2,1/2]\big)$};
%\node[above right]  at (-.5,1.5){$\sum_n \delta_n(y)$};
\node[above right]  at (1.2,.8){$0$};
\node[above right]  at (-0.4,2.1){$1/2W{=}L$};
\node[above right]  at (0.05,3.3){$2L$};
\node[above right]  at (-.3,4.5){$3L$};
\node[above right]  at (3.8,.9){$L$};
\node[above right]  at (6.1,1.1){$2L$};
\node[above right]  at (8.4,1.3){$3L$};
% {\node[above right]  at (3,.5) {$H(\sum_n \delta_n\big)(x)=\int \kappa(x,y) \sum_n \delta_n(y)dy=\sum_n \kappa(x,n) $};}
\end{tikzpicture}

%\includegraphics[width=12cm]{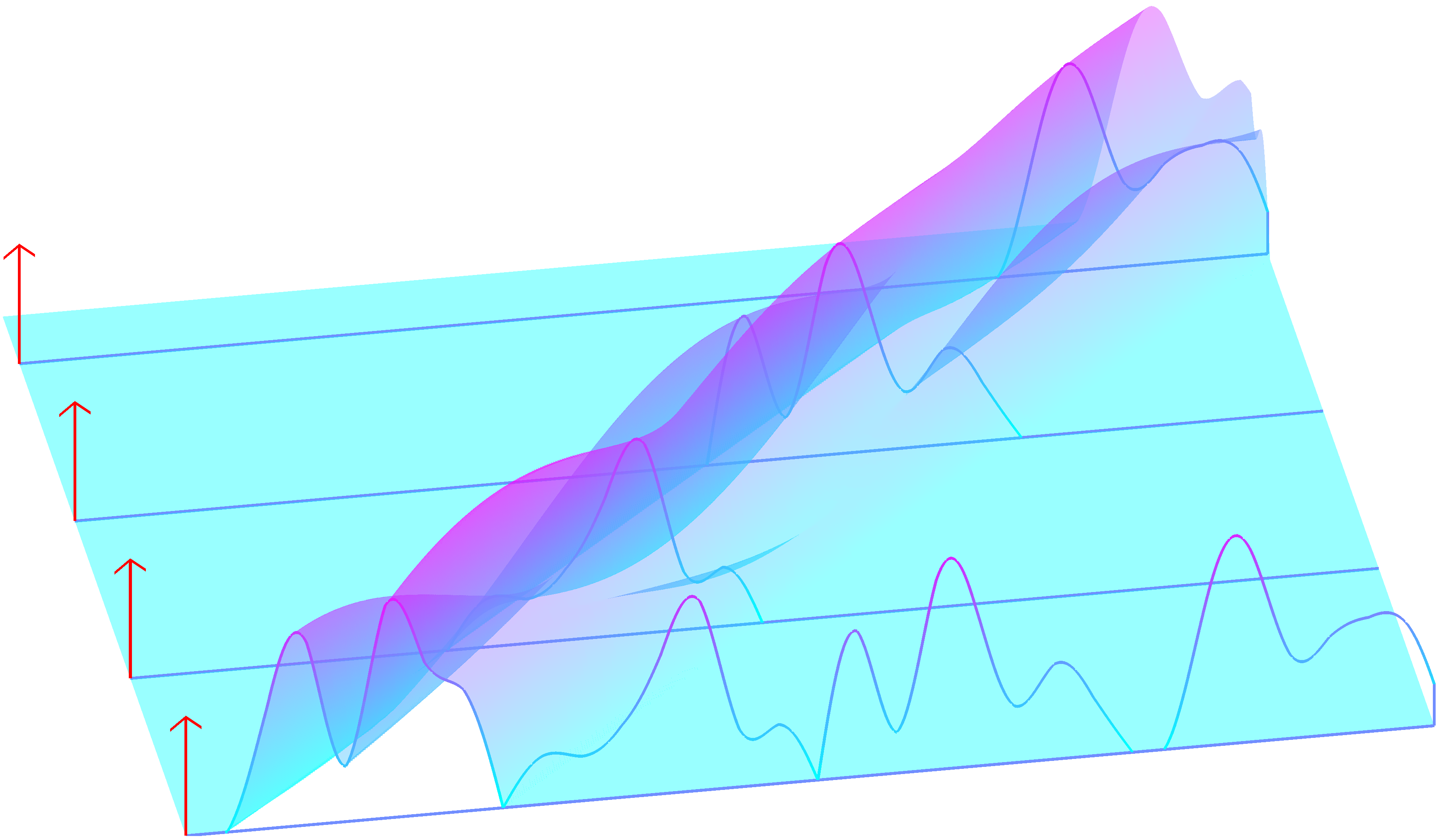}
%\end{center}
\caption{Kailath sounding of $A$ with $\supp\mathcal A (\lambda,f) \subseteq [0,L]\times[-W,W]$ and $L=1/2W$. The kernel $\kappa(t,s)$ is displayed on the $(t,s)$ plane, the impulse train $\sum_n \delta_{n/2W}(s)$ on the $s$-axis, and the output signal $Ax(t)=A\big(\sum_n \delta_{n/2W}\big)(t)
=\sum_n A(t-n/2W,t)=\sum_n\kappa (t,n/2W)$. The sample values of the tab functions $A_\lambda(t)=A(\lambda,t)=\kappa(t,t-\lambda)$ can be read off $Ax(t)$.}
\label{fig:KailathSounding}
\end{figure}

%
%It is significant to note that the author only assumes that finitely many samples are ever obtained, specifically only for those $n\in\Z$ such
%that for some $T>0$, $0\le n/2W\le T$.

\subsection{Necessity of Kailath's Condition for Channel Identification.}\label{section:kailathnecessity}

For the ``simple measurement scheme'' to work, $BL\le 1$ is sufficient but could be restrictive.

\begin{quote}
We need, therefore, to devise more sophisticated measurement schemes.  However, we have not pursued this question very far because for
a certain class of channels we can show that the condition
$$
	L\leq 1/2W, \text{ i.e. }, BL\leq 1
$$
is necessary as well as sufficient for unambiguous measurement of $A(\lambda,t)$. The class of channels is obtained as follows:
We first assume that there is a bandwidth constraint on the possible input signals to $A(\lambda,t)$, in that the signals are
restricted to $(-W_i,W_i)$ in frequency. We can now determine a filter $A_{W_i}(\lambda,t)$ that is equivalent to $A(\lambda,t)$
over the bandwidth $(-W_i,W_i)$, and find necessary and sufficient conditions for unambiguous measurement of $A_{W_i}(\lambda,t)$.
If we now let $W_i\to \infty$, this condition reduces to condition (1), viz: $L\leq 1/2W$.
Therefore, condition (1) %\footnote{(1) seems to refer to $L\leq 1/2W$, equation (1) is simply a description of $A$, see above}
is valid for all filters $A(\lambda,t)$ that may be obtained as the limit of band-limited channels. This class includes almost
all filters of physical interest.  The argument is worked out in detail in Ref.~6 \footnote{Ref. 6 is \cite{Kai59}.} but we give a brief outline here.
\end{quote}

The class of operators in view here can be described as limits (in some unspecified sense) of operators whose
impulse response $A(\lambda,t)$ is bandlimited to $[-W_i,W_i]$ in $\lambda$ for each $t$ and periodic with period $T>0$ in $t$ for each
$\lambda$.
Here, $T$ is assumed to have some value larger than the maximum time over which the channel will be operated.  We could take it as the duration of the input signal to the channel.

The restriction to input signals bandlimited $(-W_i,W_i)$ indicates that it suffices to know the values of $A(\lambda,t)$ or ${\cal A}(\lambda, f)$ for a finite set of values of $\lambda$: $\lambda=0$, $1/2W_i$, $2/2W_i$, $\ldots$, $L$, assuming for simplicity that $L$ is a multiple of $1/2W_i$. Therefore, we can write
\begin{align*}
  A(\lambda,t) = \sum_n A(n/2W_i,t)\,\sinc_{W_i}(\lambda-n/2W_i),
\end{align*}
where  $\sinc_{W_i}(t) = \sin(2\pi W_i t)/(2\pi W_i t)$ so that as $W_i\to\infty$, $\sinc_{W_i}(t)$
becomes more concentrated at the origin.

Also, $T$-periodicity in $t$ allows us to write
$$A(\lambda,t) = \sum_k A(\lambda,k/T)\,e^{2\pi ikt/T},$$
so that combining gives
$$A(\lambda,t) = \sum_n \sum_k A(n/2W_i,k/T)\,\sinc_{W_i}(\lambda-n/2W_i)\,e^{2\pi ikt/T}.$$

Based on the restriction to bandlimited input signals which are $T$ periodic, we have obtained a representation of $A$ which is neither compactly supported in $\lambda$ nor bandlimited in $t$.
However, the original restriction that
$$\supp\mathcal A (\lambda,f) \subseteq [0,L]\times[-W,W]$$
 motivates the assumption that we are working with finite sums, viz.
$$A(\lambda,t) = \sum_{n/2W_i\in[0,L]} \sum_{k/t\in[-W,W]} A(n/2W_i,k/T)\,\sinc_{W_i}(\lambda-n/2W_i)\,e^{2\pi ikt/T}.$$
This is how the author obtains the estimate that there are at most $(2W_iL+1)(2WT+1)$ degrees of freedom in any impulse response $A$ in the given class.
%
%For any input signal $x(t)$, the bandlimiting assumption on $A$ implies that $x$ can be thought of as bandlimited to $[-W_i,W_i]$ and hence
%that the output is bandlimited to $[-W-W_i,W+W_i]$.  Specifically,

For any input signal $x(t)$  bandlimited to $[-W_i,W_i]$,  the output will be bandlimited to $[-W-W_i,W+W_i]$.  Specifically,
\begin{eqnarray*}
Ax(t) &  =  &  \int A(\lambda,t)\,x(t-\lambda)\,d\lambda \\
      &  =  &  \sum_{n/2W_i\in[0,L]} \sum_{k/t\in[-W,W]} A(n/2W_i,k/T)\,e^{2\pi ikt/T} \\
      &     &  \qquad\qquad\qquad\int x(t-\lambda)\,\sinc_{W_i}(\lambda-n/2W_i)\,d\lambda \\
      &  =  &  \sum_{n/2W_i\in[0,L]} \sum_{k/t\in[-W,W]} A(n/2W_i,k/T)\,e^{2\pi ikt/T} \\
      &     &  \qquad\qquad\qquad(x\ast \sinc_{W_i})(t-n/2W_i).
\end{eqnarray*}
Since $e^{2\pi ikt/T}\,(x\ast \sinc_{W_i})(t-n/2W_i)$ is bandlimited to $[-W_i,W_i]+(k/T)$ for $k/T\in[-W,W]$, it follows that $Ax(t)$
is bandlimited to $[-W-W_i,W+W_i]$.

If we restrict our attention to signals $x(t)$ time-limited to $[0,T]$, the output signal $Ax(t)$  will have duration $T+L$, and $Ax(\cdot)$ will be completely determined by
its samples at $\frac{n}{2(W+W_i)}\in[0,T+L]$, from which we can identify $2(T+L)(W+W_i)+1$ degrees of freedom.

In order for identification to be possible, the number of degrees of freedom of the output signal must be at least as large as
the number of degrees of freedom of the operator, i.e.
\begin{align*}
 2W_iT+2W_i L+2W T+2W L+1&=\\2(T+L)(W_i+W)+1 &\geq  (2 W T + 1 ) (2 W_i L + 1) \\ &=2WT+2W_i L+1+4W_i W TL
\end{align*}
which reduces ultimately to
\begin{align*}
  \frac{1}{1-1/(2W_iT)}  \geq 2W L = B L.
\end{align*}
That is, $BL$ needs to be strictly smaller than $1$ in the approximation  while $BL=1$ may work in the limiting case $W_i \to \infty$
(and/or $T \to \infty$).

This result got a lot of attention because it corresponded with experimental evidence that Rake did not function well when the condition $BL<1$ was violated.  It led to the designation of ``underspread'' and ``overspread'' channels for which $BL$ was less than or greater than 1.

\subsection{Some Remarks on Kailath's Results}

This simple argument is surprising, particularly in light of the fact that the author obtained a deep result in time-frequency analysis
with none of the tools of modern time-frequency analysis at his disposal.  He very deftly uses the extremely useful engineering ``fiction''
that the dimension of the space of signals essentially bandlimited to $[-W,W]$ and time-limited to $[0,T]$ is approximately $2WT$.
The then recent papers of Landau, Slepian and Pollak \cite{SP61, LSP61}, which are mentioned explicitly in \cite{Kai59}, provided a rigorous
mathematical framework for understanding the phenomenon of essentially simultaneous band- and time-limiting.  While the existence of these
results lent considerable mathematical heft to the argument, they were not incorporated into a fully airtight mathematical
proof of his theorem.

\begin{quote}
In the proof we have used a degrees-of-freedom argument based on the sampling theorem which assumes strictly bandlimited functions.
This is an unrealistic assumption for physical processes.   It is more reasonable to call a process band (or time) limited if some
large fraction of its energy, say 95\%, is contained within a finite frequency (or time) region. Recent work by Landau and Slepian
has shown the concept of approximately $2TW$ degrees of freedom holds even in such cases.  This leads us to believe that our proof
of the necessity of the $BL\leq 1$ condition is not merely a consequence of the special properties of strictly band-limited functions.
It would be valuable to find an alternative method of proof.
\end{quote}

While Kailath's Theorem is stated for channel operators whose spreading functions are supported in a rectangle, it is clear that
the later work of Bello \cite{Bel69} was anticipated and more general regions were in view.  This is stated explicitly.

\begin{quote}
We have not discussed how the bandwidth, B is to be defined.  There are several possibilities: we might take the nonzero
$f$-region of $\mathcal A(\lambda,f)$; or use a``counting" argument. We could proceed similarly for the definition of $L$.
As a result of these several possibilities, the value 1, of the threshold in the condition $BL\leq 1$ should be considered
only as an order of magnitude value.

...constant and predictable variations in $B$ and $L$, due for example to known Doppler shifts or time displacements,
would yield large values for the absolute values of the time and frequency spreadings.
However such predictable variations should be subtracted out before the
$BL$ product is computed; {\em what appears to be important is the area covered in the time- and frequency-spreading
plane rather than the absolute values of $B$ and $L$.} (emphasis added)
\end{quote}

%The quote above refers to the fact that the condition $BL>1$ explained the poor performance of the ``rake radiometer''
%used in some early work in radar astronomy \cite{G68, P68}.  These studies, in the late 1950s and early 1960s, involved the study
%of nearby celestial objects by radar reflection.  They revealed unusual rotation rates and
%dynamics of the planets Venus and Mercury, and determined surface characteristics of the moon in advance of
%planned lunar landings.  As such it represented a remarkable peacetime application of this technology.

The reference to ``counting'' as a definition of bandwidth clearly indicates that essentially arbitrary regions of support for the operator
spreading function were in view here, and that a necessity argument relying on degrees of freedom and not the shape of the spreading
region was anticipated.  The third-named author did not pursue the measurement problem studied in his MS thesis because
he went on in his PhD dissertation to study the
optimum (in the sense of minimum probability of error) detector scheme of which Rake is an intelligent engineering approximation.  See
\cite{Kai60,Kai61,Kai63}.

The mathematical limitations of the necessity proof in \cite{Kai59} 
can be removed by addressing the identification problem directly as
a problem on infinite-dimensional space rather than relying on finite-dimensional approximations to the channel.  This approach also
avoids the problem of dealing with simultaneously time and frequency-limited functions.  In this way, the proof can be made completely
mathematically rigorous.  This approach is described in Section~\ref{section:pfandernecessity}.

\subsection{Bello's  time-variant Channel Identification Condition}

Kailath's Theorem was generalized by Bello in \cite{Bel69} along the lines anticipated in \cite{Kai59}.  Bello's argument follows that of \cite{Kai59}
in its broad
outlines but with some significant differences.  Bello clearly anticipates some of the technical difficulties that have been solved more
recently by the authors and others and which have led to the general theory of operator sampling. 

Continuing with the notation of this section, Bello considers channels with spreading function $\mathcal A (\lambda,f)$ supported in a
rectangle $[0,L]\times[-W,W]$.  If $L$ and $W$ are all that is known about the channel, then Kailath's criterion for measurability requires that
$2WL\le 1$.  Bello considers channels for which $2WL$ may be greater than $1$ but for which
$$S_A = |\supp\mathcal A (\lambda,f)| \le 1$$
and argues that this is the most appropriate criterion to assess measurability of the channel modeled by $A$.

In order to describe Bello's proof we will fix parameters $T\gg L$ and $W_i\gg W$ and following the assumptions earlier in this section, assume that
inputs to the channel are time-limited to $[0,T]$ and (approximately) bandlimited to $[-W_i,W_i]$.  Under this assumption, Bello considers
the spreading function of the channel to be approximated by a superposition of point scatterers, viz.
$$\mathcal A (\lambda,f) = \sum_n\sum_k A_{n,k}\,\delta(f-(k/T))\,\delta(\lambda-(n/2W_i)).$$
Hence the response of the channel to an input $x(\cdot)$ is given by
\begin{eqnarray}\label{eqn:belloresponse}
Ax(t) &  =  & \int\!\!\!\int x(t-\lambda)\,e^{2\pi i f(t-\lambda)}\mathcal A (\lambda,f)\,d\lambda\,df \\ \notag
&  =  &  \sum_n\sum_k A_{n,k}\,x(t-(n/2W_i))\,e^{2\pi i(k/T)(t-(n/2W_i))}.
\end{eqnarray}
Note that this is a continuous-time Gabor expansion with window function $x(\cdot)$ (see, e.g., \cite{Gro01}).  By standard density results
in Gabor theory, the collection of functions $\set{x(t-(n/2W_i))\,e^{2\pi i(k/T)(t-(n/2W_i))}}$ is overcomplete
as soon as $2TW_i>1$.  Consequently, without further discretization, the coefficients $A_{n,k}$ are
in principle unrecoverable.
Taking into consideration support constraints on $\mathcal A$, we assume that the sums are finite, viz.
$$\bigg(\frac{n}{2W_i},\frac{k}{T}\bigg) \in \supp\mathcal A.$$
Hence determining the channel characteristics amounts to finding $A_{n,k}$ for those pairs $(n,k)$.  It should be noted that for a given
spreading function $\mathcal A (\lambda,f)$ for which $\supp\mathcal A$ is a Lebesgue measurable set, given $\epsilon>0$, there exist
$T$ and $W_i$ sufficiently large that the number of such $(n,k)$ is no more than $2 S_A W_i T(1+\epsilon)$.  On the other hand,
for a given $T$ and $W_i$, there exist spreading functions $\mathcal A (\lambda,f)$ with arbitrarily small non-convex $S_A$ for which the number of
nonzero coefficients $A_{n,k}$ can be large.  For example, given $T$ and $W_i$, $S_A$ could consist of rectangles centered on the points
$(n/(2W_i), k/T)$ with arbitrarily small total area.

By sampling, (\ref{eqn:belloresponse}) reduces to a discrete, bi-infinite linear system, viz.
\begin{equation}\label{eqn:belloresponsediscrete}
Ax\bigg(\frac{p}{2W_i}\bigg) = \sum_n \sum_k A_{n,k}\,x\bigg(\frac{p-n}{2W_i}\bigg)\,e^{2\pi i\frac{k}{T}(\frac{p-n}{2W_i})}
\end{equation}
for $p\in\Z$.  Note that (\ref{eqn:belloresponsediscrete}) is the expansion of a vector in a discrete Gabor system on $\ell^2(\Z)$, 
a fact not mentioned by Bello, and of which he was apparently unaware. Specifically,
defining the translation operator $\mathcal T$ and the modulation operator $\mathcal M$ on $\ell^2$ by
\begin{equation}\label{eqn:translationandmodulation}
\mathcal Tx(n) = x(n-1),\qquad{\mbox{\rm and}}\qquad \mathcal Mx(n) = e^{\pi i n/(TW_i)}x(n),
\end{equation}
(\ref{eqn:belloresponsediscrete}) can be rewritten as
\begin{equation}\label{eqn:belloresponsediscretegabor}
Ax\bigg(\frac{p}{2W_i}\bigg) = \sum_n \sum_k (\mathcal T^n\,\mathcal M^k x)(p)\,A_{n,k}.
\end{equation}
Since there are only finitely many nonzero unknowns in this system, Bello's analysis proceeds by looking at finite sections of
(\ref{eqn:belloresponsediscretegabor}) and counting degrees of freedom.

\smallskip\noindent{\em Necessity.}  Following the lines of the necessity argument in \cite{Kai59}, we note that there are at least
$2(T+L)(W+W_i)$ degrees of freedom in the output vector $Ax(t)$, that is, at least that many independent samples of the form
$Ax(p/2W_i)$, and as observed above, no more than $2 S_A W_i T(1+\epsilon)$ nonzero unknowns $A_{n,k}$.  Therefore, in order for the $A_{n,k}$
to be determined in principle, it must be true that
$$2 W_i T(1+\epsilon) S_A \le 2(T+L)(W+W_i)$$
or
$$S_A \le \frac{(T+L)(W+W_i)}{W_i T(1+\epsilon)}.$$
Letting $T,\,W_i\to\infty$ and $\epsilon\to 0$, we arrive at $S_A\le 1$.

\smallskip\noindent{\em Sufficiency.}  Considering a section of the system (\ref{eqn:belloresponsediscretegabor}) based on the assumption
that $\supp\mathcal A \subseteq [0,L]\times[-W,W]$, the system has approximately $2W_i(T+L)$ equations in $(2W_iT)(2WL)$ unknowns.  Since $L$ and $2W$
are simply the dimensions of a rectangle that encloses the support of $\mathcal A$, $2WL$ may be quite large and independent of
$S_A$.  Hence the system will not in general be solvable.  However by assuming that $S_A<1$, only approximately
$S_A(2W_iT)$ of the $A_{n,k}$ do not vanish and the system reduces to one in which the number of equations is roughly equal to the
number of unknowns.  In this case it would be possible to solve (\ref{eqn:belloresponsediscretegabor}) as long as the collection of
appropriately truncated vectors $\{\mathcal T^n \mathcal M^k x\colon A_{n,k}\ne 0\}$ forms a linearly independent set for some vector $x$.

In his paper, Bello was dealing with independence properties of discrete Gabor systems apparently without realizing it, or at least without
stating it explicitly.
Indeed, he argues in several different ways that a vector $x$ that produces a linearly independent set should exist,
and intriguingly suggests that a vector consisting of
$\pm 1$ should exist with the property that the Grammian of the Gabor matrix corresponding to the section of (\ref{eqn:belloresponsediscretegabor})
being considered is diagonally dominant. % Such a thing would be achieved if the corresponding sequence Gabor system formed an equiangular
%tight frame.  %While Bello's arguments make intuitive sense, they fall short of a proof.

The setup chosen below to prove Bello's assertion leads to the consideration of a matrix whose columns stem from a
Gabor system on a finite-dimensional space, not on a sequence space.

\section{Operator Sampling}\label{section:operatorsampling}

The first key contribution of operator sampling is the use of frame theory and time-frequency analysis to remove assumptions of simultaneous
band- and time-limiting, and also to deal with the infinite number of degrees of freedom in a functional analytic setting 
(Section~\ref{section:Operator classes and operator identification}).
A second key insight is the development of a ``simple measurement scheme'' of the type used by the third-named author but that allows for the
difficulties identified by Bello to be resolved.  This insight is the use of periodically-weighted delta-trains as measurement functions for a channel.
Such measurement functions have three distinct advantages.

First, they allow for the channel model to be essentially arbitrary and clarify the reduction of the operator identification problem to a
finite-dimensional setting without imposing a finite dimensional model that approximates the channel.  Second, it combines the naturalness of the
simple measurement scheme described earlier with the flexibility of Bello's idea for measuring channels with arbitrary spreading support.
Third, it establishes a connection between identification of channels and finite-dimensional Gabor systems and allows us to determine windowing
vectors with appropriate independence properties.

In Section~\ref{section:Operator classes and operator identification}, we introduce some operator-theoretic descriptions of some of the operator
classes that we are able to identify, and discuss briefly different ways of representing such operators.  Such a discussion is beneficial in several ways.
First, it contains a precise definition of identifiability, which comes into play when considering the generalization of the necessity
condition for so-called overspread channels (Section~\ref{section:pfandernecessity}).  Second, we can extend the necessity condition to a very large
class of inputs.  In other words, we can assert that in a very general sense, no input can identify an overspread channel. Third, it allows us to include
both convolution operators and multiplication operators (for which the spreading functions are distributions) in the operator sampling theory.
The identification of multiplication operators via operator sampling reduces to the classical sampling formula, thereby showing that classical sampling
is a special case of operator sampling.  In Section~\ref{section:pfandernecessity} we present a natural formalization of the original necessity proof of \cite{Kai59}
(Section~\ref{section:kailathnecessity}) to the infinite-dimensional setting, which involves an interpretation of the notion of an
under-determined system to that setting.  Finally, in Section~\ref{section:mainidentification} we present the scheme given first in \cite{PW06b, PW13} for the
identification of operator classes using periodically-weighted delta trains and techniques from modern time-frequency analysis.

\subsection{Operator classes and operator identification}\label{section:Operator classes and operator identification}

We formally consider an arbitrary operator as a {\em pseudodifferential operator} represented by
\begin{align}\label{eqn:operator1}
 Hf(x) = \int \sigma_H(x,\xi)\widehat{f}(\xi)\,e^{2\pi ix\xi}\,d\xi,
\end{align}
where $\sigma_H(x,\xi)\in L^2(\R^2)$ is the {\em Kohn-Nirenberg} (KN) symbol of $H$.  The {\em spreading function} $\eta_H(t,\nu)$
of the operator $H$ is the {\em symplectic Fourier transform} of the KN symbol, viz.
\begin{align}\label{eqn:operator2}\eta_H(t,\nu) = \int\!\!\!\!\int \sigma_H(x,\xi)\,e^{-2\pi i(\nu x - \xi t)}\,dx\,d\xi\end{align}
and we have the representation
\begin{align}\label{eqn:operator3}Hf(x) =  \int\!\!\!\!\int \eta_H(t,\nu)\,\mathcal{T}_t\,\mathcal{M}_\nu f(x)\,d\nu\,dt\end{align}
where $\mathcal{T}_tf(x) = f(x-t)$ is the {\em time-shift operator} and $\mathcal{M}_\nu f(x) = e^{2\pi i \nu x}\,f(x)$
is the {\em frequency-shift operator}.

This is identical to the representation given in \cite{Kai59} where $\eta_H(t,\nu) = \mathcal A(\nu,t)$, see Section~\ref{section:KailathSufficient}.

To see more clearly where the spreading function arises in the context of communication theory, we can define the {\em impulse response}
of the channel modeled by $H$, denoted $h_H(x,t)$, by
$$Hf(x) = \int h_H(x,t)\,f(x-t)\,dt.$$
Note that if $h_H$ were independent of $x$, then $H$ would be a convolution operator and hence a model for a time-invariant channel.
In fact, with $\kappa_H(x,t)$ being the {\em kernel} of the operator $H$,
\begin{align}
  Hf(x) &= \int \kappa_H(x,t)\,f(t)\,dt \\
  &= \int h_H(x,t)\,f(x-t)\,dt \\
  &= \iint \eta_H(t,\nu)\,e^{2\pi i\nu (x-t)}\,f(x-t)\,d\nu\,dt\label{eqn:operatorrepresentations1} \\
  &= \int \sigma_H(x,\xi)\, \widehat f(\xi)\,e^{2\pi i x \xi} d\xi,\label{eqn:operatorrepresentations2}
\end{align}
where
\begin{align}
h_H(x,t) &= \kappa_H(x,x-t) \nonumber  \\
    &= \int \sigma_H (x,\xi)\, e^{2\pi i \xi t}\, d\xi, \nonumber \\
    &= \int \eta_H(t,\nu)\, e^{2\pi i \nu (x-t)}\, d\nu. \label{eqn:symbolrelations}
\end{align}
With this interpretation, the maximum support of $\eta_H(t,\nu)$ in the first variable corresponds to the maximum spread of a delta impulse sent
through the channel and the maximum support of $\eta_H(t,\nu)$ in the second variable corresponds to the maximum spread of a pure frequency sent
through the channel.

Since we are interested in operators whose spreading functions have small support, it is natural to define the following operator classes,
called {\em operator Paley-Wiener spaces} (see \cite{Pfa10}).

\begin{definition}\label{generaloperatorpaleywienerspaces}
For $S\subseteq \R^2$, we define the operator Paley-Wiener spaces $OPW(S)$ by
\begin{align*}
OPW(S) & =  \{H\in \mathcal L (L^2(\R), L^2(\R)) \colon \  \supp\eta_H\subseteq S,\,\norm{\sigma_H}_{L^{2}}<\infty\}.
\end{align*}
\end{definition}

\begin{remark}\label{rem:generaloperatorpaleywienerspaces}
In \cite{Pfa10, PW06}, the spaces $OPW^{p,q}(S)$, $1\le p,\,q<\infty$, were considered, where  $L^2$-membership of $\sigma_H$ is replaced
$$\norm{\sigma_H}_{L^{p,q}} = \Big(\int\Big(\int\abs{\sigma_H (x,\xi)}^q d\xi\Big)^{p/q}\,dx\Big)^{1/p}$$
with the usual adjustments made when either $p=\infty$ or $q=\infty$.
$OPW^{p,q}(S)$ is a Banach space with respect to the norm $\norm{H}_{OPW^{p,q}} = \norm{\sigma_H}_{L^{p,q}}$.
Note that if $S$ is bounded, then $OPW^{\infty,\infty}(S)$  consists of all bounded operators whose spreading function is supported on $S$.  In fact, the operator norm is then equivalent to the $OPW^{\infty,\infty}(S)$ norm, where the constants depend  on $S$ \cite{KP12}.

The general definition is beneficial since it also allows the inclusion of convolution operators with kernels whose Fourier transforms
lie in $L^q(\R)$ ($OPW^{\infty,q}(\R)$) and multiplication operators whose multiplier is in $L^p(\R)$ ($OPW^{p,\infty}(\R)$).
%In what follows we will restrict our attention to Hilbert-Schmidt operators in $OPW^{2,2}(S)$ where $S\subseteq\R^2$
%is compact. In this case we can think of our operator Paley-Wiener spaces as consisting of pseudodifferential operators with
%bandlimited symbols.
\end{remark}

The goal of operator identification is to find an input signal $g$ such that each operator $H$ in a given class is completely and stably determined by $Hg$.
In other words, we ask that the operator $H\mapsto Hg$ be continuous and bounded below on its domain.  In our setting, this translates to the existence of $c_1,\,c_2 >0$ such that
\begin{equation}\label{eqn:boundedbelow}
c_1\,\|\sigma_H\|_{L^2} \le \|Hg\|_{L^2}\le c_2\,\|\sigma_H\|_{L^2},\quad H\in OPW(S).
\end{equation}
This definition of identifiability of operators originated in \cite{KP06}.
Note that (\ref{eqn:boundedbelow}) implies that the mapping $H\mapsto Hg$ is {\em injective}, that is, that $Hg=0$ implies that $H\equiv 0$, but is not
equivalent to it.  The inequality (\ref{eqn:boundedbelow}) adds to injectivity the assertion that $H$ is also stably determined by $Hg$ in the sense that
a small change in the output $Hg$ would correspond to a small change in the operator $H$.  Such stability is also necessary for the existence of an algorithm
that will reliably recover $H$ from $Hg$.
In this scheme, $g$ is referred to as an {\em identifier} for the operator class $OPW(S)$ and if (\ref{eqn:boundedbelow}) holds, we say
that {\em operator identification} is possible.

In trying to find an explicit expression for an identifier, we use as a starting point the ``simple measurement scheme'' of \cite{Kai59}, in which $g$
is a delta train, viz. $g = \sum_n\delta_{nT}$ for some $T>0$.  In the framework of operator identification the channel measurement criterion in \cite{Kai59} takes the
following form \cite{KP06, PW06b, Pfa10}.

\begin{theorem}\label{thm:main-simple}
  For $H\in OPW \big([0, T] {\times}[-  \Omega / 2, \Omega / 2]\big)$ with  $T\Omega{\leq} 1$, we have
  \begin{eqnarray}
    \|H\sum_{k\in\Z}\delta_{kT}\|_{L^2(\R)}=T\|\sigma_H\|_{L^2},\notag %\label{eqn:operatorstability-simple}
  \end{eqnarray} and  $H$ can be reconstructed by means of
  \begin{eqnarray}
    \kappa_H(x+t,x)=\chi_{[0,T]}(t)\sum_{n\in\Z} \big(H\sum_{k\in\Z}\delta_{kT}\big)(t+nT)\, \frac{\sin(\pi T (x-n))}{\pi T (x-n)} \label{eqn:operatorreconstruction-simple}
  \end{eqnarray}
  where $\chi_{[0,T]}(t)=1$ for $t\in[0,T]$ and $0$ elsewhere and with convergence in the $L^2$  norm
  and uniformly in $x$ for every $t$.
\end{theorem}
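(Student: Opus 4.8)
The plan is to reduce both assertions to a single explicit description of the kernel slice $t\mapsto\kappa_H(x+t,x)$ in terms of the spreading function, and then to recognize the samples of the output $H\sum_k\delta_{kT}$ as samples of that slice. First I would record, using $\kappa_H(x,s)=h_H(x,x-s)$ together with the last relation in (\ref{eqn:symbolrelations}), that
\[
  \kappa_H(x+t,x)=\int \eta_H(t,\nu)\,e^{2\pi i \nu x}\,d\nu .
\]
Two features are immediate and drive everything: for fixed $t$ this is a function of $x$ whose Fourier transform is $\eta_H(t,\cdot)$, hence band-limited to $[-\Omega/2,\Omega/2]$; and it vanishes identically unless $t\in[0,T]$, which is the origin of the cutoff $\chi_{[0,T]}(t)$.

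Next I would compute the sounding output. Inserting $g=\sum_k\delta_{kT}$ into the spreading representation (\ref{eqn:operator3}) and integrating the time-shift against each delta gives $\big(H{\textstyle\sum_k}\delta_{kT}\big)(x)=\sum_k\kappa_H(x,kT)$. Evaluating at $x=t+nT$ with $t\in[0,T]$ and invoking $\supp\eta_H\subseteq[0,T]\times[-\Omega/2,\Omega/2]$, the support in the first (delay) variable has length equal to the sampling spacing $T$, so the responses to distinct deltas do not interfere and only the $k=n$ term is nonzero:
\[
  \big(H{\textstyle\sum_k}\delta_{kT}\big)(t+nT)=\kappa_H(t+nT,nT)=\int\eta_H(t,\nu)\,e^{2\pi i\nu nT}\,d\nu .
\]
This is precisely the generalization of Kailath's non-interference argument (the condition $L\le 1/2W$ of Section~\ref{section:KailathSufficient}), and it identifies the numbers $\big(H\sum_k\delta_{kT}\big)(t+nT)$, $n\in\Z$, as the samples at $x=nT$ of the band-limited slice $x\mapsto\kappa_H(x+t,x)$.

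With this in hand the reconstruction (\ref{eqn:operatorreconstruction-simple}) is the classical sampling theorem: because the slice is band-limited to $[-\Omega/2,\Omega/2]$ and $T\Omega\le1$ forces $\Omega\le 1/T$, the spacing $T$ is at or below the Nyquist rate, so no aliasing occurs and the slice is recovered from its samples by the cardinal series with the $\sinc$ kernel adapted to rate $1/T$, multiplied by $\chi_{[0,T]}(t)$ to record that the slice is supported in $t\in[0,T]$. I would then cite the standard facts that this series converges in $L^2$ and uniformly in $x$. For the norm identity I would tile $\R$ into the intervals $[nT,(n+1)T)$, so that $\|H\sum_k\delta_{kT}\|_{L^2}^2=\int_0^T\sum_n|\big(H\sum_k\delta_{kT}\big)(t+nT)|^2\,dt$; for each fixed $t$ the inner sum is a Parseval sum, since the samples above are, up to a constant, the Fourier coefficients of $\eta_H(t,\cdot)$ on a period of length $1/T$, giving $\sum_n|\cdots|^2$ proportional to $\int|\eta_H(t,\nu)|^2\,d\nu$. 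Integrating in $t$ and using that the symplectic Fourier transform is unitary, $\|\eta_H\|_{L^2}=\|\sigma_H\|_{L^2}$, yields the stated proportionality $\|H\sum_k\delta_{kT}\|_{L^2}=T\|\sigma_H\|_{L^2}$.

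I expect the genuine difficulty to be analytic rather than structural: the input $\sum_k\delta_{kT}$ is a tempered distribution, not an $L^2$ function, so the manipulations above are a priori formal. The hard part will be to justify that (\ref{eqn:operator3}) applied to the delta train produces a bona fide $L^2$ function and that the term-by-term evaluation, the isolation of the single surviving $k=n$ term, and the interchange of summation and integration are all legitimate. This is exactly where the compact support of $\eta_H$ is essential: compactness in the delay variable makes the sum over $k$ locally finite, while band-limitation in $\nu$ makes each slice $x\mapsto\kappa_H(x+t,x)$ smooth and square-integrable. The cleanest route is probably to approximate the delta train by $L^2$ inputs, or to phrase the output through the quasi-periodization of $\eta_H$, and then pass to the limit, so that estimates of the type (\ref{eqn:boundedbelow}) control everything.
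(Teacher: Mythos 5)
Your argument is correct in outline, but it follows a genuinely different route from the paper. The paper never proves Theorem~\ref{thm:main-simple} directly---the result is quoted from \cite{KP06,PW06b,Pfa10}---and within this paper it arises as the $P=1$ specialization of Theorem~\ref{thm:reconstruction}: one takes $\phi(x)=\sin(\pi PTx)/(\pi PTx)$ and $r=\chi_{[0,T)}$, and the proof of the general statement runs through the Zak-transform identity of Lemma~\ref{lem:matrixequationquasiperiodic}, established in the appendix via the duality $\ip{Hg}{s}=\ip{\eta_H}{Z_{TP}s}$ for Schwartz $s$ and the quasiperiodization of $\eta_H$. You instead give a direct formalization of Kailath's heuristic from Section~\ref{section:KailathSufficient}: the slice $x\mapsto\kappa_H(x+t,x)$ has $\eta_H(t,\cdot)$ as its Fourier transform, hence is band-limited to $[-\Omega/2,\Omega/2]$ and vanishes for $t\notin[0,T]$; because the delay spread equals the impulse spacing, only the $k=n$ term of $\sum_k\kappa_H(x,kT)$ survives at $x=t+nT$, so the output samples are the samples at $nT$ of this band-limited slice; Shannon's theorem then gives \eqref{eqn:operatorreconstruction-simple}, and Parseval in $\nu$ gives the norm identity. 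Your route is more elementary and transparent for the rectangular case. The paper's route is engineered to generalize: for $P>1$, when the spreading support is not confined to a rectangle of area one, non-interference fails, the same Zak computation produces the linear system \eqref{eqn:basiclinearsystem}, and identification becomes invertibility of submatrices of the Gabor matrix $G(c)$, the subject of Section~\ref{section:finiteGabor}. For $P=1$ the two arguments nearly coincide: the Zak identity is exactly your ``samples are Fourier coefficients of $\eta_H(t,\cdot)$'' observation in disguise.

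Two points need attention before your sketch is a proof. First, the distributional issue you flag is genuine, and the repair you propose (route the output through the quasiperiodization of $\eta_H$, or test against Schwartz functions) is precisely how the paper's Lemma~\ref{lem:matrixequationquasiperiodic} proceeds; completed this way, your proof merges with the paper's at exactly that point. Second, you never actually carry out the constant in the norm identity. With the paper's normalization of the symplectic Fourier transform, Parseval over a frequency interval of length $1/T$ gives $\sum_n\big|\int\eta_H(t,\nu)\,e^{2\pi i\nu nT}\,d\nu\big|^2=T^{-1}\int|\eta_H(t,\nu)|^2\,d\nu$, so the computation as you set it up produces $\|Hg\|_{L^2}=T^{-1/2}\|\eta_H\|_{L^2}=T^{-1/2}\|\sigma_H\|_{L^2}$; reconciling this with the constant $T$ in the statement is a matter of normalization conventions that ``up to a constant\dots yields the stated proportionality'' does not settle. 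The structure of that step (orthogonality of $\{e^{2\pi i\nu nT}\}_{n\in\Z}$ over a period of length $1/T$, then unitarity of the symplectic Fourier transform) is correct; only the bookkeeping is missing.
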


As was observed earlier, the key feature of this scheme is that the spacing of the deltas in the identifier is sufficiently large so as to allow
the response of the channel to a given delta to ``die out'' before the next delta is sent.  In other words, the parameter $T$ must exceed the time-spread
of the channel.  On the other hand, the rate of change of the channel, as measured by its bandwidth $\Omega$, must be small enough that its impulse
response can be recovered from ``samples'' of the channel taken $T$ time units apart.  In particular, the samples of the impulse response $T$ units
apart can be easily determined from the output.  In the general case considered by Bello, in which the spreading
support of the operator is not contained in a rectangle of unit area, this intuition breaks down.

Specifically, suppose that we consider the operator class $OPW(S)$ where $S\subseteq[0,T_0]\times[-\Omega_0/2,\Omega_0/2]$ and $T_0\Omega_0\gg 1$
but where $|S|<1$.
Then sounding the channel with a delta train of the form $g=\sum_n \delta_{nT_0}$ would severely {\em undersample} the impulse response function.
Simply increasing the sampling rate, however, would produce overlap in the responses of the channel to deltas close to each other.  An approach to
the undersampling problem in the literature of classical sampling theory is to sample at the low rate transformed versions of the function,
chosen so that the interference of the several undersampled functions can be dealt with.  This idea has its most classical expression
in the Generalized Sampling scheme of Papoulis \cite{Pa77}.  Choosing shifts and constant multiples of our delta train results in an identifier
of the form $g=\sum_n c_n\,\delta_{nT}$ where the weights $(c_n)$ have period $P$ (for some $P\in\N$) and $T>0$ satsifies $PT>T_0$.

If $g$ is discretely supported (for example, a periodically-weighted delta-train), then we refer to operator identification as {\em operator sampling}.
The utility of periodically-weighted delta trains for operator identification is a cornerstone of operator sampling and has far-reaching implications
culminating in the developments outlined in Sections~\ref{section:higherdimensional} and \ref{section:outlook}.

%As we shall see in Section~\ref{section:higherdimensional} has an obvious, but very restrictive,  generalization for operators acting on $L^2(\R^d)$ for $d>1$.

\subsection{Kailath's necessity proof and operator identification}\label{section:pfandernecessity}

In Section~\ref{section:kailathnecessity} we presented the proof of the necessity of the condition $BL\le 1$ for channel identification as given
in \cite{Kai59}.  The argument consisted of finding a finite-dimensional approximation of the channel $H$, and then showing that, given any putative
identifier $g$, the number of degrees of freedom present in the output $Hg$ must be at least as large as the number of degrees of freedom in the
channel itself.  For this to be true in any finite-dimensional setting, we must have $BL<1$ and so in the limit we require $BL\le 1$.
In essence, if $BL>1$, we have a linear system with fewer equations than unknowns which necessarily has a nontrivial nullspace.  The generalization of this notion to
the infinite-dimensional setting is the basis of the necessity proof that appears in \cite{KP06}.  In this section, we present an outline of that
proof, and show how the natural tool for this purpose once again comes from time-frequency analysis.

To see the idea of the proof, assume that $BL>1$ and for simplicity let
$S=[-\frac L 2,\frac L 2]\times[-\frac B 2,\frac B 2]$.  The goal is to show that for any sounding signal $s$ in an appropriately
large space of distributions\footnote{$S'_0(\R)$, the dual space of the Feichtinger algebra $S_0(\R)$
\cite{Gro01}, or ${\cal S}'(\R)$, the space of tempered distributions \cite{PW06}. These spaces are large enough to contain weighted infinite
sums of delta distributions.},
the operator $\Phi_s\colon OPW(S)\longrightarrow L^2(\R)$, $H\mapsto Hs$, is not stable, that is, it does not possess a lower bound
in the inequality (\ref{eqn:boundedbelow}).

First, define the operator $E\colon l_0(\Z^2)\longrightarrow OPW(S)$,
where $l_0(\Z^2)$ is the space of finite sequences equipped with the $l^2$ norm, by
$$E(\sigma) = E(\set{\sigma_{k,l}})
  = \sum_{k,l}\sigma_{k,l} \mathcal M_{k\lambda/L}\mathcal T_{l\lambda/B}\,P\,\mathcal T_{-l\lambda/B}\mathcal M_{-k\lambda/L}$$
where $1<\lambda$ is chosen so that $1<\lambda^4<BL$ and where $P$ is a time-frequency localization operator
whose spreading function $\eta_P(t,\nu)$ is infinitely differentiable, supported in $S$, and identically one on
$[-\frac L {2\lambda},\frac L {2\lambda}]\times[-\frac B {2\lambda},\frac B {2\lambda}]$.
It is easily seen that the operator $E$ is well-defined and has spreading function
$$\eta_{E(\sigma)}(t,\nu) = \eta_P(t,\nu)\,\sum_{k,l}\sigma_{k,l}\,e^{2\pi i(k\lambda t/L - l\lambda\nu/B)}.$$
By construction, it follows that for some constant $c_1$,
$\norm{E(\sigma)}_{OPW(S)}\ge c_1\norm{\sigma}_{l^2(\Z^2)}$, for all $\sigma$, and that for any distribution $s$, $Ps$ decays rapidly in time and in
frequency.

Next define the Gabor analysis operator $C_g\colon L^2(\R)\longrightarrow l^2(\Z^2)$ by
$$C_g(s) = \set{\ip{s}{\mathcal M_{k\lambda^2/L}\mathcal T_{l\lambda^2/B}g}}_{k,l\in\Z}$$
where $g(x)=e^{-\pi x^2}$.  A well-known theorem in Gabor theory asserts that
$\set{\mathcal M_{k\alpha}\mathcal T_{l\beta}g}_{k,l\in\Z}$ is a Gabor frame for $L^2(\R)$ for every
$\alpha\beta<1$ (\cite{Lyu92,SW92,Sei92b}).  Consequently $C_g$ satisfies, for some $c_2>0$, $\norm{C_g(s)}_{l^2(\Z^2)}\ge c_2\,\norm{s}_{L^2(\R)}$ for all $s$,
since $\lambda^2/L\,\cdot\lambda^2/B = \lambda^4/BL < 1$.

For any $s$, consider the composition operator
$$ C_g\circ\Phi_s\circ E \colon l_0(\Z^2) \longrightarrow l^2(\Z^2).$$
The crux of the proof lies in showing that this composition operator is not stable, that is, it does not have a lower bound.
Since $C_g$ and $E$ are both bounded below, it follows that $\Phi_s$ cannot be stable.
Since $s\in S'_0(\R)$ was arbitrary, this completes the proof.

To complete this final step we examine the canonical bi-infinite matrix representation of the
above defined composition of operators,
that is, the matrix $M=(m_{k',l',k,l})$ that satisfies
$$(C_g\circ\Phi_s\circ E(\sigma))_{k',l'} = \sum_{k,l} m_{k',l',k,l}\,\sigma_{k,l}.$$
It can be shown that $M$ has the property that for some rapidly decreasing function $w(x)$,
\begin{equation}\label{eqn:matrixdecay}
\abs{m_{k',l',k,l}} \le w(\max\set{\abs{\lambda k'-k},\abs{\lambda l'-l}}).
\end{equation}
The proof is completed by the following Lemma.  Its proof can be found
in \cite{KP06} and generalizations can be found in \cite{Pfa05}.

\begin{figure}[H]\label{fig:skewmatrix}
\centering
\begin{tikzpicture} %[every node/.style={draw}]
\let\mymatrixcontent\empty %
\let\mymatrixrow\empty %
\foreach \x in {1,...,22}{ \expandafter\gappto\expandafter\mymatrixrow\expandafter{ \& }} %
\foreach \y in {1,...,12}{ \expandafter\gappto\expandafter\mymatrixcontent\expandafter{\mymatrixrow\\}} %

\matrix (m3)[matrix anchor= north west, matrix of math nodes,left delimiter=\lbrack, right delimiter=\rbrack, nodes in empty cells, row sep=0.2em, minimum size=1.3em, ampersand replacement=\&]
{ \mymatrixcontent};
% in the following, the dots are placed
{[loosely dotted, shorten >=15pt, shorten <=15pt, thick]
\draw (m3-2-1.north west) -- (m3-4-5.north west);
\draw (m3-5-7.north west) -- (m3-7-11.north west);
\draw (m3-8-13.north west) -- (m3-10-17.north west);
\draw (m3-10-18.south east) -- (m3-12-22.south east);
%\draw (m3-6-3.south west) -- (m3-8-1.south west);
}
%\draw[decorate, decoration={brace, raise=1ex}] (m3-9-6.south east) -- node[above=-5ex]  {$\norm{\placeholder}\leq \lambda N$} (m3-9-3.south west);
%\draw[decorate, decoration={brace, raise=1ex, mirror}] ($(m3-4-3.north west) + (0,-2pt)$)-- node[fill=white, text height=3ex,  text depth=2ex, left=2ex]  {$\norm{\placeholder}\leq N$} (m3-9-3.south west);
%\node (MN1) at (m3.center) {$M_{[N, \lambda N]}$}; % (m3-6-4.south east)
\begin{pgfonlayer}{myback}
	%\fill[fill=mycolor] (m3-4-3.north west) rectangle node {$M_{[N, \lambda N]}$} (m3-9-6.south east);
% 	\fill[fill=red!30] ($(m3-4-5.north west)+(4pt,0)$) rectangle node[above=4ex] {$M_{\Gamma_2}$} (m3-9-6.south east);
	\draw[shift={(m3-5-6.north west)}, fill=black, bar width=10pt, xscale=0.2, yscale=0.3,  variable=\t, samples at={-5,...,5}, xshift=3pt, yshift=5pt ] plot[ybar] (\t, { 4*pow(abs(\t)+1.2, -2} ); %
	\draw[shift={(m3-8-12.north west)}, fill=black, bar width=10pt, xscale=0.2, yscale=0.3,  variable=\t, samples at={-5,...,5}, xshift=3pt, yshift=5pt ] plot[ybar] (\t, { 4*pow(abs(\t)+1.2, -2} ); %
	\draw[shift={(m3-11-18.north west)}, fill=black, bar width=10pt, xscale=0.2, yscale=0.3,  variable=\t, samples at={-5,...,5}, xshift=-3pt] plot[ybar] (\t, { 4*pow(abs(\t)+1.2, -2} ); %
\end{pgfonlayer}
\end{tikzpicture}
%
%\vspace{-2cm} slope of dominating entries is $1/\lambda$
%
%\vspace{2cm}
\caption{A $1/\lambda-$slanted matrix $M$. The matrix is dominated by entries on a slanted diagonal of slope $1/\lambda$. } %Decay away from a $\lambda$-slanted diagonal.}
\label{fig:skewmatrix}
\end{figure}
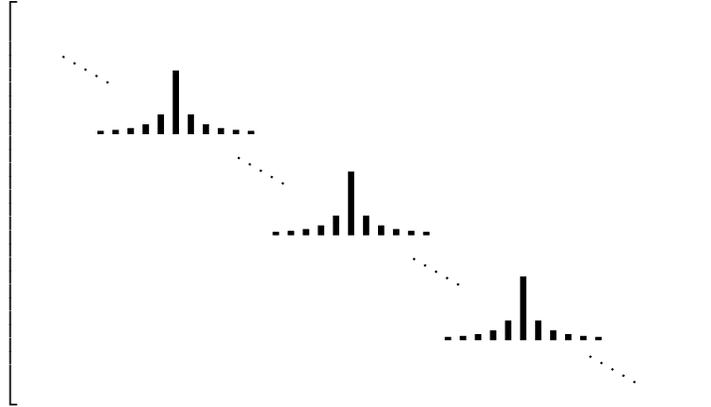

\begin{lemma}\label{lem:instability}
Given $M=(m_{j',j})_{j',j\in\Z^2}$.  If there exists a monotonically decreasing function
$w\colon R^+_0\longrightarrow R^+_0$ with $w=O(x^{-2-\delta})$, $\delta>0$, and constants
$\lambda>1$ and $K_0>0$ with $\abs{m_{j',j}}<w(\norm{\lambda j'-j}_\infty)$ for
$\norm{\lambda j'-j}_\infty>K_0$, then $M$ is not stable.
\end{lemma}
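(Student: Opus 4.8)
The plan is to argue by contradiction, using the contrapositive formulation of stability: suppose $M$ \emph{is} bounded below as an operator $l_0(\Z^2)\to l^2(\Z^2)$, so that there is a $c>0$ with $\norm{M\sigma}_{l^2}\ge c\,\norm{\sigma}_{l^2}$ for every finitely supported $\sigma$, and derive a contradiction from a finite-section dimension count that exploits the slant $\lambda>1$. The geometric heart of the matter is the one pictured in Figure~\ref{fig:skewmatrix}: since the mass of $M$ concentrates near the line $j=\lambda j'$ with $\lambda>1$, a block of columns indexed by the box $Q_N=\set{j\in\Z^2\colon \norm{j}_\infty\le N}$ has its numerically significant rows confined to a box of radius $\approx N/\lambda$. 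Thus there are asymptotically $(2N/\lambda)^2$ relevant output coordinates against $(2N+1)^2$ input coordinates, and because $\lambda>1$ the input box genuinely outnumbers its output shadow, which is incompatible with an injective (let alone bounded-below) map.

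To quantify this I would first record the summability that the hypothesis $w=O(x^{-2-\delta})$ provides. Since there are $O(k)$ lattice points at $\infty$-distance $k$, the lattice sums $\sum_{m\in\Z^2}w(\norm{m}_\infty)$ and $\sum_{m\in\lambda\Z^2}w(\norm{m}_\infty)$ both converge, so their common tail $\tau(K)=\sum_{\norm{m}_\infty>K}w(\norm{m}_\infty)$ satisfies $\tau(K)\to0$ as $K\to\infty$. Fix $\epsilon>0$ and choose $K_\epsilon\ge K_0$ large enough that $\tau(K_\epsilon)<\epsilon$; let $R$ be the orthogonal projection of the output onto the coordinates in the row box $Q_{N_2}$ with $N_2=\lceil (N+K_\epsilon)/\lambda\rceil$. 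The crucial estimate is that the complementary block $(I-R)M|_{Q_N}$, with columns in $Q_N$ and rows outside $Q_{N_2}$, has small norm. Indeed, for any such pair the reverse triangle inequality gives $\norm{\lambda j'-j}_\infty\ge \lambda\norm{j'}_\infty-\norm{j}_\infty>K_\epsilon\ge K_0$, so every surviving entry obeys $\abs{m_{j',j}}<w(\norm{\lambda j'-j}_\infty)$ with argument beyond $K_\epsilon$. A Schur test (bounding the operator norm by the geometric mean of the maximal row and column $l^1$-sums) then yields $\theta(\epsilon):=\norm{(I-R)M|_{Q_N}}\le\tau(K_\epsilon)<\epsilon$, uniformly in $N$.

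The contradiction is then immediate. Choose $\epsilon<c$, and then take $N$ so large that $(2N_2+1)^2<(2N+1)^2$, which is possible because $N_2/N\to 1/\lambda<1$ while $K_\epsilon$ is a fixed constant. Using the orthogonal splitting $\norm{M|_{Q_N}\sigma}^2=\norm{RM|_{Q_N}\sigma}^2+\norm{(I-R)M|_{Q_N}\sigma}^2$ together with the assumed lower bound and the block estimate gives $\norm{RM|_{Q_N}\sigma}^2\ge (c^2-\theta(\epsilon)^2)\,\norm{\sigma}^2>0$ for every nonzero $\sigma$ supported on $Q_N$. Hence $RM|_{Q_N}\colon\mathbb{C}^{Q_N}\to\mathbb{C}^{Q_{N_2}}$ is injective, yet it maps a space of strictly larger dimension into a smaller one, which is impossible. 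Therefore no lower bound $c$ can exist and $M$ is not stable.

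I expect the main obstacle to be precisely the boundary rows just outside the output shadow. For the naive choice that discards only the rows in $Q_{\lfloor N/\lambda\rfloor}$ and tries to bound the remainder, the rows immediately outside still carry near-diagonal entries of size $w(O(1))$, so the leftover block has norm bounded away from $0$ and the argument collapses. The device that resolves this is the $N$-independent buffer $K_\epsilon$ built into $N_2$: inflating the kill-box by the fixed radius $K_\epsilon$ forces every discarded entry into the regime $\norm{\lambda j'-j}_\infty>K_\epsilon$, so the Schur bound degrades to a convergent tail $\tau(K_\epsilon)$ rather than a fixed constant, thereby decoupling the smallness of the discarded block from the favorable dimension count. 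Checking that the dimension inequality and the smallness $\theta(\epsilon)<c$ can be enforced \emph{simultaneously} for all large $N$ is the one spot where the interplay between $\lambda>1$ and the polynomial decay exponent $2+\delta$ must be verified with care.
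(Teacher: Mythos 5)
Your proof is correct, and it implements precisely the mechanism the paper sketches (and defers to \cite{KP06}, with generalizations in \cite{Pfa05}): a buffered finite section in which the columns over $Q_N$ see their numerically significant rows confined to $Q_{N_2}$ with $N_2\approx(N+K_\epsilon)/\lambda<N$, a Schur-test bound $\theta(\epsilon)$ on the discarded block made uniform in $N$ by the fixed buffer $K_\epsilon$, and the resulting dimension count showing $M$ ``behaves like a finite matrix with more rows than columns.'' The only cosmetic difference from the cited argument is that you phrase it as a contradiction (an injective map $RM|_{Q_N}$ into a strictly lower-dimensional space), whereas one can equivalently argue directly by taking a unit vector $\sigma$ in the nontrivial kernel of $RM|_{Q_N}$ and noting $\|M\sigma\|=\|(I-R)M\sigma\|\le\theta(\epsilon)$, which exhibits the failure of any lower bound.
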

Intuitively, this result asserts that a bi-infinite matrix whose entries decay rapidly away from a skew diagonal
behaves like a finite matrix with more rows than columns (see Figure~\ref{fig:skewmatrix}).
Such a matrix will always have a nontrivial nullspace.  In the case of an infinite matrix what can be shown is
that at best its inverse will be unbounded.

We can make a more direct connection from this proof to the original necessity argument in \cite{Kai59} in the following way.
If we restrict our attention to sequences $\{\sigma_{k,l}\}$ with a fixed finite support of size say $N$, then the
image of this subspace of sequence space under the mapping $E$ is an $N$-dimensional subspace of $OPW(S)$.
The operator $P$ is essentially a time-frequency localization operator.  This fact is established in \cite{KP06}
and follows from the rapid decay of the Fourier transform of $\eta_P$.  Since $\eta_P$ itself is concentrated
on a rectangle of area $BL/\lambda^2$, its
Fourier transform will be concentrated on a rectangle of area $\lambda^2/BL$.  From this it follows that
for $\sigma$ as described above, the operator $E(\sigma)$ essentially localizes a function to a region in the
time-frequency plane of area $N(\lambda^2/BL)$.

Considering now the Gabor analysis operator $C_g$, we observe that the Gaussian $g(x)$ essentially occupies a time-frequency
cell of area $1$, and that this function is shifted in the time-frequency plane by integer multiples of $(\lambda^2/B, \lambda^2/L)$.
Hence to ``cover'' a region in the time-frequency plane of area $N(\lambda^2/BL)$ would require only about
$$\frac{N(\lambda^2/BL)}{\lambda^4/BL} = \frac{N}{\lambda^2}$$
time-frequency shifts.  So roughly speaking, in order to resolve $N$ degrees of freedom in the operator $E({\sigma_{k,l}})$, we have only
$N/\lambda^2 < N$ degrees of freedom in the output of the operator $E({\sigma_{k,l}})s$.

\subsection{Identification of operator Paley-Wiener spaces by periodically weighted delta-trains}\label{section:mainidentification}

Theorem~\ref{thm:main-simple} is based on arguments outlined in Section~\ref{section:KailathSufficient} and applies only to
$OPW(S)$ if $S$ is contained in a rectangle of area less than or equal to one.
In the following, we will develop the tools that allow us to identify $OPW(S)$ for any compact set $S$ of Lebesgue measure less than one.

In our approach we discretize the channel by covering the spreading support $S$ with small rectangles of fixed sidelength, which we refer to as a
{\em rectification} of $S$.  As long as the measure of $S$ is less than one, it is possible to do this in such a way that the total area of the rectangles
is also less than one.  This idea seems to bear some similarity to Bello's philosophy of sampling the spreading function on a fixed grid but with one fundamental difference.  Bello's approach is based on replacing $t$
and $x$ by samples, thereby approximating the channel. For a better approximation, sampling on a finer grid
is necessary, which results in a larger system of equations that must be solved.
In our approach, as soon as the total area of the rectification is less than one, the operator modeling the channel is completely determined
by the discrete model.  Once this is achieved,
identification of the channel reduces to solving a single linear system of equations at each point.

\begin{figure}
 \begin{center}
\includegraphics[height=4cm]{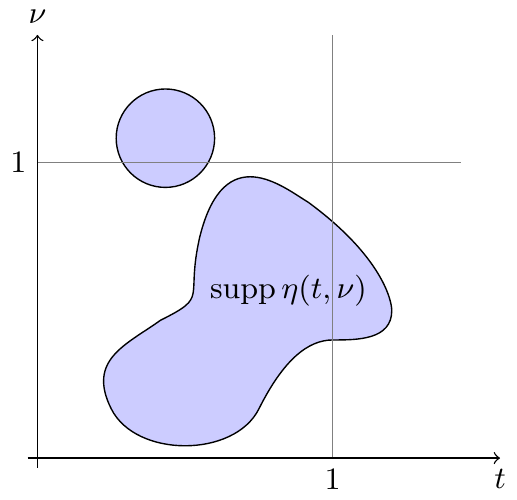}\qquad  \includegraphics[height=4cm]{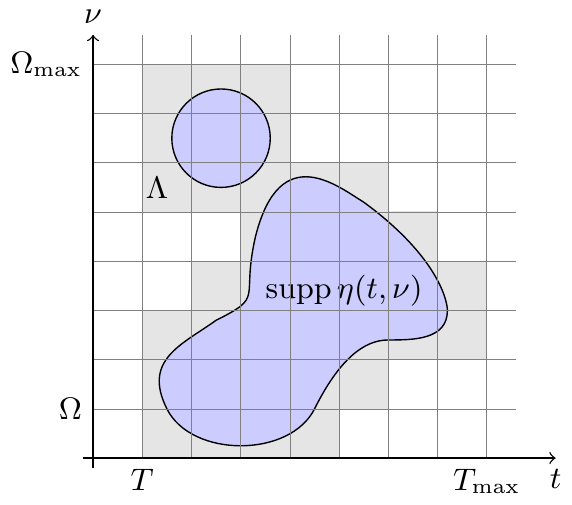}
\end{center}
\caption{A set not satisfying Kailath's condition is rectified with $ \ 1/(T\Omega)=P\in\N$,  the rectification has area $\leq 1$,  $\Omega_{\rm max}\leq1/T$, and  $T_{\rm max}\leq1/\Omega$.}
\end{figure}

Given parameters $T>0$ and $P\in\N$, we assume that $S$ is rectified by rectangles of size $T\times\Omega$, where $\Omega = 1/(TP)$, such that the
total area of the rectangles is less than one.  Given a period-$P$ sequence $c=(c_n)_{n\in\Z}$, we then define the {\em periodically weighted delta-train} $g$
by $g = \sum_{n\in\Z} c_n\,\delta_{nT}$.  The goal of this subsection is to describe the scheme by which a linear system of $P$ equations in a priori $P^2$ unknowns
can be derived by which an operator $H\in OPW(S)$ can be completely determined by $Hg(x)$.  In this sense, the ``degrees of freedom'' in the operator class
$OPW(S)$, and that of the output function $Hg(x)$ are precisely defined and can be effectively compared.

The basic tool of time-frequency analysis that makes this possible is the {\em Zak transform} (see \cite{Gro01}).

\begin{definition}\label{def:zaktransform}
The non-normalized Zak Transform is defined for $f\in{\cal S}(\R)$\footnote{${\cal S}(\R)$ denotes the Schwartz class of infinitely-differentiable,
rapidly-decreasing functions.}, and $a>0$ by
$$\displaystyle{Z_a f(t,\nu) = \sum_{n\in\Z} f(t-an)\,e^{2\pi ia n\nu}}.$$
\end{definition}

$Z_af(t,\nu)$ satisfies the quasi-periodicity relations
$$\displaystyle{Z_af(t+a,\nu) = e^{2\pi ia\nu}\,Z_af(t,\nu)}$$
and
$$\displaystyle{Z_af(t,\nu+1/a) = Z_af(t,\nu)}.$$  $\sqrt{a}\,Z_a$ can be extended to a unitary operator
from $L^2(\R)$ onto $L^2([0,a]{\times}[0,1/a])$.

A somewhat involved but elementary calculation yields the following (see \cite{PW14} and Section~\ref{section:proof of lemma}).

\begin{lemma}\label{lem:matrixequationquasiperiodic}
Let $T>0$, $P\in\N$, $c=(c_n)$, and $g$ be given as above.
Then for all $(t,\nu)\in\R^2$, and $p=0,\,1,\,\dots,\,P{-}1$,
\begin{align}\label{eqn:matrixequationquasiperiodic}
&       e^{-2\pi i\nu T p}\,(Z_{TP}\circ H)g(t + T p,\nu) \nonumber\\
&  =    \Omega\,\sum_{q,\,m=0}^{P-1} (T^q\,M^m c)_p\,
  e^{-2\pi i\nu T q}\,\eta^{Q}_H(t + T q,\nu + m/TP).
  \end{align}
\end{lemma}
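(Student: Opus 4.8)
The plan is to compute the left-hand side directly from the spreading representation~(\ref{eqn:operator3}) of $H$ and to reorganize it into the claimed finite Gabor expansion, the only external inputs being the definition of $Z_{TP}$, the Poisson summation formula, and the identity $\Omega=1/(TP)$. First I would evaluate $Hg$. Since $g=\sum_n c_n\,\delta_{nT}$, inserting $g$ into~(\ref{eqn:operator3}) and integrating out the time-shift variable against each $\delta_{nT}$ gives
$$Hg(x) = \sum_{n\in\Z} c_n\int \eta_H(x-nT,\nu')\,e^{2\pi i\nu' nT}\,d\nu' = \sum_{n\in\Z} c_n\,\kappa_H(x,nT),$$
the second equality being the kernel representation. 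Substituting this into $Z_{TP}(Hg)(s,\nu)=\sum_{k\in\Z} Hg(s-TPk)\,e^{2\pi iTPk\nu}$ and setting $s=t+Tp$ produces a triple sum over $k$ and $n$ together with the $\nu'$-integral, which is the object to be reshaped.

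The heart of the argument is a change of summation variables exploiting the $P$-periodicity of $c$. I would write $n=j+Pr$ with $j\in\set{0,\dots,P-1}$ and $r\in\Z$, so that $c_n=c_j$, and set $k'=k+r$; the shift argument then collapses to $t+T(p-j)-TPk'$, while the two exponentials combine into $e^{2\pi i\nu'(j+Pr)T}\,e^{2\pi iTP(k'-r)\nu}$. Isolating the $r$-dependence leaves the factor $\sum_{r} e^{2\pi i PrT(\nu'-\nu)}$, to which I apply Poisson summation to obtain the Dirac comb $\Omega\sum_{m}\delta(\nu'-\nu-m\Omega)$. Carrying out the $\nu'$-integration then forces $\nu'=\nu+m\Omega$ and leaves a double sum over $m\in\Z$ and $k'\in\Z$ with the remaining phases attached.

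At this stage the crucial simplifications are purely arithmetic consequences of $\Omega T=1/P$ and $\Omega TP=1$: the phase $e^{2\pi im\Omega jT}$ reduces to $e^{2\pi imj/P}$, and $e^{2\pi iTPk'\nu}$ may be replaced by $e^{2\pi iTPk'(\nu+m\Omega)}$ because $e^{2\pi ik'm}=1$. The $k'$-sum then assembles into the first-variable Zak quasi-periodization of $\eta_H$ evaluated at $(t+T(p-j),\nu+m\Omega)$. Finally, substituting $q=p-j$ exposes the coefficient $c_{p-q}\,e^{2\pi im(p-q)/P}$ as the finite Gabor coefficient $(T^qM^mc)_p$, where $(Tc)_n=c_{n-1}$ and $(Mc)_n=e^{2\pi in/P}c_n$, and splitting off $e^{2\pi i\nu pT}$ and multiplying through by $e^{-2\pi i\nu pT}$ yields precisely the stated identity.

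The two places demanding care, and where I expect the real work to lie, are the reduction of the index ranges and the distributional rigor. For the ranges, the summand $e^{-2\pi i\nu qT}\,\eta^Q_H(t+Tq,\nu+m\Omega)$ is $P$-periodic in $q$: the Zak quasi-periodicity relation $Z_{TP}f(t+TP,\nu)=e^{2\pi iTP\nu}Z_{TP}f(t,\nu)$ contributes under $q\mapsto q+P$ a phase $e^{2\pi iTP\nu}$ that cancels $e^{-2\pi i\nu PT}$, while $(T^qM^mc)_p$ is $P$-periodic because $T^P=\mathrm{Id}$; periodizing the second variable over $\nu+l/T$, which is what upgrades the first-variable Zak transform to the doubly-(quasi-)periodic $\eta^Q_H$ of the statement, then lets the $m\in\Z$ sum collapse to $m\in\set{0,\dots,P-1}$ since $M^P=\mathrm{Id}$ and $P\Omega=1/T$. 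Verifying that this periodization is exactly the $\eta^Q_H$ appearing on the right, with no stray phases, is the bookkeeping bottleneck. For rigor, the interchanges of the $k$-, $n$-, and $r$-summations with the $\nu'$-integration and the use of Poisson summation must be read as identities in $S_0'(\R)$ (or $\mathcal{S}'(\R)$); this is legitimate because $\eta_H$ is compactly supported and $\sigma_H$ decays, so all the formal rearrangements above are valid manipulations of tempered distributions.
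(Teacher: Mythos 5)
Your proposal is correct, and it reaches \eqref{eqn:matrixequationquasiperiodic} by a genuinely different route from the paper. Every step checks out: the kernel formula $Hg(x)=\sum_n c_n\,\kappa_H(x,nT)$, the index splits $n=j+Pr$ and $k'=k+r$ exploiting the $P$-periodicity of $c$, the Poisson-summation comb $\Omega\sum_m\delta(\nu'-\nu-m\Omega)$, the assembly of the $k'$- and $s$-sums into $\eta^Q_H$, and the $P$-periodicity in $q$ (via $\eta^Q_H(t+TP,\nu)=e^{2\pi i\nu TP}\eta^Q_H(t,\nu)$) needed to normalize the summation range. The paper argues structurally rather than by one long computation: it first proves, by pairing against $s\in{\cal S}(\R)$, the duality relation $\ip{Hg}{s}=\ip{\eta_H}{Z_{TP}s}$ for the \emph{unweighted, coarse} train $g=\sum_n\delta_{nTP}$, whence $(Z_{TP}\circ H)g(t,\nu)=\tfrac{1}{TP}\sum_{k,m}\eta_H(t+kTP,\nu+m/(TP))\,e^{-2\pi i\nu kTP}$; it then writes the weighted fine train as a finite sum of translated coarse trains, $g=\sum_{q=0}^{P-1}c_{-q}\,\mathcal T_{-qT}\big(\sum_m\delta_{mPT}\big)$, and uses the covariance rule that $H\circ\mathcal T_\alpha$ has spreading function $\eta_H(t-\alpha,\nu)\,e^{2\pi i\nu\alpha}$ to apply the coarse-train formula to each translate; the splits $m=jP+\ell$ and $q\mapsto q-p$ then produce $\eta^Q_H$ and the Gabor coefficients exactly as in your final steps. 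What each approach buys: yours is more self-contained---no adjoint identity, no covariance formula---and it makes visible where each phase factor originates, but the distributional burden (Poisson summation in ${\cal S}'$, interchanging the $\nu'$-integral with three sums) is spread through the entire computation, which is why your closing paragraph on rigor is genuinely needed; the paper's weak formulation keeps all distributional issues localized in the single identity $\ip{Hg}{s}=\ip{\eta_H}{Z_{TP}s}$ and never manipulates a free-standing Dirac comb, and its translation-covariance trick replaces your double index shuffle by a one-line operator identity, at the price of invoking that covariance rule. The combinatorial core---a frequency split mod $P$ and a cyclic shift in $q$---is the same in both arguments.
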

%{\color{red} We should include a bit of computations that show how the time-frequency shifts of $c$ pop up. }
Here $\mathcal T$ and $\mathcal M$ are the translation and modulation operators given in Definition~\ref{def:translationmodulation}, and $\eta^{Q}_H(t,\nu)$
is the {\em quasiperiodization} of $\eta_H$,
\begin{equation}\label{eqn:quasiperiodization}
\eta^{Q}_H(t,\nu) = \sum_k\sum_\ell \eta_H(t+kTP,\nu+\ell/T)\,e^{-2\pi i\nu kTP}
\end{equation}
whenever the sum is defined.

\begin{figure}\label{fig:PWSounding}
% \begin{center}
\begin{tikzpicture}
\node[above right]  at (0,1){
\includegraphics[width=11cm]{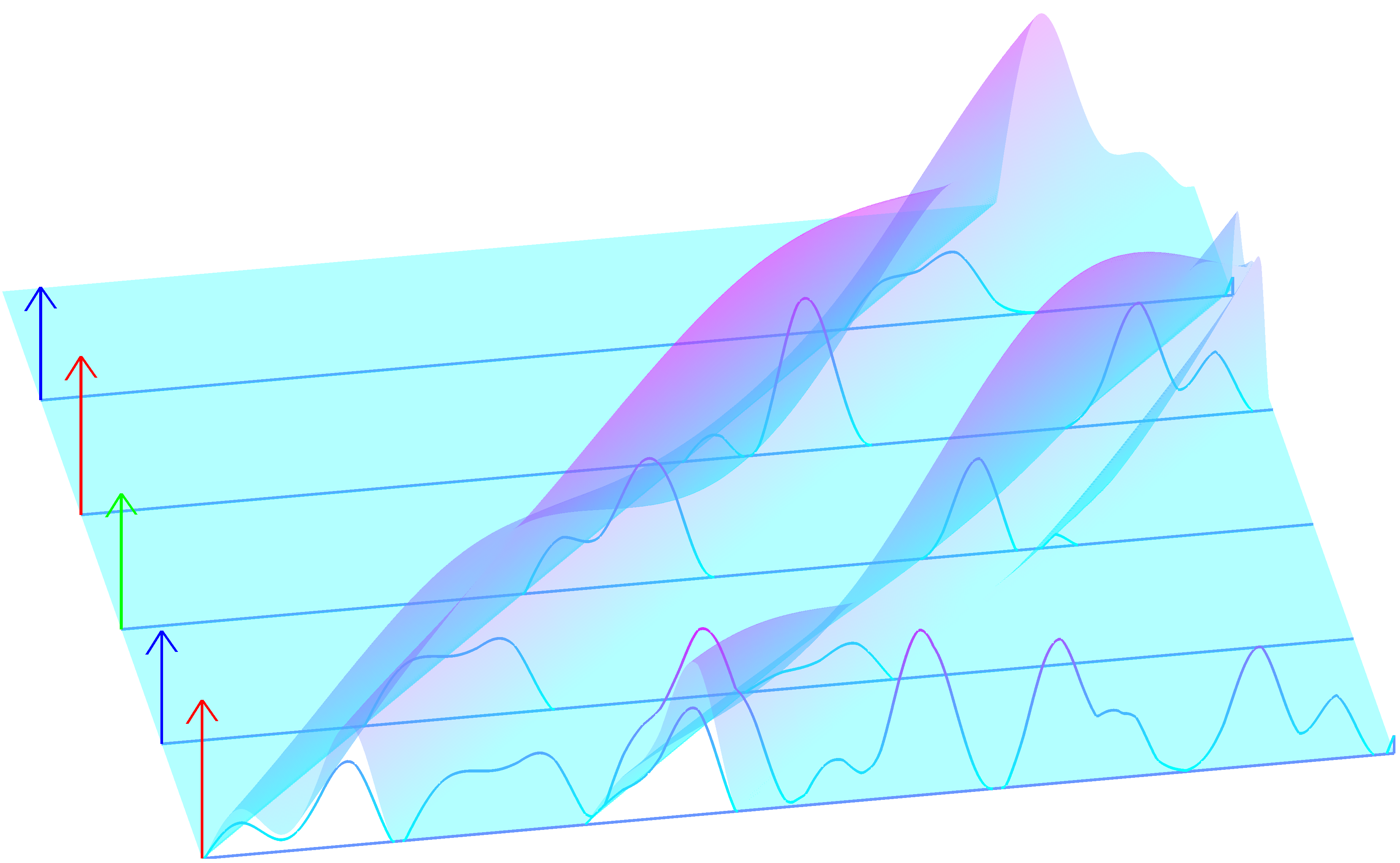}
};
\node[above right]  at (10,1.5) {$x$-axis};
\node[above right] at (-.5,4.1){$y$-axis};
\node[above right]  at (3.8,4.4){$\kappa(x,y)$};
%\node[above right]  at (1.5,6.5){$H \in OPW\big([0,1]{\times}[-1/2,1/2]\big)$};
%\node[above right]  at (-.5,1.5){$\sum_n \delta_n(y)$};
\node[above right]  at (1.5,.8){$0$};
\node[above right]  at (-0.1,1.8){$1/P\Omega{=}T$};
\node[above right]  at (.45,2.75){$2T$};
\node[above right]  at (0.13,3.7){$3T$};
\node[above right]  at (-.2,4.6){$4T$};
\node[above right]  at (3,.85){$T$};
\node[above right]  at (4.5,1){$2T$};
\node[above right]  at (6.05,1.15){$3T$};
\node[above right]  at (7.6,1.28){$4T$};
\node[above right]  at (9.15,1.43){$5T$};
% {\node[above right]  at (3,.5) {$H(\sum_n \delta_n\big)(x)=\int \kappa(x,y) \sum_n \delta_n(y)dy=\sum_n \kappa(x,n) $};}
\end{tikzpicture}

%\includegraphics[width=12cm]{KailathSampling1.png}
%\end{center}
\caption{Channel sounding  of $OPW([0,2/3]{\times}[-1/4,1/4]\, \cup\, [4/3,2]{\times}[-1/2,1/2])$ using a $P$-periodically weighted delta train $g$.    The kernel $\kappa(x,y)$ takes values on the $(x,y)$-plane, the sounding signal $g$, a weighted impulse train, is defined on the $y$-axis, and the output signal $Hg(x)=\int \kappa(x,y)g(y)dy$ is displayed on the $x$-axis. Here, the sample values of the tab functions $h(x,t)=\kappa(x,t-x)$ are not easily read of the response $Hg(x)$ as, for example for $x\in [2T,3T]=[4/3,2]$ we have $Hg(x)=0.7\kappa(x,0)+0.6\kappa(x,2T)=0.7h(x,x)+.6h(x,2T-x)$.  In detail, we have 
 $g = \ldots +0.7\delta_{-2}+0.5\delta_{-4/3}+ 0.6\delta_{-2/3} +0.7\delta_0+0.5\delta_{2/3}+ 0.6\delta_{4/3}+0.7\delta_{2}+ 0.5\delta_{8/3}+\ldots$, so $P=3$, $T=2/3$, $\Omega=1/PT=1/2$, $c_n=0.7$ if $n \!\! \mod 3=0$, $c_n=0.5$ if $n\!\!\mod 3=1$, $c_n=0.6$ if $n\!\! \mod 3=2$.
.}
\end{figure}

Under the additional simplifying assumption that the spreading function $\eta_H(t,\nu)$ is supported in the large rectangle
$[0,TP]\times [0,1/T]$, and by restricting (\ref{eqn:matrixequationquasiperiodic}) to the rectangle $[0,T]\times[0,1/(TP)]$,
we arrive at the $P\times P^2$ linear system
\begin{equation}\label{eqn:basiclinearsystem}
{\bf Z}_{Hg}(t,\nu)_p = \sum_{q,m=0}^{P-1} G(c)_{p,(q,m)}\,\boldsymbol\eta_H(t,\nu)_{(q,m)}
\end{equation}
where
\begin{equation}\label{eqn:boldZvector}
{\bf Z}_{Hg}(t,\nu)_p = (Z_{TP}\circ H)g(t+pT,\nu)\,e^{-2\pi i \nu pT},
\end{equation}
\begin{equation}\label{eqn:boldetavector}
\boldsymbol\eta_H(t,\nu)_{(q,m)} = \Omega\,\eta_H(t+qT,\nu+m/TP)\,e^{-2\pi i \nu qT}\,e^{-2\pi iqm/P},
\end{equation}
and where $G(c)$ is a finite Gabor system matrix (\ref{def:fullgabormatrix}).
If (\ref{eqn:basiclinearsystem}) can be solved for each $(t,\nu)\in [0,T]\times[0,1/(TP)]$,
then the spreading function for an operator $H$ can be completely determined by its response to the
periodically-weighted delta-train $g$.

As anticipated by Bello, two issues now become relevant.  (1) We require that $\supp\eta_H$ occupy no more than $P$ of the shifted rectangles
$[0,T]\times[0,1/(TP)]+(qT,k/(TP))$, so that (\ref{eqn:basiclinearsystem}) has at least as many equations as unknowns.  This forces $|\supp\eta_H|\le 1$. (2) We require that $c$ be chosen in such a way that the $P\times P$ system formed by removing the columns of
$G(c)$ corresponding to vanishing components of $\boldsymbol\eta_H$ is invertible.  That such $c$ exist is a fundamental cornerstone of
operator sampling and is the subject of the next section.

Based on the existence of $c$ such that any set of $P$ columns of $G(c)$ form a linearly independent set, we can prove the following \cite{PW13}.

\begin{theorem}\label{thm:reconstruction}
For $S\subseteq (0,\infty){\times}\R$  compact with $\abs{S}<1$, there exists $T>0$ and $P\in\N$ % such that $S\subseteq [0,LT]{\times}[-1/(2T), 1/(2T)]$
, and a period-$P$ sequence
$c=(c_n)$ such that $g=\sum_n c_n\,\delta_{nT}$ identifies $OPW(S)$.
In particular, there exist period-$P$ sequences $b_j=(b_{j,k})$, and integers $0\le q_j,\,m_j\le P{-}1$, for $0\le j\le P{-}1$ such that
\begin{align}
h(x,t)  &=  e^{-\pi it/T}\sum_k\sum_{j=0}^{P-1} \big[b_{j,k}\,Hg(t - (q_j-k)T) \nonumber  \\
        & \hskip-.25in e^{2\pi i m_j(x-t)/PT}\,\phi((x-t)+(q_j-k)T)\, r(t-q_j T)\big] \label{eqn:reconstructionformula}
\end{align}
where  $r,\phi\in {\cal S}(\R)$ satisfy
\begin{equation}\label{eq:r_phi_2}
\sum_{k\in\Z} r(t + kT) = 1 = \sum_{n\in\Z} \widehat{\phi}(\gamma + n/PT),
\end{equation}
where $r(t)\widehat{\phi}(\gamma)$ is supported in a neighborhood of $[0,T]{\times}[0,1/PT]$, and where the
sum in \eqref{eqn:reconstructionformula} converges unconditionally in $L^{2}$ and for each $t$ uniformly in $x$.
\end{theorem}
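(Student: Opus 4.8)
The plan is to realize the abstract reconstruction as the inversion of the finite linear system (\ref{eqn:basiclinearsystem}), with the three ingredients entering in sequence: a rectification that pins the support to a fixed set of $P$ cells, the full-spark property of the Gabor matrix $G(c)$ that makes the reduced system invertible, and an inversion of the Zak transform that glues the pointwise solutions into the globally convergent formula (\ref{eqn:reconstructionformula}) for $h(x,t)$.

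First I would fix the parameters. Since $S$ is compact it lies in some $[0,T_{\max}]\times[-\Omega_{\max}/2,\Omega_{\max}/2]$, and since $|S|<1$ the regularity of Lebesgue measure lets me choose $T>0$ small and $P\in\N$ large (with $\Omega=1/(TP)$) so that $TP\ge T_{\max}$ and $1/T\ge\Omega_{\max}$, and so that, choosing the fundamental domain $[0,TP]\times[0,1/T]$ of the quasiperiodization (\ref{eqn:quasiperiodization}) to contain $S$, the family of cells $(qT,m/TP)+[0,T]\times[0,1/(TP)]$, $0\le q,m\le P-1$, that meet $S$ numbers at most $P$ and has total area below one. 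Labelling these occupied cells $(q_j,m_j)$, $0\le j\le P-1$ (padding arbitrarily with unoccupied cells should fewer than $P$ occur), fixes a distinguished set $\mathcal O$ of $P$ columns of the $P\times P^2$ matrix $G(c)$ of (\ref{eqn:basiclinearsystem}).

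Next I would invoke Lemma~\ref{lem:matrixequationquasiperiodic} and the reduction to (\ref{eqn:basiclinearsystem}): because $\supp\eta_H\subseteq S$ meets only the cells in $\mathcal O$, for every $(t,\nu)\in[0,T]\times[0,1/(TP)]$ the unknown vector $\boldsymbol\eta_H(t,\nu)$ of (\ref{eqn:boldetavector}) is supported on those same $P$ indices, with the padded entries forced to zero. Discarding the remaining $P^2-P$ columns turns (\ref{eqn:basiclinearsystem}) into a square $P\times P$ system with coefficient matrix $G_{\mathcal O}$, the column-selection of $G(c)$ indexed by $\mathcal O$. Here I would quote the full-spark result of the following section --- the existence of a period-$P$ sequence $c$ for which every set of $P$ columns of $G(c)$ is linearly independent --- to conclude that $G_{\mathcal O}$ is invertible; I record the rows of its inverse as the period-$P$ sequences $b_j=(b_{j,k})$. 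Solving and undoing the phases of (\ref{eqn:boldetavector}) gives, for each $(t,\nu)$ in the fundamental domain,
$$\eta_H(t+q_jT,\nu+m_j/TP)=\Omega^{-1}e^{2\pi i\nu q_jT}e^{2\pi iq_jm_j/P}\sum_p (G_{\mathcal O}^{-1})_{j,p}\,{\bf Z}_{Hg}(t,\nu)_p,$$
which recovers $\eta_H$ on all of $S$, hence $H$, from $Hg$. Since $G_{\mathcal O}$ is one fixed invertible matrix its inverse is bounded, so the recovery is stable and both inequalities of (\ref{eqn:boundedbelow}) hold, establishing that $g$ identifies $OPW(S)$.

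Finally I would pass from this pointwise inversion to the explicit formula (\ref{eqn:reconstructionformula}). The quantity ${\bf Z}_{Hg}(t,\nu)_p$ in (\ref{eqn:boldZvector}) is a Zak transform of $Hg$, so recovering $h(x,t)$ requires only the inversion of $Z_{TP}$: choosing $r,\phi\in\mathcal S(\R)$ as in (\ref{eq:r_phi_2}), whose translates and modulates furnish the required smooth partitions of unity with $r(t)\widehat\phi(\gamma)$ localized near $[0,T]\times[0,1/PT]$, I can render the integral over the fundamental domain as the sum over $k$ in (\ref{eqn:reconstructionformula}) via the quasi-periodicity $Z_{TP}f(t+TP,\nu)=e^{2\pi iTP\nu}Z_{TP}f(t,\nu)$; the modulations $e^{2\pi i m_j(x-t)/PT}$ and $e^{-\pi it/T}$ and the shifts $t-(q_j-k)T$ are exactly the phase and index bookkeeping carried along from (\ref{eqn:boldZvector})--(\ref{eqn:boldetavector}). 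The Schwartz decay of $r,\phi$ together with the $\ell^2$-summability of the recovered coefficients yields the unconditional $L^2$ convergence and, for each $t$, uniform convergence in $x$. The main obstacle I anticipate is precisely this last assembly: combining the local solutions into a single globally convergent and stable formula for $h(x,t)$ forces one to track the quasi-periodicity phases of $Z_{TP}$ and the partition-of-unity windows simultaneously, and to verify that the support condition on $r(t)\widehat\phi(\gamma)$ is compatible with the cell geometry fixed in the rectification step --- the remainder being a direct consequence of the full-spark property, which is the essential external input.
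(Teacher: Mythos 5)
Your proposal is correct and follows essentially the same route as the paper: rectify $S$ into $T\times 1/(TP)$ cells so that at most $P$ are occupied, reduce to the system \eqref{eqn:basiclinearsystem} via Lemma~\ref{lem:matrixequationquasiperiodic}, invert the $P\times P$ column-selection of $G(c)$ using the full-spark result (Theorem~\ref{thm:genericfullspark}), and assemble \eqref{eqn:reconstructionformula} by inverting the Zak transform with the windows $r,\phi$. This is precisely the scheme the paper lays out in Section~\ref{section:mainidentification} (with details deferred to \cite{PW13,PW14}), so there is nothing substantive to add.
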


Equation~\eqref{eqn:reconstructionformula} is a generalization of \eqref{eqn:operatorreconstruction-simple} which is easily seen by choosing $\phi(x)=\sin(\pi PTx)/(\pi PTx)$ and $r(t)$ to be the characteristic function of $[0,T)$.

\section{Linear Independence Properties of Gabor Frames}\label{section:finiteGabor}

\subsection{Finite Gabor Frames}

\begin{definition}\label{def:translationmodulation}
Given $P\in\N$, let $\omega=e^{2\pi i/P}$ and define the {\em translation operator $\mathcal T$}
on $(x_0,\,\dots,\,x_{P-1})\in\mathbb C^P$ by
$$\mathcal{T}x=(x_{P-1},x_0,\,x_1,\,\ldots,x_{P-2}),$$
and the {\em modulation operator $\mathcal M$} on $\mathbb C^P$ by
$$\mathcal{M}x=(\omega^0 x_0, \omega^1 x_1,\,\dots,\, \omega^{P-1} x_{P-1}).$$
Given a vector $c\in\mathbb C^P$ the {\em finite Gabor system with window $c$}
is the collection $\set{\mathcal{T}^q \mathcal{M}^p c}_{q,p=0}^{P-1}$.
Define the {\em full Gabor system matrix} $G(c)$
to be the $P\times P^2$ matrix
\begin{equation}\label{def:fullgabormatrix}
G(c) = \left[\,\, D_0\,W_P\,\,\vrule\,\, D_1\,W_P\,\, \vrule \,\,\cdots \,\, \vrule\,\, D_{P-1}\,W_P \,\,\right]
\end{equation}
where $D_k$ is the diagonal matrix with diagonal
$$\mathcal{T}^kc = (c_{P-k},\,\dots,\,c_{P-1},\,c_0,\,\dots,\,c_{P-k-1}),$$
and $W_P$ is the $P\times P$ Fourier matrix $W_P = (e^{2\pi inm/P})_{n,m=0}^{P-1}$.
\end{definition}

\begin{remark}\label{rem:translationmodulation}
(1) For $0\le q,\,p\le P-1$, the $(q+1)$st column of the submatrix $D_pW_P$ is the vector $\mathcal{M}^p\mathcal{T}^qc$
where the operators $\mathcal{M}$ and $\mathcal{T}$
are as in Definition~\ref{def:translationmodulation}.  This means that each column of the matrix $G(c)$
is a unimodular constant multiple of an element of the finite Gabor system with window $c$, namely
$\set{e^{-2\pi ipq/P}\,\mathcal{T}^q \mathcal{M}^pc}_{q,p=0}^{P-1}$.

\noindent (2)
Note that the finite Gabor system defined above consists of $P^2$ vectors in $\mathbb C^P$ which form an overcomplete
tight frame for $\mathbb C^P$ \cite{LPW05}.  For details on Gabor frames in finite dimensions, see \cite{LPW05,KPR08,FKL09}
and the overview article \cite{Pfa12}.

\noindent(3)  Note that we are abusing notation slightly by identifying a vector $c\in\mathbb C^P$ with an $P$-periodic sequence
$c=(c_n)$ in the obvious way.
\end{remark}

\begin{definition}\cite{DE03}
The {\em Spark} of an $M\times N$ matrix F is the size of the smallest linearly dependent
subset of columns, i.e.,
$$Spark(F) = \min\set{\norm{x}_0\colon Fx=0,\ \ x\ne 0}$$
where $\norm{x}_0$ is the number of nonzero components of the vector $x$.
If $Spark(F)=M+1$, then $F$ is said to have {\em full Spark}.
$Spark(F)=k$ implies that any collection of fewer than $k$ columns of $F$
is linearly independent.
\end{definition}

\subsection{Finite Gabor frames are generically full Spark}

The existence of Gabor matrices with full Spark has been addressed in \cite{LPW05,M13}.  The results in these
two papers are as follows.

\begin{theorem}\label{thm:genericfullsparkprime}\cite{LPW05}
If $P\in\N$ is prime then there exists a dense, open subset of $c\in\mathbb C^P$ such that every minor of the Gabor system matrix $G(c)$
is nonzero.  In particular, for such $c$, $G(c)$ has full Spark.\end{theorem}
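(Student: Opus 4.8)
The plan is to treat every minor of $G(c)$ as a polynomial in the window entries $c_0,\dots,c_{P-1}$ and to reduce the statement to a purely combinatorial non-vanishing claim in which primality of $P$ is decisive. First I would record the elementary skeleton. By Definition~\ref{def:translationmodulation} each entry of $G(c)$ has the form $\omega^{\,a}c_j$ for some index $j$ and some power of $\omega=e^{2\pi i/P}$, so every $k\times k$ minor $m_{N,\Lambda}(c)$ — indexed by a choice of $k$ rows $N$ and $k$ columns $\Lambda$ — is a homogeneous polynomial of degree $k$ in $c$ with coefficients in $\mathbb{Z}[\omega]$, and there are only finitely many such minors. For a single polynomial that is not identically zero, its zero locus is a proper algebraic subvariety of $\mathbb{C}^P$ whose complement is open and dense, and a finite intersection of such complements is again open and dense. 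Hence the theorem follows once I show that \emph{each} minor $m_{N,\Lambda}$ is a nonzero polynomial.

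The crux is therefore to prove non-vanishing of each minor as a polynomial. Writing the $(q,p)$-column of $G(c)$, up to a per-column unimodular scalar, as the vector whose $n$-th entry is $\omega^{pn}c_{n-q}$ (indices mod $P$), the minor equals, up to a nonzero factor,
$$m_{N,\Lambda}(c)=\det\big(\omega^{\,p_b n_a}\,c_{\,n_a-q_b}\big)_{a,b=1}^{k}.$$
The idea is to group the selected columns by their common time-shift $q$: all columns sharing a time-shift $q$ contribute the same factor $c_{n-q}$ when matched to a row $n$, so in the Leibniz expansion the $c$-monomial depends only on how the rows are distributed among the time-shift groups, while the internal matchings within each group sum to a minor of the Fourier matrix $(\omega^{pn})$. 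This organizes the determinant as
$$m_{N,\Lambda}(c)=\sum_{(R_q)}\pm\Big(\prod_q\ \prod_{n\in R_q} c_{\,n-q}\Big)\ \prod_q \det\big(\omega^{\,pn}\big)_{n\in R_q,\,p\in P_q},$$
the sum running over partitions of $N$ into blocks $R_q$ (one per distinct time-shift, with $P_q$ the modulations used there). Here primality enters through the classical theorem of Chebotar\"ev: for prime $P$, \emph{every} minor of the $P\times P$ Fourier matrix $W_P$ is nonzero. Consequently each factor $\det(\omega^{pn})_{n\in R_q,\,p\in P_q}$ in the display is nonzero.

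To conclude $m_{N,\Lambda}\not\equiv 0$ I would isolate a single surviving monomial. Assigning generic real weights $w_0,\dots,w_{P-1}$ to $c_0,\dots,c_{P-1}$ and substituting $c_j=t^{w_j}$ turns the expansion into a finite sum of (possibly non-integer) powers of $t$, in which the partition $(R_q)$ of minimal total weight $\sum_q\sum_{n\in R_q}w_{n-q}$ picks out the lowest-order term; since its coefficient is a nonzero product of Chebotar\"ev minors, this term cannot vanish and $m_{N,\Lambda}$ is a nonzero polynomial.

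I expect this final step to be the main obstacle. Distinct block-partitions can produce the \emph{same} $c$-monomial — precisely when their index multisets $\{\,n-q\,\}$ coincide — so genericity of the weights alone need not single out a unique partition, and one must then rule out a conspiratorial cancellation among the corresponding signed products of Fourier minors. Controlling these collisions, i.e.\ exhibiting one monomial whose aggregate coefficient is genuinely nonzero, is the combinatorial heart of the argument and is exactly where the fine structure of $G(c)$ together with primality of $P$ must be used in full (this is carried out in \cite{LPW05}). The special case in which all selected columns share a single time-shift is already instructive: there only one partition occurs, the minor factors cleanly as a monomial times a single nonzero Chebotar\"ev minor, and no cancellation can arise.
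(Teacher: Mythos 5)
Your setup is sound and in fact coincides with the paper's: reduce the theorem to showing that each minor is a nonzero polynomial in $c_0,\dots,c_{P-1}$ (finitely many minors, each nonvanishing off a proper algebraic subvariety), expand the determinant by grouping columns according to their time shift --- this is exactly the paper's Lagrange expansion $\det(M)=\sum_{\sigma\in S_P/\Gamma}a_\sigma C^\sigma$ --- and invoke Chebotar\"ev's theorem for prime $P$ to conclude that each coefficient $a_\sigma$, being a product of minors of $W_P$, is nonzero. But your argument stops exactly where the real work begins. The generic-weight substitution $c_j=t^{w_j}$ cannot close the gap, for the reason you yourself identify: generic weights separate distinct \emph{monomials}, but distinct partitions $(R_q)$ whose index multisets $\{n-q\}$ coincide produce the \emph{same} monomial, hence the same power of $t$, so the lowest-order coefficient is a signed sum of products of Chebotar\"ev minors rather than a single such product, and nothing you have said rules out cancellation. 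Deferring that step to \cite{LPW05} is not admissible here, since it is precisely the content of the theorem being proved; as written, the proposal establishes only the routine reduction together with the nonvanishing of the individual coefficients $a_\sigma$.

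The missing idea is the paper's \emph{lowest index (LI) monomial} and the counting argument attached to it. Define $p_M$ greedily: take the lowest-index variable $c_j$ appearing in $M$, delete the row and column of an entry containing it, and recurse on the remaining matrix; this $p_M$ is independent of the choices made. An induction on the size of $M$ shows that the number of generalized diagonals of $M$ whose product is a multiple of $p_M$ equals $\prod_{\kappa}\ell_\kappa!$, where $\ell_\kappa$ is the number of columns of $M$ drawn from the block $D_\kappa W_P$. Since $\prod_\kappa\ell_\kappa!$ is also exactly the number of generalized diagonals contributing to any single term $a_\sigma C^\sigma$ of the expansion, all diagonals realizing $p_M$ must lie in one and the same term: the LI monomial appears in exactly one $C^\sigma$, so its coefficient in $\det(M)$ is a single nonzero product of Fourier minors and no cancellation can occur. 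This counting step, not Chebotar\"ev alone, is the combinatorial heart of the proof; it also explains why primality gives the stronger conclusion that \emph{every} minor (of any size $k\le P$) is nonzero, whereas for composite $P$ the coefficients $a_\sigma$ can vanish and one needs Malikiosis's different, consecutive-index monomial \cite{M13}, which yields only full Spark of the $P\times P^2$ matrix.
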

\begin{theorem}\label{thm:genericfullspark}\cite{M13}
For every $P\in\N$, there exists a dense, open subset of $c\in\mathbb C^P$ such that the Gabor system matrix $G(c)$ has full Spark.
\end{theorem}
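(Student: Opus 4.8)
The plan is to recognize full Spark as a Zariski-generic condition and thereby reduce the whole statement to a single polynomial-nonvanishing claim. From Definition~\ref{def:translationmodulation} and Remark~\ref{rem:translationmodulation}, every entry of $G(c)$ has the form $\omega^{pn}\,c_{(n-q)\bmod P}$ (up to a unimodular constant), i.e. a monomial in $c_0,\dots,c_{P-1}$ times a root of unity. Hence each maximal ($P{\times}P$) minor $p_S(c)$, with $S$ ranging over the $\binom{P^2}{P}$ choices of $P$ columns, is a polynomial in $c$. Since $G(c)$ has full Spark exactly when all maximal minors are nonzero, full Spark holds at $c$ iff $F(c):=\prod_S p_S(c)\neq 0$. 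As $F$ is a polynomial, if $F\not\equiv 0$ then $\set{c\in\mathbb{C}^P:F(c)\neq 0}$ is the complement of a proper algebraic hypersurface and is therefore open and dense in the Euclidean topology — precisely the set we want. So the theorem reduces to showing $F\not\equiv 0$, equivalently that \emph{each} maximal minor $p_S$ is a nonzero polynomial; equivalently still, that for each selection of $P$ time-frequency shifts $(q_1,p_1),\dots,(q_P,p_P)$ there exists at least one window $c$ making $\set{\mathcal{M}^{p_j}\mathcal{T}^{q_j}c}$ linearly independent.

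For prime $P$ this nonvanishing is already available: Theorem~\ref{thm:genericfullsparkprime} gives the stronger fact that generically every minor of $G(c)$ of every size is nonzero, which rests on Chebotarev's theorem that each minor of the Fourier matrix $W_P$ is nonzero when $P$ is prime. The new content is therefore composite $P$, and here lies the obstruction: Chebotarev's theorem fails for composite $P$ (the Fourier matrix then has vanishing minors), so one cannot certify $p_S\not\equiv 0$ by inspecting roots of unity minor-by-minor.

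To attack $p_S\not\equiv 0$ I would expand the minor by the Leibniz formula,
\[ p_S(c)=\sum_{\sigma\in\mathrm{Sym}(P)}\sign(\sigma)\prod_{n=0}^{P-1}\omega^{\,p_{\sigma(n)}n}\,c_{(n-q_{\sigma(n)})\bmod P}, \]
exhibiting $p_S$ as a sum of monomials in $c$ with root-of-unity coefficients. The natural first attempt is a monomial specialization $c_j\mapsto z^{a_j}$ with a super-increasing, $\mathbb{Q}$-linearly independent exponent vector $(a_j)$, so that distinct multisets of $c$-indices receive distinct $z$-degrees, in the hope that a single multidegree survives. The difficulty is that within one fixed multidegree several permutations contribute, and their combined coefficient $\sum_{\sigma}\sign(\sigma)\prod_n\omega^{p_{\sigma(n)}n}$ may cancel — and this cancellation is exactly the composite-$P$ phenomenon. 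Overcoming it is the heart of Malikiosis's argument \cite{M13}: one promising route is to factor $P$ into prime powers, use a Chinese-Remainder factorization of $W_P$ to reduce to prime-power moduli, and then treat $P=p^k$ by a $p$-adic valuation / Newton-polygon analysis that isolates a term of minimal valuation whose coefficient cannot vanish.

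The step I expect to dominate the work is this last one: producing, for \emph{every} selection $S$ of $P$ columns, a valuation or specialization that isolates a single non-cancelling dominant term, \emph{uniformly} in $S$. Because the shift data $(q_j,p_j)$ — and hence the support pattern of the chosen submatrix — varies with $S$, the dominant permutation changes from minor to minor, so the argument must be organized so that some valuation witnesses nonvanishing in each case. Once each $p_S\not\equiv 0$ is secured, the genericity packaging of the first paragraph delivers the dense open set of full-Spark windows with no further work.
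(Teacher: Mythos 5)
Your genericity packaging is correct and matches the paper's framing: full Spark is equivalent to the simultaneous nonvanishing of all $\binom{P^2}{P}$ maximal minors, each minor is a homogeneous polynomial in $c_0,\dots,c_{P-1}$, and once every such polynomial is known to be not identically zero, the intersection of the complements of their zero sets is open and dense in $\mathbb{C}^P$. You also correctly isolate where the difficulty lies: for composite $P$, Chebotarev's theorem on minors of $W_P$ is unavailable, so nonvanishing cannot be certified entry-by-entry through the Fourier matrix as in Theorem~\ref{thm:genericfullsparkprime}.

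However, the proposal stops exactly where the theorem begins. Your third paragraph does not prove $p_S\not\equiv 0$ for every column selection $S$; it only names this as ``the heart of Malikiosis's argument'' and gestures at a speculative route (Chinese-Remainder factorization of $W_P$ plus a $p$-adic valuation or Newton-polygon analysis for $P=p^k$) that is never developed and is not, in fact, how the cancellation problem is resolved. The paper's proof works with the Lagrange expansion \eqref{eqn:determinant}, $\det(M)=\sum_{\sigma\in S_P/\Gamma}a_\sigma C^\sigma$, obtained by grouping the chosen columns according to which block $D_\kappa W_P$ they come from, and exhibits one specific monomial --- the consecutive-index (CI) monomial $C^I$ attached to the identity permutation, i.e.\ to the trivial partition $A=(A_0,\dots,A_{P-1})$ --- with two properties. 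First, its coefficient $a_I$ is a product of the determinants $\det(M_\kappa(A_\kappa))$, and because each $A_\kappa$ consists of \emph{consecutive} row indices, each such determinant is a monomial times a Vandermonde determinant with distinct nodes, hence nonzero for every $P$, prime or composite, with no appeal to Chebotarev. Second, after a suitable cyclic relabeling of the variables, $C^I$ is the \emph{unique} minimizer among all $C^\sigma$ of the functional $\Lambda(C^\sigma)=\sum_{i} i^2\alpha_i$ (where $\alpha_i$ is the exponent of $c_i$), a fact proved via the Hardy--Littlewood--P\'olya rearrangement inequality; uniqueness means no other term of the expansion can produce the same monomial, so the coefficient $a_I$ cannot be cancelled. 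This combinatorial choice of a distinguished monomial, valid uniformly over all submatrices $M$, is precisely the mechanism that defeats the cancellation you worry about --- not a valuation isolating a dominant term. Without something playing this role, your argument establishes only the (easy) reduction to polynomial nonvanishing, not the theorem itself.
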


The goal of this subsection is to outline the proof of Theorems~\ref{thm:genericfullsparkprime} and \ref{thm:genericfullspark}.
We will adopt some of the following notation and terminology of \cite{M13}.

Let $P\in\N$ and let $M$ be an $P\times P$ submatrix of $G(c)$.  For $0\le \kappa<P$ let $\ell_\kappa$ be the number of columns of $M$
chosen from the submatrix $D_\kappa W_P$ of (\ref{def:fullgabormatrix}).  While the vector $\ell=(\ell_\kappa)_{\kappa=0}^{P-1}$
does not determine $M$ uniquely, it describes the matrix $M$ sufficiently well for our purposes.
Define $M_\kappa$ to be the $P\times\ell_\kappa$ matrix consisting of those columns of $M$ chosen from $D_\kappa W_P$.
Given the {\em ordered partition} $B=(B_0,\,B_1,\,\dots,\,B_{P-1})$
where $\set{B_0,\,B_1,\,\dots,\,B_{P-1}}$
forms a partition of $\set{0,\,\dots,\,P-1}$, and where for each $0\le\kappa<P$, $|B_\kappa|=\ell_\kappa$, let $M_\kappa(B_\kappa)$
be the $\ell_\kappa\times\ell_\kappa$ submatrix of $M_\kappa$ whose rows belong to $B_\kappa$.  Then
$\det(M) = \prod \det(M_\kappa(B_\kappa))$ where the product is taken over all such ordered partitions $B$.
This formula is called the {\em Lagrange expansion} of the determinant.

Each ordered partition $B$ corresponds to a permutation on $\Z_P$ as follows.
Define the {\em trivial partition} $A=(A_0,\,A_1,\,\dots,\,A_{P-1})$ by
$$A_j = \set{\sum_{i=0}^{j-1}\ell_i, \big(\sum_{i=0}^{j-1}\ell_i\big)+1,\,\dots,\,\big(\sum_{i=0}^j\ell_i\big)-1}$$
so that $A_0=[0,\ell_0-1]$, $A_1=[\ell_0, \ell_0+\ell_1+1]$, $\dots$, $A_{P-1}=[\ell_0+\,\cdots\,+\ell_{P-2},P-1]$.  Then given
$B=(B_0,\,B_1,\,\dots,\,B_{P-1})$ there is a permutation $\sigma\in S_P$ such that $\sigma(A_\kappa)=B_\kappa$ for all $\kappa$.  This $\sigma$
is unique up to permutations that preserve $A$, that is, up to $\tau\in S_P$ such that $\tau(A_\kappa)=A_\kappa$ for all $\kappa$.  Call such
a permutation {\em trivial} and denote by $\Gamma$ the subgroup of $S_P$ consisting of all trivial permutations.  Then the ordered partitions
$B$ of $\Z_P$ can be indexed by equivalence classes of permutations $\sigma\in S_P/\Gamma$.

The key observation is that $\det(M)$ is a homogeneous polynomial in the $P$ variables $c_0,\,c_1,\,\dots,\,c_{P-1}$ and we can write
\begin{equation}\label{eqn:determinant}
\det(M) = \sum_{\sigma\in S_P/\Gamma} a_\sigma\,C^\sigma
\end{equation}
where the monomial $C^\sigma$ is given by
$$C^\sigma = \prod_{\kappa=0}^{P-1}\,\prod_{j\in \sigma(A_\kappa)} c_{(j-\kappa)(mod\ P)}.$$
If it can be shown that this polynomial does not vanish identically then we can choose a dense, open subset of $c\in\mathbb C^P$ for which $\det(M)\ne 0$.
Since there are only finitely many $P\times P$ submatrices of $G(c)$ it follows that there is a dense, open subset of $c$ for which $\det(M)\ne 0$
for all $M$, and we conclude that, for these $c$, $G(c)$ has full Spark.

Following \cite{M13}, we say that a monomial $C^{\sigma_0}$ {\em appears uniquely} in (\ref{eqn:determinant}) if for every
$\sigma\in S_P/\Gamma$ such that $\sigma\ne\sigma_0$, $C^{\sigma}\ne C^{\sigma_0}$.
Therefore, in order to show that the polynomial (\ref{eqn:determinant}) does not vanish identically, it is sufficient to show that
(1) there is a monomial $C^\sigma$ that appears uniquely in (\ref{eqn:determinant}) and (2) the coefficient $a_\sigma$ of this monomial does not vanish.

Obviously, whether or not (\ref{eqn:determinant}) vanishes identically does not depend on how the variables $c_i$ are labelled.  More specifically,
if the variables are renamed by a cyclical shift of the indices, viz., $c_i \mapsto c_{(i+\gamma)mod\ P}$ for some $0\le\gamma<P$, then
$$\det(M)(c_{\gamma+1},\,\dots,\,c_{P-1},\,c_0,\,\dots,\,c_\gamma) = \pm\,\det(M')(c_0,\,\dots,\,c_{P-1})$$
where $M'$ is an $P\times P$ submatrix described by the vector
$$\ell'=(\ell_{\gamma+1},\,\dots,\,\ell_{P-1},\,\ell_0,\,\dots,\,\ell_\gamma).$$

\subsubsection{The lowest index monomial}

In \cite{LPW05}, a monomial referred to in \cite{M13} as the {\em lowest index (LI) monomial} is defined that has the required properties when
$P$ is prime.  In order to see this, note first that each coefficient $a_\sigma$ in  the sum (\ref{eqn:determinant}) is the product of minors
of the Fourier matrix $W_P$ and since $P$ is prime, Chebotarev's Theorem says that such minors do not vanish \cite{SL96}.  More specifically,
$$a_\sigma\,C^\sigma = \pm\,\prod_{\kappa=0}^{P-1} \det(M_\kappa(\sigma(A_\kappa)))$$
and for each $\kappa$, the columns of $M_\kappa$ are columns of $W_P$ where each row has been multiplied by the same
variable $c_j$ and $M_\kappa(\sigma(A_\kappa))$
is a square matrix formed by choosing $\ell_\kappa$ rows of $M_\kappa$.  Hence for each $\kappa$, $\det(M_\kappa(\sigma(A_\kappa)))$ is a
monomial in $c$ with coefficients a constant multiple of a minor of $W_P$.  Since $a_\sigma$ is the product of those minors, it does not vanish.

Note moreover that each submatrix $M_\kappa(\sigma(A_\kappa))$ is an $\ell_\kappa\times\ell_\kappa$ matrix, so that
$\det(M_\kappa(\sigma(A_\kappa)))$ is the sum of a multiple of the product of $\ell_\kappa!$ diagonals of $M_\kappa(\sigma(A_\kappa))$.  Hence
$a_\sigma\,C^\sigma$ is the sum of multiples of the product of $\prod_{\kappa=0}^{P-1} \ell_\kappa!$ generalized diagonals of $M$.

We define the LI monomial as in \cite{LPW05} as follows.  If $M$ is $1\times 1$, then $\det(M)$ is a multiple of a single variable $c_j$
and we define the LI monomial, $p_M$ by $p_M=c_j$. If $M$ is $d\times d$, let $c_j$ be the variable of lowest index appearing in $M$.
Choose any entry of $M$ in which $c_j$ appears, eliminate the row and column containing that entry,
and call the remaining $(d-1)\times(d-1)$ matrix $M'$.  Define $p_M = c_j\,p_{M'}$.  It is easy to see that the monomial $p_M$ is independent
of the entry of $M$ chosen at each step.  In order to show that the LI monomial appears uniquely in (\ref{eqn:determinant}), we observe as in
\cite{LPW05} that the number of diagonals in $\det(M)$ that correspond to the LI monomial is $\prod_{\kappa=0}^{P-1}\ell_\kappa!$.
Since this is also the number of generalized diagonals appearing in the calculation of each $\det(M_\kappa(\sigma(A_\kappa)))$, it follows
that this monomial appears only once.  The details of the argument can be found in Section~\ref{section:prime}.
Note that because $P$ is prime, this argument is valid no matter how large the matrix $M$ is.  In other words, $M$ does not have
to be an $P\times P$ submatrix in order for the result to hold.  Consequently, given $k<P$ and $M$ an arbitrary $P\times k$ submatrix of $G(c)$,
we can form the $k\times k$ matrix $M_0$ by choosing $k$ rows of $M$ in such a way that the LI monomial of $M_0$ contains at most only the
variables $c_0,\,\dots,\,c_{k-1}$.  This observation leads to the following theorem for matrices with arbitrary Spark.

\begin{theorem}\label{thm:smallsparksampta13}\cite{PW14}
If $P\in\N$ is prime, and $0<k< P$, there exists an open, dense subset of $c\in\C^k\times\set{0}\subseteq \mathbb C^P$
with the property that $Spark(G(c))=k+1$.
\end{theorem}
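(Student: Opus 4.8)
The plan is to prove the two bounds $Spark(G(c)) \le k+1$ and $Spark(G(c)) \ge k+1$ separately; the first holds for \emph{every} $c \in \mathbb{C}^k \times \set{0}$, while the second is what forces us onto an open, dense subset. For the upper bound I would look only at the block $D_0 W_P$ of $G(c)$. Since $D_0 = \operatorname{diag}(c)$, the $m$th column of this block is the modulation $\mathcal{M}^m c$, whose entry in row $n$ is $c_n\,e^{2\pi i nm/P}$. Because $c$ is supported in $\set{0,\dots,k-1}$, all $P$ of these columns lie in the $k$-dimensional coordinate subspace $\set{x \in \mathbb{C}^P \colon \supp x \subseteq \set{0,\dots,k-1}}$. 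As $k<P$, this block contains $k+1$ columns, and any $k+1$ vectors in a $k$-dimensional space are dependent; hence $Spark(G(c)) \le k+1$ unconditionally, with no genericity needed.

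For the lower bound I must show that, generically in $c$, every choice of $k$ columns of $G(c)$ is linearly independent. I would fix $k$ columns, assemble them into a $P\times k$ submatrix $M$ with $\ell_\kappa$ columns drawn from block $D_\kappa W_P$ (so $\sum_\kappa \ell_\kappa = k$), and note that independence is equivalent to some $k\times k$ minor of $M$ being nonzero. Each such minor is $\det(M_0)$ for a $k\times k$ matrix $M_0$ obtained by selecting $k$ rows, and by the Lagrange expansion it is a homogeneous degree-$k$ polynomial in $c_0,\dots,c_{P-1}$. Since I am restricting to $c_k = \dots = c_{P-1} = 0$, I need a minor whose polynomial is not killed by this restriction, i.e.\ one containing a monomial in $c_0,\dots,c_{k-1}$ alone. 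Here I would invoke the lowest-index (LI) monomial machinery exactly as developed above for prime $P$: the LI monomial of $M_0$ appears uniquely in its Lagrange expansion, and its coefficient is, up to sign, a product of minors of the Fourier matrix $W_P$, each nonvanishing by Chebotarev's theorem. Thus $\det(M_0)$ is a nonzero polynomial provided its LI monomial survives the restriction $c_k = \dots = 0$.

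The step I expect to be the main obstacle is the combinatorial row selection that guarantees this survival: given the distribution $(\ell_\kappa)$ of the chosen columns among the blocks, one must pick the $k$ rows of $M$ so that the LI monomial of the resulting $M_0$ involves only $c_0,\dots,c_{k-1}$. The relevant bookkeeping is that the entry of a block-$\kappa$ column in row $n$ is a root-of-unity multiple of $c_{(n-\kappa)\bmod P}$, so assigning row $n\equiv \kappa + j \pmod P$ to a block-$\kappa$ column ``reads off'' the variable $c_j$; I would realize a low-index monomial through a system of distinct representatives (a Hall-type matching) and then verify that the greedy LI construction never escapes the indices $0,\dots,k-1$. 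This is precisely the observation recorded just before the theorem, and it is exactly here that primality of $P$ is used twice: to make LI-monomial uniqueness valid for a matrix of any size (not merely $P\times P$), and to keep the $W_P$-minors nonzero.

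With such a minor secured, $\det(M_0)$ restricted to $\mathbb{C}^k\times\set{0}$ is a nonzero polynomial in $c_0,\dots,c_{k-1}$, so it vanishes only on a proper algebraic subvariety and is nonzero on an open, dense subset of $\mathbb{C}^k$. Finally, since $G(c)$ has only finitely many collections of $k$ columns, I would intersect the finitely many open, dense sets so obtained to get a single open, dense $U\subseteq\mathbb{C}^k$ on which every $k$ columns of $G(c)$ are independent, giving $Spark(G(c))\ge k+1$ there. Together with the universal upper bound this yields $Spark(G(c)) = k+1$ on $U$, as claimed.
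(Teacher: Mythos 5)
Your proposal is correct and takes essentially the same route as the paper: the paper likewise deduces the theorem from the LI-monomial machinery (valid for square submatrices of any size because, for prime $P$, Chebotarev's theorem makes every minor of $W_P$ nonzero) together with the assertion that the $k$ rows of a $P\times k$ submatrix can be chosen so that the LI monomial of the resulting $k\times k$ matrix involves only $c_0,\dots,c_{k-1}$, and the paper, too, leaves that row-selection step as an assertion (with details deferred to the cited references), before concluding by genericity and a finite intersection of open dense sets exactly as you do. One caution on the step you correctly flag as the main obstacle: an arbitrary Hall-type matching does not suffice, since for a poorly chosen row set the greedy LI construction can pick up a variable $c_j$ with $j\ge k$ even when a low-index generalized diagonal exists (e.g., with blocks $\kappa=0,1$ and rows $1,2$ the LI monomial is a multiple of $c_0c_2$ rather than $c_1^2$), so the verification you postpone is genuinely where the remaining work lies; your explicit upper bound via the common support of the columns of $D_0W_P$ is a worthwhile detail the paper leaves implicit.
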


This result has implications for relating the capacity of a time-variant communication channel to the area of the spreading support, see
\cite{PW14}.

\subsubsection{The consecutive index monomial}

In \cite{M13}, a monomial referred to as the {\em consecutive index (CI) monomial} is defined that has the required properties for any $P\in\N$.
The CI monomial, $C^I$, is defined as the monomial corresponding to the identity permutation in $S_P/\Gamma$, that is,
to the equivalence class of the trivial partition $A=(A_0,\,A_1,\,\dots,\,A_{P-1})$.  Hence
$$C^I = \prod_{\kappa=0}^{P-1}\,\prod_{j\in A_\kappa} c_{(j-\kappa)mod\ P}.$$
For each $\kappa$, the monomial appearing in
$\det(M_\kappa(A_\kappa))$, $\prod_{j\in A_\kappa} c_{(j-\kappa)mod\ P}$, consists of a product of $\ell_k$ variables $c_j$ with consecutive indices
modulo $P$.

That $a_I\ne 0$ follows from the observation that for each $\kappa$, $\det(M_\kappa(A_\kappa))$ is a monomial whose coefficient
is a nonzero multiple of a Vandermonde determinant and hence does not vanish (for details, see \cite{M13}).
The proof that $C^I$ appears uniquely in (\ref{eqn:determinant})
amounts to showing that, with respect to an appropriate cyclical renaming of the variables $c_i$, the $CI$ monomial uniquely minimizes
the quantity $\Lambda(C^\sigma)=\sum_{i=0}^{P-1} i^2\,\alpha_i$, where $\alpha_i$ is the exponent of the variable $c_i$ in $C^\sigma$.
An abbreviated version of the proof of this result as it appears in \cite{M13} is given in Section~\ref{sec:malikiosis}.

As a final observation, we quote the following corollary that provides an explicit construction of a unimodular vector $c$
such that $G(c)$ has full Spark.

\begin{corollary}\cite{M13}
Let $\zeta=e^{2\pi i/(P-1)^4}$ or any other primitive root of unity of order $(P-1)^4$ where $P\ge 4$.  Then the vector
$$c=(1,\,\zeta,\,\zeta^4,\,\zeta^9,\,\dots,\,\zeta^{(P-1)^2})$$
generates a Gabor frame for which $G(c)$ has full Spark.
\end{corollary}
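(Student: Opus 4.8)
The plan is to reduce the full-Spark claim to the nonvanishing of every $P{\times}P$ minor of $G(c)$ at the chirp window $c_j=\zeta^{j^2}$, and then to read off that nonvanishing from the structure of the determinant polynomial already developed for the CI monomial. Fix a $P{\times}P$ submatrix $M$ described by a vector $\ell=(\ell_\kappa)$ and recall the Lagrange expansion $\det(M)=\sum_{\sigma\in S_P/\Gamma} a_\sigma\,C^\sigma$. Two facts from the CI analysis are the starting point: the coefficient $a_I$ of the CI monomial is a nonzero product of Vandermonde determinants, and, after the appropriate cyclic relabelling of the variables, the CI monomial is the \emph{unique} minimizer of $\Lambda(C^\sigma)=\sum_{i=0}^{P-1} i^2\alpha_i$. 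Each $a_\sigma$ is a product of minors of the Fourier matrix $W_P$, hence lies in the cyclotomic field $\Q(\omega)$ with $\omega=e^{2\pi i/P}$.

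Next I would substitute $c_j=\zeta^{j^2}$. Since this sends the monomial $C^\sigma=\prod_i c_i^{\alpha_i}$ to $\zeta^{\sum_i i^2\alpha_i}=\zeta^{\Lambda(C^\sigma)}$, the determinant becomes the value $p(\zeta)$ of the single-variable polynomial
\begin{equation*}
p(x)=\sum_{e\ge 0}\Big(\sum_{\Lambda(C^\sigma)=e} a_\sigma\Big)\,x^{e}\ \in\ \Q(\omega)[x].
\end{equation*}
Because each monomial has total degree $P$ and the weights $i^2$ never exceed $(P-1)^2$, one has $\deg p\le P(P-1)^2$. The unique-minimizer property now does the essential work: the lowest-order coefficient of $p$ is exactly $a_I\ne 0$ (no other $C^\sigma$ shares the minimal value of $\Lambda$), so $p\not\equiv 0$ even though cancellation may occur among the higher-order terms.

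It remains to rule out that the specific $\zeta$ is a root of this nonzero polynomial, and this is where the exponent $(P-1)^4$ is forced. Since $\gcd(P,(P-1)^4)=1$, the cyclotomic fields $\Q(\omega)=\Q(e^{2\pi i/P})$ and $\Q(\zeta)=\Q(e^{2\pi i/(P-1)^4})$ meet only in $\Q$, so the minimal polynomial of $\zeta$ over the coefficient field $\Q(\omega)$ is the cyclotomic polynomial $\Phi_{(P-1)^4}$, of degree $\varphi((P-1)^4)=(P-1)^3\varphi(P-1)$. The elementary inequality $P(P-1)^2<(P-1)^3\varphi(P-1)$, which holds precisely because $\varphi(P-1)\ge 2$ for $P\ge 4$, shows that $\deg p$ lies strictly below the degree of this minimal polynomial. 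A nonzero polynomial over $\Q(\omega)$ of degree smaller than $[\Q(\omega)(\zeta):\Q(\omega)]$ cannot vanish at $\zeta$, so $\det(M)=p(\zeta)\ne 0$; as $M$ was arbitrary, $G(c)$ has full Spark. This also explains the choice of exponent: replacing $(P-1)^4$ by a smaller power of $P-1$ would make $\varphi$ too small to clear the bound $P(P-1)^2$.

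The main obstacle is the single phrase ``after the appropriate cyclic relabelling'' above: the unique-minimizer property is proved in a normalized labelling chosen per submatrix, whereas the chirp substitution fixes the labelling once and for all. To make the argument airtight one must check that the exponent-separation survives in the fixed labelling --- i.e.\ that for \emph{every} minor the minimal $\Lambda$-value is still attained uniquely --- by combining the relabelling normalization of the generic-full-Spark proof with the invariance of the full-Spark property under translation and modulation of the window $c$. Handling this bookkeeping uniformly over all $P{\times}P$ submatrices, rather than the number-theoretic degree count, is the delicate step.
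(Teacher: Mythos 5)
Your overall skeleton is the right one, and it is indeed the route behind the result quoted from \cite{M13}: after the substitution $c_j=\zeta^{j^2}$ every monomial in (\ref{eqn:determinant}) becomes $\zeta^{\Lambda(C^\sigma)}$, the exponents are at most $P(P-1)^2$, the coefficients $a_\sigma$ lie in $\Q(\omega)$ with $\omega=e^{2\pi i/P}$, and since $\gcd\bigl(P,(P-1)^4\bigr)=1$ the element $\zeta$ has degree $\varphi\bigl((P-1)^4\bigr)=(P-1)^3\varphi(P-1)>P(P-1)^2$ over $\Q(\omega)$ when $P\ge 4$; your field-theoretic finish, and your explanation of why the exponent must be $(P-1)^4$ rather than a smaller power, are both correct. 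The genuine gap is precisely the step you flag and then defer: the claim that the lowest-order coefficient of $p$ is $a_I$. Theorem~\ref{thm:genericfullspark}, as proved in Section~\ref{sec:malikiosis}, gives unique minimality of the CI monomial only \emph{after} a cyclic renaming of the variables chosen separately for each submatrix, and that hypothesis cannot be dropped. Concretely, for $P=3$ take the submatrix with one column from $D_0W_3$ and two from $D_2W_3$, so $\ell=(1,0,2)$, which violates the normalization $\sum_{i<\kappa}\ell_i\ge\kappa$: its monomials are $c_0^2c_2$, $c_0c_1^2$, $c_1c_2^2$ with $\Lambda$-values $4$, $2$, $9$, and the CI monomial $c_0^2c_2$ is \emph{not} the minimizer. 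For such a submatrix the constant term of your $p$ is the coefficient of some other monomial, and nothing proved in the paper shows it is nonzero: it is a product of minors of $W_P$ that the Vandermonde argument (valid only for consecutive row sets) does not cover, and for composite $P$ general minors of $W_P$ can vanish.

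Moreover, the repair you propose cannot close this gap, because it is circular. Renaming the variables by a cyclic shift $\gamma$ is \emph{the same operation} as replacing the window $c^*=(\zeta^{j^2})_j$ by its translate $\mathcal{T}^\gamma c^*$: the relabelling identity converts a minor of an arbitrary submatrix evaluated at $c^*$ into a minor of a normalized submatrix evaluated at $\mathcal{T}^\gamma c^*$, while translation-invariance of full spark converts the latter straight back into the former, so invoking both returns you to where you started. Nor is $\mathcal{T}^\gamma c^*$ reachable from $c^*$ by operations compatible with the chirp evaluation: $(\mathcal{T}^\gamma c^*)_j=\zeta^{((j-\gamma)\bmod P)^2}$ equals $\zeta^{\gamma^2}\,\zeta^{-2\gamma j}\,\zeta^{j^2}$ only up to wrap-around factors, and neither $\zeta^{-2\gamma}$ nor those factors are powers of $\omega$, so the translated window is not a scalar times a Gabor modulation of $c^*$. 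What the proof actually needs --- and this is the real mathematical content of the explicit construction, not bookkeeping --- is a strengthened uniqueness lemma valid for \emph{every} submatrix in the fixed labelling, equivalently for every shifted functional $\Lambda_s(C^\sigma)=\sum_i((i-s)\bmod P)^2\alpha_i$ on normalized submatrices: the minimum must be attained by a single $\sigma$, and that minimizing $\sigma$ must assign to each block a cyclically consecutive set of rows, so that its coefficient is again a nonzero product of Vandermonde-type minors even for composite $P$. With such a lemma in hand, your degree count finishes the argument exactly as you wrote it; without it, the proof does not go through.
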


\section{Generalizations of operator sampling to higher dimensions}\label{section:higherdimensional}

The operator representations \eqref{eqn:operator1}, \eqref{eqn:operator2}, and \eqref{eqn:operator3} hold verbatim for higher dimensional variables $x,\xi,t,\nu\in \R^d$. In this section, we address the identifiability of
 \begin{align*}
OPW (S) & =  \{H\in \mathcal L (L^2(\R^d), L^2(\R^d)) \colon
         \supp\mathcal F_s \sigma_H\subseteq S,\,\norm{\sigma_H}_{L^2}<\infty\}
\end{align*}
where $S\subseteq \R^{2d}$.

Looking at the components of the multidimensional variables separately, Theorem~\ref{thm:main-simple} easily generalizes as follows.

\begin{theorem}\label{thm:main-simple-higher}
  For $H\in OPW \big(\prod_{\ell=1}^d [0, T_\ell] {\times}\prod_{\ell=1}^d[-  \Omega_\ell / 2, \Omega_\ell / 2]\big)$ with  $T_\ell\Omega_\ell{\leq} 1$, $\ell=1,\ldots,d$, we have
  \begin{align*}
    \|H\sum_{k_1\in\Z}\ldots \sum_{k_d\in\Z}\delta_{(k_1T_1,\ldots, k_dT_d)}\|_{L^2(\R)}=T_1\ldots T_d \|\sigma_H\|_{L^2},\notag %\label{eqn:operatorstability-simple}
  \end{align*} and  $H$ can be reconstructed by means of
  \begin{align*}
    \kappa_H(x+t,x)&=\chi_{\prod_{\ell=1}^d [0, T_\ell]}(t)\sum_{n_1\in\Z}\ldots \sum_{n_d\in\Z} \\ \quad &\big(H \sum_{k_1\in\Z}\ldots \sum_{k_d\in\Z}\delta_{(k_1T_1,\ldots, k_dT_d)} \big)(t+(n_1T_1,\ldots,n_d T_d) \\ &\quad
     \frac{\sin(\pi T_1 (x_1-n_1))}{\pi T_1 (x_1-n_1)} \ldots \frac{\sin(\pi T_d (x_d-n_d))}{\pi T_d (x_d-n_d)} \notag%\label{eqn:operatorreconstruction-simple}
  \end{align*}
  with convergence in the $L^2$  norm.
\end{theorem}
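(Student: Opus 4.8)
The plan is to exploit the fact that every object in the statement factors over the $d$ coordinate directions, so that the one-dimensional argument behind Theorem~\ref{thm:main-simple} applies coordinate by coordinate. Writing $D_Tn=(n_1T_1,\dots,n_dT_d)$ for $n\in\Z^d$, I would first record the relevant factorizations: the identifier splits as a tensor product, $\sum_{k\in\Z^d}\delta_{D_Tk}=\bigotimes_{\ell=1}^d\big(\sum_{k\in\Z}\delta_{kT_\ell}\big)$; the support hypothesis is the product box $\prod_\ell[0,T_\ell]\times\prod_\ell[-\Omega_\ell/2,\Omega_\ell/2]$; and $\mathcal T_t\mathcal M_\nu$ splits as a product of one-dimensional time--frequency shifts. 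These factorizations are precisely what make the phrase ``looking at the components separately'' rigorous.

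First I would establish the non-interference step. Using the kernel relation $\kappa_H(x,y)=h_H(x,x-y)$ together with $h_H(x,s)=\int\eta_H(s,\nu)\,e^{2\pi i\nu\cdot(x-s)}\,d\nu$ from \eqref{eqn:symbolrelations}, the response is $Hg(x)=\sum_{k\in\Z^d}\kappa_H(x,D_Tk)=\sum_k h_H(x,x-D_Tk)$. Since $\supp_t\eta_H\subseteq\prod_\ell[0,T_\ell]$ forces $h_H(x,s)=0$ unless each $s_\ell\in[0,T_\ell]$, evaluating at $x=t+D_Tn$ with $t$ in the open box $\prod_\ell(0,T_\ell)$ makes the $\ell$-th slot equal to $t_\ell+(n_\ell-k_\ell)T_\ell$, which lies in $[0,T_\ell]$ only for $k_\ell=n_\ell$. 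Hence only the term $k=n$ survives and $Hg(t+D_Tn)=h_H(t+D_Tn,t)=\kappa_H(t+D_Tn,D_Tn)$, which is exactly the value of $x\mapsto\kappa_H(x+t,x)$ at the grid point $x=D_Tn$. The cell boundaries where some $t_\ell\in\{0,T_\ell\}$ form a null set and are irrelevant to the $L^2$ statements.

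Next I would carry out the reconstruction. For fixed $t$ the function $x\mapsto\kappa_H(x+t,x)=h_H(x+t,t)=\int\eta_H(t,\nu)\,e^{2\pi i\nu\cdot x}\,d\nu$ is band-limited in $x\in\R^d$ to the product box $\prod_\ell[-\Omega_\ell/2,\Omega_\ell/2]$. Because this band region is a Cartesian product and each coordinate satisfies the Nyquist-type condition $T_\ell\Omega_\ell\le1$, the $d$-dimensional Shannon sampling theorem for a box-band-limited function factors into a tensor product of one-dimensional cardinal series on the grids $T_\ell\Z$. Feeding in the samples $\kappa_H(D_Tn+t,D_Tn)=Hg(t+D_Tn)$ from the previous step yields exactly the product-of-sinc reconstruction in the statement, with unconditional convergence in $L^2$ and, for each $t$, uniform convergence in $x$ inherited from the one-dimensional case.

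Finally, the norm identity follows from the Parseval relation attached to this tensor cardinal series: for a function band-limited to $\prod_\ell[-\Omega_\ell/2,\Omega_\ell/2]$ and sampled on the shifted lattice $t+D_T\Z^d$, the squared samples sum to a fixed multiple of its $L^2$ norm in $x$, the multiple being the product of the one-dimensional constants; integrating over $t\in\prod_\ell[0,T_\ell]$, using that $h_H$ vanishes outside this box, and applying Plancherel in $t\leftrightarrow\xi$ to pass from $\norm{h_H}_{L^2}$ to $\norm{\sigma_H}_{L^2}$ produces the stated factor $T_1\cdots T_d$. Equivalently, one may simply invoke Theorem~\ref{thm:main-simple} fiberwise in each coordinate and multiply the resulting constants. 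The argument is essentially bookkeeping; the only genuinely multidimensional input---and the step I would treat most carefully---is the claim that both the sampling reconstruction and its Parseval relation for a product band-region split as tensor products of their one-dimensional counterparts, together with the justification for interchanging the $\Z^d$-summation with integration, which rests on the $\ell^2$-summability of the samples guaranteed by band-limitation.
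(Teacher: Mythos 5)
Your proposal is correct and takes essentially the same route as the paper: the paper gives no detailed argument at all, asserting only that Theorem~\ref{thm:main-simple} ``easily generalizes'' by ``looking at the components of the multidimensional variables separately,'' and your tensor-product factorization of the delta-train identifier, the coordinatewise non-interference step, the box-band-limited cardinal series, and the fiberwise Parseval/Plancherel identity are exactly that coordinatewise reduction, carried out in full. If anything, your write-up supplies the details (including the grid-cell decomposition and the interchange of sum and integral) that the paper leaves implicit.
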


In the following, we address the situation where $S$ is not contained in a set  $\prod_{\ell=1}^d [0, T_\ell] {\times}\prod_{\ell=1}^d[-  \Omega_\ell / 2, \Omega_\ell / 2]\big)$ with  $T_\ell\Omega_\ell{\leq} 1$, $\ell=1,\ldots, d$. For example, $S=[0,1]\times [0,2]\times[0,\frac 1 4]\times[0,1]\subseteq \R^4$ of volume $\frac 1 2$ is not covered by Theorem~\ref{thm:main-simple-higher}.

To give a higher dimensional variant of Theorem \ref{thm:reconstruction}, we shall denote pointwise products of finite and infinite length vectors $k$ and $T$ by $k\pw T$, that is, $k\pw T=(k_1T_1,\ldots,k_d T_d)$ for $k,T\in \mathbb C^d$. Similarly, $k/ T=(k_1/T_1,\ldots,k_d /T_d)$.

\begin{theorem}\label{thm:reconstruction-higher}
If $S\subseteq (0,\infty)^d {\times}\R^d$ is compact with $\abs{S}<1$ then $OPW (S)$ is identifiable.
Specifically, there exist $T_1,\ldots,T_d>0$ and pairwise relatively prime natural numbers $P_1,\ldots, P_d$ such that
$$S\subseteq \prod_{\ell=1}^d[0,P_\ell T_\ell]{\times}\prod_{\ell=1}^d [-1/(2T_\ell), 1/(2T_\ell)],$$
and a sequence $c=(c_n) \in \ell^\infty(\Z^d)$ which is $P_\ell$ periodic in the $\ell$-th component $n_\ell$
such that $g=\sum_{n\in\Z^d} c_n\,\delta_{n \pw T}$ identifies $OPW^{2}(S)$.
In fact, for such $g$ there exists for each $j\in J=\prod_{\ell =1}^d \{0,1,\ldots,P_\ell{-}1\}$ a sequences
$b_j=(b_{j,k})$ which is $P_\ell$ periodic in $k_\ell$ and $2d$-tuples $(q_j,m_j) \in J\times J$ with
\begin{align}
h(x,t)  &=  e^{-\pi i \sum_{\ell=1}^d t_\ell/T_\ell}\sum_{k\in\Z^d}\sum_{j\in J} \big[b_{j,k}\,Hg(t - (q_j-k)\pw T) \nonumber  \\
        & \hskip-.25in e^{2\pi i m_j\cdot ( (x-t)/P\pw T)}\,\phi((x-t)+(q_j-k)\pw T)\, r(t-q_j \pw T)\big]. \label{eqn:reconstruction-higher}
\end{align}
The functions
  $r,\phi\in {\cal S}(\R^d)$ are assumed to satisfy
\begin{equation}\label{eq:r_phi_2}
\sum_{k\in\Z^d} r(t + k \pw T) = 1 = \sum_{n\in\Z^d} \widehat{\phi}(\gamma + (n/P\pw T),
\end{equation}
 and $r(t)\widehat{\phi}(\gamma)$ is supported in a neighborhood of $\prod_{\ell=1}^d [0,T_\ell]{\times}\prod_{\ell=1}^d[0,1/P_\ell T_\ell]$. The
sum in \eqref{eqn:reconstruction-higher} converges unconditionally in $L^{2}$ and for each $t$ uniformly in $x$.
\end{theorem}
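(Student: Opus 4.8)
The plan is to lift the one-dimensional argument behind Theorem~\ref{thm:reconstruction} to $\R^d$ by tensoring in each coordinate and then collapsing the resulting $d$-fold product structure to a single cyclic group by means of the Chinese Remainder Theorem; this last collapse is exactly what forces the $P_\ell$ to be pairwise relatively prime and is what allows Malikiosis's theorem (Theorem~\ref{thm:genericfullspark}) to be applied in its full, non-prime generality. First I would fix the rectification. Since $S$ is compact with $\abs{S}<1$, I would choose $T_1,\ldots,T_d>0$ and pairwise relatively prime $P_1,\ldots,P_d\in\N$, all sufficiently large, so that $S\subseteq \prod_\ell[0,P_\ell T_\ell]\times\prod_\ell[-1/(2T_\ell),1/(2T_\ell)]$ and so that the boxes of size $\prod_\ell T_\ell\times\prod_\ell 1/(P_\ell T_\ell)$ (each of volume $1/P$, with $P=\prod_\ell P_\ell$) meeting $\supp\eta_H$ number at most $P$. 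The latter is possible because refining the grid drives the occupied volume to $\abs{S}<1$, and the coprimeness can be imposed freely, e.g. by taking the $P_\ell$ to be powers of distinct primes.

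Next I would establish the $d$-dimensional analogue of Lemma~\ref{lem:matrixequationquasiperiodic}. The relevant transform is the tensor product $Z_{T_1 P_1}\otimes\cdots\otimes Z_{T_d P_d}$ of the one-dimensional Zak transforms of Definition~\ref{def:zaktransform}, and running the same quasiperiodization computation coordinatewise produces, after restriction to the fundamental box $\prod_\ell[0,T_\ell]\times\prod_\ell[0,1/(P_\ell T_\ell)]$, a linear system of the shape \eqref{eqn:basiclinearsystem} with $P$ equations and $P^2$ unknowns whose coefficient matrix is the $d$-dimensional full Gabor system matrix $G(c)$ built from the translation and modulation operators of Definition~\ref{def:translationmodulation} acting on $\mathbb C^{P_1}\otimes\cdots\otimes\mathbb C^{P_d}$. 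Because $\supp\eta_H$ occupies at most $P$ of the shifted boxes, at most $P$ columns of $G(c)$ survive, and the whole problem reduces to choosing $c$ so that every set of $P$ columns of $G(c)$ is linearly independent, i.e. so that $G(c)$ has full Spark.

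The main obstacle, and the heart of the argument, is producing such a $c$ for the \emph{$d$-dimensional} Gabor matrix. Here I would invoke the Chinese Remainder Theorem: since the $P_\ell$ are pairwise relatively prime, the index group $\Z_{P_1}\times\cdots\times\Z_{P_d}$ is isomorphic to $\Z_P$, and under this isomorphism the diagonal shift by $1$ generates the whole product group (its order is $\operatorname{lcm}(P_\ell)=P$). Relabelling rows and columns accordingly identifies the $d$-dimensional finite Gabor system with the one-dimensional finite Gabor system of length $P$, and carries the tensor window $c$ to an ordinary window $\tilde c\in\mathbb C^P$; in particular $G(c)$ and $G(\tilde c)$ agree up to a permutation of rows and columns, hence have the same Spark. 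By Theorem~\ref{thm:genericfullspark} there is a dense open set of $\tilde c\in\mathbb C^P$ for which $G(\tilde c)$ has full Spark, and pulling back through the isomorphism yields a $c$, $P_\ell$-periodic in the $\ell$-th coordinate, with the desired independence property. I expect verifying that the CRT relabelling genuinely intertwines the $d$-fold tensor translations and modulations with the single cyclic translation and modulation — that is, that no residual unimodular twists spoil the column-by-column matching — to be the delicate point.

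Finally, with $c$ fixed I would solve the $P\times P$ invertible subsystem pointwise in $(t,\nu)$, read off the surviving samples of $\eta_H$, and transport them back through the inverse tensor Zak transform. Choosing dual windows $r,\phi\in\S(\R^d)$ subject to the partition-of-unity conditions \eqref{eq:r_phi_2} and collecting the entries of the inverse Gabor matrix into the sequences $b_j$ and the offsets $(q_j,m_j)$ then gives the explicit impulse-response reconstruction \eqref{eqn:reconstruction-higher}; the stability bounds \eqref{eqn:boundedbelow} follow from the unitarity of $\sqrt{\prod_\ell T_\ell P_\ell}\,Z$ together with the uniform invertibility of the $P\times P$ subsystem, and the stated $L^2$ and uniform convergence come, exactly as in the one-dimensional case, from the Schwartz decay of $r$ and $\phi$.
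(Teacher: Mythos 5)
Your proposal is correct and takes essentially the same route as the paper: rectify $S$ with pairwise relatively prime $P_\ell$, derive the $P\times P^2$ pointwise linear system via the multidimensional (tensor) Zak transform, and transfer Malikiosis's full-spark window from the cyclic group $\Z_{P_1\cdots P_d}$ to $\Z_{P_1}\times\cdots\times\Z_{P_d}$ via the Chinese Remainder Theorem before solving the invertible $P\times P$ subsystem and reconstructing with $r,\phi$. If anything, you are more careful than the paper on the one delicate point: the paper simply asserts that $G(c)$ is a rearrangement of $G(\widetilde c)$, whereas you correctly flag that one must verify the CRT relabelling intertwines the tensor translations and modulations with the cyclic ones, which it does up to a permutation of indices and unimodular column factors, neither of which changes the Spark.
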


This result follows from adjusting the proof of  Theorem \ref{thm:reconstruction-higher} to the higher dimensional setting.  For example, it will employ the Zak transform
$$\displaystyle{Z_{T \pw P} f(t,\nu) = \sum_{n\in\Z^d} f(t-n\pw P \pw T)\,e^{2\pi i  \nu \cdot (P\pw T)}},$$ where $P=(P_1,\ldots, P_d)$.
We are then led again to a system of linear equations of the form
\begin{equation}\label{eqn:basiclinearsystemhigher}
{\bf Z}_{Hg}(t,\nu)_p = \sum_{q\in J}  \sum_{m \in J}  G(c)_{p,(q,m)}\,\boldsymbol\eta_H(t,\nu)_{(q,m)}
\end{equation}
where as before
\begin{equation*}%\label{eqn:boldZvectorhigher}
{\bf Z}_{Hg}(t,\nu)_p = (Z_{T \pw P}\circ H)g(t+p\pw  T,\nu)\,e^{-2\pi i \nu p\pw  T},
\end{equation*}
\begin{align*}%\label{eqn:boldetavectorhigher}
\boldsymbol\eta_H(t,\nu)_{(q,m)} = &(T_1 P_1 \ldots T_d P_d )^{-1}\,\eta_H(t+q\pw T, \nu+(m/T\pw P )\,  \\ & e^{-2\pi i \nu \cdot( q\pw T)}\,e^{-2\pi i q \cdot (m/P)},
\end{align*}
and where $G(c)$ is now a multidimensional finite Gabor system matrix similar to (\ref{def:fullgabormatrix}).

In order to show that the spreading function for operator $H$ can be completely determined by its response to the
periodically-weighted $d$-dimensional delta-train $g$, we need to show that (\ref{eqn:basiclinearsystemhigher}) can be solved for each $(t,\nu)\in \prod_{\ell=1}^d [0,T_\ell]{\times}\prod_{\ell=1}^d[0,1/(T_\ell P_\ell)]$ if $c\in \mathbb C^{P_1\times \ldots \times P_d}$ is chosen appropriately.

To see that a choice of $c$ is possible, observe that the product group $\Z_{P_1}\times \ldots \times \Z_{P_d}$ is isomorphic to the cyclic group
$\Z_{P_1\cdot \ldots \cdot P_d}$ since the $P_\ell$ are chosen pairwise relatively prime.
Theorem~\ref{thm:genericfullspark} applied to the cyclic group  $\Z_{P_1\cdot \ldots \cdot P_d}$ guarantees  the existence of
$\widetilde c \in \mathbb C^{P_1\cdot \ldots \cdot P_d}$ so that the Gabor system matrix $G(\widetilde c)$ is full spark.
We can now define $c\in \mathbb C^{P_1\times \ldots \times P_d}$ by setting
$$c_{n_1,\ldots,n_d}=\widetilde{c}_{n_1+n_2\, P_1+n_3\,P_1P_2+\ldots + n_d\,P_1\ldots P_{d-1}}, \quad n=(n_1,\ldots,n_d)\in J $$ and observe that $G(c)$ is simply a rearrangement of $G(\widetilde c)$, hence, $G(c)$ is full spark.

\section{Further results on operator sampling}\label{section:outlook}

The results discussed in this paper are discussed in detail in \cite{Kai62,Bel69,KP06,PW06b,Pfa10} and \cite{PW14}.  The last listed article contains the most extensive collection of operator reconstruction formulas, including extensions to some $OPW(S)$ with $S$ unbounded. Moreover, some  hints on how to use parallelograms to rectify a set $S$ for operator sampling efficiently  are given.

A central result in \cite{PW14} is the classification of all spaces  $OPW(S)$ that are identifiable for a given $g=\sum_{n\in\Z} {c_n}\delta_{nT}$ for $c_n$ being $P$-periodic.

The papers \cite{PW06,Pfa10} address some functional analytic challenges in operator sampling, and \cite{KP12} focuses on the question of operator identification if we are restricted to using more realizable identifiers, for example, truncated and modified versions of $g$, namely,
$\widetilde {g} (t)=\sum_{n=0}^N {c_n} \varphi( t-nT)$.
The problem of recovering parametric classes of operators in $OPW(S)$ is discussed in \cite{BGE11,BGE11b}.

In the following, we briefly review literature that address some other directions in operator sampling.

\subsection{Multiple Input Multiple Output}

A Multiple Input Multiput Output (MIMO) channel $\bf H$ with $N$ transmitters and $M$ receivers can be modeled by an $N\times M$ matrix whose entries
are time-varying channel operators $H_{mn}\in OPW(S_{mn})$. For simplicity, we write ${\bf H} \in OPW(\bf S)$. Assuming that the operators $H_{mn}$
are independent, a sufficient criterion for identifiability is given by $\sum_{n=1}^N |S_{mn}| \leq 1$ for $m=1,\ldots, M$.
Conversely, if for a single $m$, $\sum_{n=1}^N |S_{mn}| > 1$, then $OPW(\bf S)$ is not identifiable by any collection $s_1,\ldots, s_N$ of input signals \cite{PW07,Pfa08}.

A somewhat dual setup was considered in  \cite{HP10}. Namely, a Single Input Single Output (SISO) channel with    $S$ being large, say $S=[0,M]\times [-N/2,N/2]$ with $N,M\geq 2$.  As illustrated above, $OPW([0,M]\times [-N/2,N/2])$ is not identifiable, but if we are allowed to use $MN$ (infinite duration) input signals $g_1,\ldots, g_{MN}$,  then $H\in OPW([0,M]\times [-N/2,N/2])$ can be recovered from the $MN$ outputs $Hg_1,\ldots, Hg_{MN}$.

\subsection{Irregular Sampling of Operators}

The identifier $g=\sum_{n\in\Z} c_n \delta_{nT}$ is supported on the lattice $T\Z$ in $\R$. In general, for stable operator identification, choosing a discretely supported identifier is reasonable, indeed, in \cite{KP12} it is shown that identification for $OPW(S)$ in full requires the use of  identifiers that neither decay in time nor in frequency. (Recovery of the characteristics of $H$ during a fixed transmission band and a fixed transmission interval can be indeed recovered when using Schwartz class identifiers \cite{KP12}.)

In irregular operator sampling, we consider identifiers of the form $g=\sum_{n\in\Z} c_n \delta_{\lambda_n}$ for nodes $\lambda_n$ that are not necessarily contained in a lattice. If such $g$ identifies $OPW(S)$, then we  refer to  $\supp g=\{\lambda_n\}$ as a sampling set for $OPW(S)$, and similarly, the {\em sampling rate} of $g$ is  defined to be
$$D(g)=D(\supp g)=D(\Lambda) = \lim_{r\to \infty} \frac{n^-(r)}{r}$$
where
$$n^-(r) = \inf_{x\in\R}\# \{ \Lambda \cap [x,x+r]\}$$
assuming that the limit exists  \cite{HP10,PW14}.

To illustrate a striking difference between irregular  sampling of functions and operators, note that $\Z$ is a sampling set for $OPW([0,1]\times [-\frac 1 2, \frac 1 2])$ as well as for the Paley Wiener space $PW([-\frac 1 2, \frac 1 2])$, but  the distribution $
g=c_0\delta_{\lambda_0}+\sum_{n\in\Z\setminus\{0\}} c_n \delta_{n}$ does not identify  $OPW([0,1]\times [-\frac 1 2, \frac 1 2])$, regardless of our choice of $c_n$ and $\lambda_0\neq 0$.
  This shows that, for example, Kadec's $\frac 1 4$th theorem does not generalize to the operator setting \cite{HP09}.

In \cite{PW14} we give with $D(g)=D(\Lambda)\geq B(S)$ a necessary condition on the (operator) sampling rate based  on the bandwidth $B(S)$ of $OPW(S)$   which is defined as
\begin{align}\label{eqn:bandwidth}
 B(S) = \sup_{t\in\R}|\supp\eta(t,\nu)|  = \Big\|\int_\R \chi_{S}(\cdot,\nu)\,d\nu\Big\|_\infty.
\end{align}
Here,  $\chi_S$ denotes  the characteristic function of $S$.  This quantity can be interpreted as the maximum vertical extent of $S$ and takes into account gaps in $S$. Moreover, in $\cite{PW14}$ we discuss the goal of constructing $\{\lambda_n\}$ of small density, and/or large gaps in order to reserve time-slots for information transmission.  Results in this direction can be interpreted as giving bounds on the capacity of a time-variant channel in $OPW(S)$ in terms of $|S|$
\cite{PW14}.

Finally, we give in \cite{PW14} an example of an operator class $OPW(S)$ that cannot be identified by any identifier of the form $g=\sum_{n\in\Z} c_n \delta_{nT}$
with $T>0$ and periodic $c_n$, but requires coefficients that form a bounded but non-periodic sequence.  In this case, $S$ is a parallelogram and
$B(S)=D(g)$ (see Figure~\ref{fig:nonperiodic})

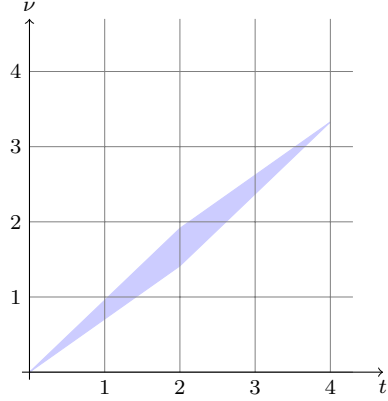
\begin{figure}%[htbp]
  \centering
\begin{tikzpicture}[scale=0.5]
\def\L{9}
\def\Lminusone{8}
\filldraw [fill=blue!20, draw=blue!20] (0,0) -- (4,2.828) -- (8,6.657)-- (4,3.828) -- (0,0);
\draw[style=help lines, step={(2,2)}] (0,0) grid (\L-0.4, \L+0.4);
\draw[->] (-0.2,0) -- (\L+0.4,0) node[below] {$t$};
\draw[->] (0,-0.2) -- (0,\L+0.4) node[above] {$\nu$};
\node[below] at (2,0) {$1$};
\node[below] at (4,0) {$2$};
\node[below] at (6,0) {$3$};
\node[below] at (8,0) {$4$};
\node[left] at (0,2) {$1$};
\node[left] at (0,4) {$2$};
\node[left] at (0,6) {$3$};
\node[left] at (0,8) {$4$};
%\node[below] at (4.5,-1) {(a)};
%\node[left] at (0,6) {$2L\Omega$};
%\node at (5.1,4.4) {$\supp \eta(t,\nu)$};
%\node at (7.5,5.5) {$\Gamma$};
\end{tikzpicture}
%\begin{tikzpicture}[scale=0.5]
%\def\L{9}
%\def\Lminusone{8}
%\filldraw [fill=blue!20, draw=blue!20] (0,0) -- (2,1.414) -- (2,1.914) -- (0,0);
%\filldraw [fill=blue!20, draw=blue!20] (0,1.414) -- (2,2.828) -- (2,3.828) -- (0,1.914);
%\filldraw [fill=blue!20, draw=blue!20] (0,2.828) --(2,4.742 ) -- (2,5.242 ) -- (0,3.828) ;
%\filldraw [fill=blue!20, draw=blue!20] (0,4.742 )  -- (2,6.657) -- (0,5.242 );

%\draw[style=help lines, step={(2,2)}] (0,0) grid (2, \L+0.4);
%\draw[-] (-0,0) -- (2.,0) node[below] {$ $};
%\draw[->] (0,-0.2) -- (0,\L+0.4) node[above] {$\nu$};
%\node[below] at (2,0) {$1$};
%\node[below] at (1,-1) {(b)};
%\node[below] at (4,0) {$2$};
%\node[below] at (6,0) {$3$};
%\node[below] at (8,0) {$4$};
%\node[left] at (0,2) {$1$};
%\node[left] at (0,4) {$2$};
%\node[left] at (0,6) {$3$};
%\node[left] at (0,8) {$4$};

%\node[left] at (0,6) {$2L\Omega$};
%\node at (5.1,4.4) {$\supp \eta(t,\nu)$};
%\node at (7.5,5.5) {$\Gamma$};
%\end{tikzpicture}
  \caption{ The the operator class $OPW^2(S)$ with $S=  ( 2,\  2 \ ;\  \sqrt{2}, \ \sqrt{2} +1/2)[0,1]^2$ whose
  area equals $1$ and bandwidth equals $1/2$ is identifiable by a (non-periodically) weighted delta train with sampling density $1/2$.
  It is not identifiable using a periodically-weighted delta train.
  %\quad (b) $T=1$ periodization of $S$. For details, see Example~\ref{exa:non periodic}.
  }
\label{fig:nonperiodic}
\end{figure}

\subsection{Sampling of $OPW(S)$ with unknown $S$.}

In some applications, it is justified to assume that the set $S$ has small area, but its shape and location are unknown.  If further $S$ satisfies some  basic geometric assumptions that guarantee that $S$ is contained in $[0,TP]\times[-1/2T, 1 / 2T]$ and only meets few rectangles of the rectification grid $[kT,(k+1)T]\times [q /TP,(q+1)/TP]$, then recovery of $S$ and, hence, an operator in $OPW(S)$ is  possible \cite{PW14,HB13}.

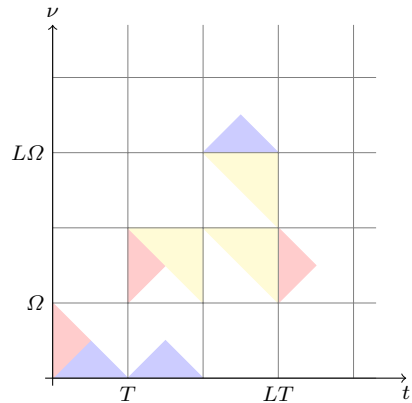
\begin{figure} \label{figure:colors}%[htbp]
  \centering
\begin{tikzpicture}[scale=0.5]
\def\L{9}
\def\Lminusone{8}
\filldraw [fill=blue!20, draw=blue!20] (0,0) -- (1,1) -- (2,0) -- (0,0);
\filldraw [fill=blue!20, draw=blue!20] (2,0) -- (3,1) -- (4,0) -- (2,0);
\filldraw [fill=blue!20, draw=blue!20] (4,6) -- (5,7) -- (6,6) -- (4,6);

\filldraw [fill=red!20, draw=red!20] (0,0) -- (1,1) -- (0,2) -- (0,0);
\filldraw [fill=red!20, draw=red!20] (2,2) -- (3,3) -- (2,4) -- (2,2);
\filldraw [fill=red!20, draw=red!20] (6,2) -- (7,3) -- (6,4) -- (6,2);

\filldraw [fill=yellow!20, draw=yellow!20] (6,4) -- (4,6) -- (6,6) -- (6,4);
\filldraw [fill=yellow!20, draw=yellow!20] (4,2) -- (2,4) -- (4,4) -- (4,2);
\filldraw [fill=yellow!20, draw=yellow!20] (6,2) -- (4,4) -- (6,4) -- (6,2);

\draw[style=help lines, step={(2,2)}] (0,0) grid (\L-0.4, \L+0.4);
\draw[->] (-0.2,0) -- (\L+0.4,0) node[below] {$t$};
\draw[->] (0,-0.2) -- (0,\L+0.4) node[above] {$\nu$};
\node[below] at (2,0) {$T$};
\node[left] at (0,2) {$\Omega$};
\node[below] at (6,0) {$LT$};
\node[left] at (0,6) {$L\Omega$};
%\node[left] at (0,6) {$2L\Omega$};
%\node at (5.1,4.4) {$\supp \eta(t,\nu)$};
%\node at (7.5,5.5) {$\Gamma$};
\end{tikzpicture}
  \caption{For $S$ the union of the colored sets, $OPW(S)$ is identifiable even though $7 > 3$ boxes are active, implying that
  $S$ cannot be rectified with $P=3$ and $T=1$ is not possible (see Section~\ref{section:mainidentification}).
  Recovering $\eta$ from $Hg$ requires solving three systems of linear equations, one to recover $\eta$ on the yellow support set, one to recover $\eta$ on the red support set, and one to recover $\eta$ on the blue support set. The reconstruction formula (\ref{eqn:reconstructionformula}) does not apply for this set $S$.  }
\label{fig:rectification}
\end{figure}

The independently obtained results in \cite{PW14,HB13} employ the same identifiers $g=\sum_{n\in\Z} c_n \delta_{\lambda_n}$ as introduced above. Operator identification is therefore again reduced  to solving  \eqref{eqn:basiclinearsystem}, that is, the system of $P$ linear equations
\begin{equation}\label{eqn:basiclinearsystemsimple}
{\bf Z}(t,\nu) = G(c)\boldsymbol\eta (t,\nu)
\end{equation}
for the vector $\boldsymbol \eta (t,\nu)\in\mathbb C^{P^2}$ for $(t,\nu)\in [0,T]\times[-1/2TP, 1/2TP]$. While the zero components of
$\boldsymbol \eta (t,\nu)$ are not known, the vector is known to be very sparse.  Hence, for fixed $(t,\nu)$, we can use the fact that $G(c)$ is full spark and recover $\boldsymbol \eta (t,\nu)$ if it has at most $P/2$ nonzero entries. Indeed, assume  ${\boldsymbol{\eta}}(t,\nu)$ and $\widetilde{\boldsymbol{\eta}}(t,\nu)$ solve \eqref{eqn:basiclinearsystemsimple}
and  both have at most $P/2$ nonzero entries. Then $\boldsymbol {\eta} (t,\nu)-\widetilde{\boldsymbol{\eta}} (t,\nu)$ has at most $P$ nonzero entries and the fact that $G(c)$ is full spark indicates that  $G(c)( \boldsymbol {\eta} (t,\nu)-\widetilde{\boldsymbol{\eta}} (t,\nu))=0$  implies  $\boldsymbol {\eta} (t,\nu)-\widetilde{\boldsymbol{\eta}} (t,\nu)=0$.

Clearly, under the geometric assumptions alluded to above, the criterion that at most $P/2$ rectangles in the grid are met can be translated to the unknown area of $S$ has measure less than or equal to 1/2.

In \cite{HB13}, this area 1/2 criterion is improved by showing that $H$ can be identified whenever at most $P-1$ rectangles in the rectification grid are met by
$S$. This result is achieved by using a joint sparsity argument, based on the assumption  that for all $(t,\nu)$, the same cells are active.

Alternatively, the ``area 1/2'' result can be strengthened by not assuming that for all $(t,\nu)$, the same cells are active. This requires solving \eqref{eqn:basiclinearsystemsimple}, for  $\boldsymbol{\eta} (t,\nu)$ sparse, for each considered $(t,\nu)$ independently, see Figure~\ref{fig:rectification} and \cite{PW14}.

It must be added though, that solving  \eqref{eqn:basiclinearsystemsimple} for $\boldsymbol{\eta} (t,\nu)$ being $P/2$ sparse is not possible for moderately sized $P$, for example for $P > 15$. If we further reduce the number of active boxes, then compressive sensing algorithms such as Basis Pursuit and Orthogonal Matching Pursuit become available, as is discussed in the following section.

\subsection{Finite dimensional operator identification and compressive sensing}

Operator sampling in in the finite dimensional setting translates into the following matrix probing problem \cite{PRT08,CD12,BD14}. For a class of matrices $\boldsymbol{\mathcal M}\in\mathbb C^{P\times P}$, find $c\in \mathbb C^P$ so that we can recover $M\in\boldsymbol{\mathcal M}$ from $Mc$.

\begin{figure}
\begin{center}
 \begin{minipage}[b][2cm][t]{3cm}
  \includegraphics[width=3cm]{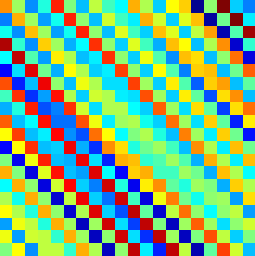}
\end{minipage} \
 \begin{minipage}[b][2cm][t]{0.2cm}
 \includegraphics[height=3cm]{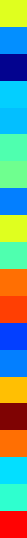}
 \end{minipage}
\begin{minipage}[c][2cm][t]{0.6cm}
\begin{center}

   \
   \\[.3cm]

   ${=}$
\end{center}
\end{minipage}
  \begin{minipage}[b][2cm][t]{.2cm}\includegraphics[height=3cm]{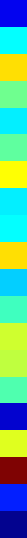}
  \end{minipage}
\end{center}
 \caption{The matrix probing problem: find $c$ so that the map $\boldsymbol{\mathcal M} \longrightarrow \mathbb C^P$, $M\mapsto Mc$ is injective and therefore invertible.}
\end{figure}

 The classes of operator considered here are of the form $M_{\boldsymbol \eta}=\sum_\lambda \boldsymbol{\eta}_\lambda B_\lambda$ with $B_{\lambda}=B_{p,q}=\mathcal T^p \mathcal M^q$, and the matrix identification problem is reduced to  solving
 \begin{equation}\label{eqn:basiclinearsystemsimpleno}\boldsymbol Z=M_{\boldsymbol{\eta}} c=\sum_{p,q=0}^{P-1} {\boldsymbol{\eta}}_{p,q} \big(\ \mathcal T^p \mathcal M^q c\big) =G(c)\boldsymbol{\eta},
 \end{equation}
where $c$ is chosen appropriately; this is just   \eqref{eqn:basiclinearsystemsimple} with the dependence on $(t,\nu)$  removed.

If $\boldsymbol\eta$ is assumed to be $k$-sparse, we arrive at the classical compressive sensing problem with measurement matrix 
$G(c)\in \mathbb C^{P \times P^2}$ which depends on $c=(c_0,c_1,\ldots,c_{P-1})$. To achieve recovery guarantees for Basis Pursuit and Orthogonal Matching Pursuit, averaging arguments have to be used that yield results on the expected qualities of $G(c)$.  This problem was discussed in \cite{PRT08,PR09,PRT13} as well as, in slightly different terms, in \cite{AHS08,HS09}. The strongest results were achieved in \cite{KMR14} by estimating Restricted Isometry Constants  for $c$ being a Steinhaus sequence. These results show that with high probability, $G(c)$ has the property that Basis Pursuit recovers $\boldsymbol{\eta}$ from   $G(c)\boldsymbol{\eta}$ for every $k$ sparse $\boldsymbol{\eta}$ as long as  $k \leq C \, P / \log^2 P$. for some universal constant $C$.

\begin{figure}
 \includegraphics[width=11.75cm]{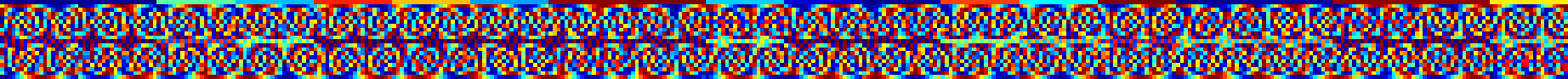}
 \caption{Time-frequency structured measurement matrix $G(c)$ with $c$ randomly chosen.}
\end{figure}

\subsection{Stochastic operators and channel estimation}

It is common that models of wireless channels and radar environments  take the stochastic nature of the medium into account. In such models, the spreading function $\eta(t,\nu)$ (and therefore the operator's kernel and Kohn--Nirenberg symbol) are random processes, and the operator is split into the sum of its deterministic portion, representing the mean behavior of the channel, and its zero-mean stochastic portion that represents the noise and the environment.

\begin{figure}
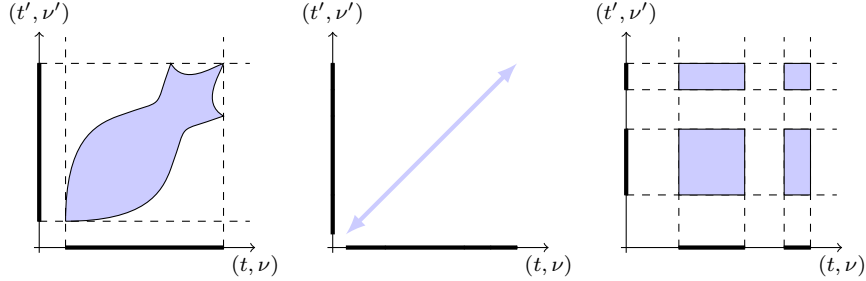

% \subfloat[]{
\tikz[scale=0.35]  \curvysupport;
% }
% \subfloat[]{
\tikz[scale=0.35]  \wssussupport;
% }
% \subfloat[]{
\tikz[scale=0.35]  \tensorsupport;
% }
 \caption{Support sets of autocorrelation functions, the general case, the WSSUS case, and the tensor case.}
\end{figure}

The detailed analysis of the stochastic case was carried out in \cite{PZ14a,PZ14b}. One of the foci of these works lies in the goal of determining  the second-order statistics of the (zero mean) stochastic process $\eta(\tau,\nu)$, that is, its so called covariance function $R(\tau,\nu,\tau',\nu') = \mathbb E \{\eta(\tau,\nu)\, \overline{\eta(\tau',\nu')}\}$.
In   \cite{PZ14a,PZ14b}, it was shown that a necessary but not sufficient condition for the identifiability of $R\eta(\tau,\nu,\tau',\nu')$ from the output covariance
$
A(t,t') = \mathbb E \{ H g(t) \, \overline{H g(t')}\}
$
is that $R(\tau,\nu,\tau',\nu')$ is supported on a bounded set of 4-dimensional volume less than or equal to
one. Unfortunately, for some sets $S\subseteq \R^4$ of arbitrary small measure,  the respective stochastic
operator Paley--Wiener space $StOPW(S)$ of operators with $R\eta$ supported on $S$ is not identifiable; this is a striking
difference to the deterministic setup where the geometry of $S$ does not play a role at all.

 In \cite{OPZ,PZ14c}  the special case of   \emph{wide-sense stationary operators with uncorrelated scattering}, or WSSUS operators is considered.  These operators are characterized by the property that
\begin{equation*}\label{eq:scattering}
R\eta(t,\nu,t',\nu') = C_\eta (t,\nu) \, \delta(t-t') \, \delta(\nu-\nu').
\end{equation*}
The function $C_\eta(t,\nu)$ is then called \emph{scattering function} of $H$.
Our results on the identifiability of stochastic operator classes allowed for the construction of two estimators for  scattering functions \cite{OPZ,PZ14c}.  The estimator given in \cite{PZ14c} is applicable, whenever the scattering function of $H$ has bounded support.  Note that the autocorrelation of a WSSUS operator is supported on a two dimensional plane in $\R^4$ which therefore has 4D volume 0, a fact that allows us to lift commonly assumed restrictions on the size of the 2D area of the support of the scattering function.

For details, formal definitions of identifiability and detailed statements of results we refer to the papers \cite{OPZ,PZ14a,PZ14b,PZ14c}.

\section{Appendix:  Proofs of Theorems.}

\subsection{Proof of Lemma~\ref{lem:matrixequationquasiperiodic}}\label{section:proof of lemma}

In order to see how the time-frequency shifts of $c$ arise, we will briefly outline the calculation that leads to (\ref{eqn:matrixequationquasiperiodic}).
It can be seen by direct calculation using the representation given by (\ref{eqn:operator3}), that  if $g = \sum_{n}\delta_{nTP}$ then
$\ip{Hg}{s} = \ip{\eta_H}{Z_{TP} s}$ for all $s\in{\cal S}(\R)$ where the bracket on the left is the $L^2$ inner product on
$\R$ and that on the right the $L^2$ inner product on the rectangle $[0,TP]{\times}[0,1/(TP)]$.  Periodizing the integral on the left gives
\begin{align*}
\ip{\eta_H}{Z_{TP} s} = & \int_0^{1/(TP)}\int_0^{TP} \sum_{k}\sum_{m} \eta_H(t+kTP,\nu+m/(TP)) \\
&  \hskip.5in  e^{-2\pi i\nu kTP}\overline{Z_{TP} s(t,\nu)}\,dt\,d\nu.
\end{align*}
Since this holds for every $s\in{\cal S}(\R)$, we conclude that
\begin{align*}
(Z_{TP}\circ H)g(t,\nu) & \\
  & \hskip-.5in = 1/(TP)\,\sum_{k}\sum_{m} \eta_H(t+kTP,\nu+m/(TP))\,e^{-2\pi i\nu kTP}.
\end{align*}

Given $g = \sum_{n\in\Z} c_n\,\delta_{nT}$, for a period-$P$ sequence $c=(c_n)$,
and letting $n=mP-q$ for $m\in\Z$ and $0\le q<P$, we obtain
\begin{align*}
g  &  =    \sum c_n\,\delta_{nT} =    \sum_{q=0}^{P-1} \sum_{m\in\Z} c_{mP-q}\,\delta_{mPT - qT} \\
   &  =    \sum_{q=0}^{P-1}  c_{-q} \mathcal T_{-qT}\,\bigg(\sum_{m\in\Z}\,\delta_{mPT}\bigg).
\end{align*}
Since for $\alpha\in\R$, the spreading function of $H\circ \mathcal T_\alpha$ is $\eta_H(t-\alpha,\nu)\,e^{2\pi i\nu\alpha}$,
we arrive at
\begin{align}
(Z_{TP}\circ H)g(t,\nu) & \nonumber \\
    & \hskip-.75in = 1/(TP)\,\sum_{q=0}^{P-1} c_{-q}\,\sum_{k}\sum_{m} \eta_H(t+kTP+qT,\nu+m/(TP)) \nonumber \\
    & \hskip.75in e^{-2\pi i(\nu+m/(TP))qT}\,e^{-2\pi i\nu kTP}. \label{eqn:weighteddeltatrain}
\end{align}

Letting $m=jP+\ell$ for $j\in\Z$ and $0\le \ell<P$, we obtain
\begin{align*}
(Z_{TP}\circ H)g(t,\nu) & \nonumber \\
    & \hskip-.75in = 1/(TP)\,\sum_{q=0}^{P-1} c_{-q}\,\sum_{k}\sum_{j}\sum_{\ell=0}^{P-1} \eta_H(t+kTP+qT,\nu+j/T+\ell/(TP)) \nonumber \\
    & \hskip.75in e^{-2\pi i\nu qT}\,e^{-2\pi i\ell q/P}\,e^{-2\pi i\nu kTP} \nonumber \\
    & \hskip-.75in = 1/(TP)\,\sum_{q=0}^{P-1}\sum_{\ell=0}^{P-1} \big(c_{-q}\,e^{-2\pi i\ell q/P}\big)\,
        e^{-2\pi i\nu qT}\,\eta^{Q}_H(t + T q,\nu + \ell/TP).
\end{align*}

Finally, replacing $t$ by $t+pT$ for $p=0,\,1,\,\dots,\,P{-}1$, and changing indices by replacing $q$ by $q-p$, we obtain
\begin{align*}
(Z_{TP}\circ H)g(t+pT,\nu) &  \\
    & \hskip-1.25in = 1/(TP)\,\sum_{q=0}^{P-1}\sum_{\ell=0}^{P-1} \big(c_{-q}\,e^{-2\pi i\ell q/P}\big)
         \,e^{-2\pi i\nu qT}\,\eta^{Q}_H(t + (q+p) T,\nu + \ell/TP)  \\
    & \hskip-1.25in = 1/(TP)\,\sum_{q=0}^{P-1}\sum_{\ell=0}^{P-1} \big(c_{-(q-p)}\,e^{-2\pi i\ell (q-p)/P}\big) \nonumber \\
    & \hskip.75in e^{-2\pi i\nu (q-p)T}\,\eta^{Q}_H(t + q T,\nu + \ell/TP).
\end{align*}
The observation that $(\mathcal T^q\,\mathcal M^m c)_p = c_{p-q}\,e^{2\pi i m (p-q)/P}$ completes the proof.

\subsection{Proof of Theorem~\ref{thm:genericfullsparkprime}}\label{section:prime}

To see why this is true, define $\mu(M)$ to be the number of
diagonals of $M$ whose product is a multiple of $p_M$, and proceed by induction on the size of the matrix $M$.
If $M$ is $1\times 1$ then the result is obvious.  Suppose that $M$ is $n\times n$ and that it is described by the vector
$\ell=(\ell_0,\,\dots,\,\ell_{P-1})$.
Assuming without loss of generality that the variable of smallest index in $p_M$ with a nonzero
exponent is $c_0$, there is a row of $M$ in which
the variable $c_0$ appears $ \ell_j$ times for some index $j$.
Choose one of these terms and delete the row and column in which
it appears. Call the remaining matrix $M'$.  The vector $\ell$ describing $M'$ is
$(\ell_0,\,\dots,\,\ell_{j-1},\,\ell_j-1,\,\ell_{j+1},\,\dots,\,\ell_{P-1})$,
and is independent of which term was chosen from the given row to
form $M'$.   By the construction of the LI monomial,
$p_M=c_0\,p_{M'}$ and by the induction hypothesis
$$\mu(M') = \ell_0!\,\cdots\,\ell_{j-1}!\,(\ell_j-1)!\,\ell_{j+1}!\,\cdots\,\ell_{P-1}!.$$
Since there are $\ell_j$ ways to choose a term from the given row
to produce $M'$ we have that
$$\mu(M) = \ell_j\,\mu(M') = \ell_0!\,\cdots\,\ell_{j-1}!\,\ell_j( l_j-1)!\,\ell_{j+1}!\,\cdots\,\ell_{P-1}! = \prod_{\kappa=0}^{P-1} \ell_\kappa!$$
which was to be proved.

Since each term $a_\sigma\,C^\sigma$ in (\ref{eqn:determinant}) is made up of a sum of precisely this many terms, it follows that
exactly one of these terms is a multiple of the LI monomial.
Alternatively, we can think of the LI monomial as the one corresponding to the $\sigma\in S_P/\Gamma$ that minimizes the functional
$\Lambda_0(C^\sigma) = \sum_{i=0}^{L-1} i^2\,H(\alpha_i)$ where $\alpha_i$ is the exponent of $c_i$ in $C^\sigma$ and where $H(\alpha_i)=0$
if $\alpha_i=0$ and $1$ otherwise.

Because by Chebotarev's Theorem, $a_\sigma\ne 0$ for all $\sigma$ the proof works for any square submatrix $M$, no matter what size.
This gives us Theorem~\ref{thm:genericfullsparkprime}.

\subsection{Proof of Theorem \ref{thm:genericfullspark}}\label{sec:malikiosis}
We first need to assert the existence of a cyclical renumbering of the variables such that with respect to the new
trivial partition $A'=(A_\kappa')_{\kappa=0}^{P-1}$, the CI monomial is given by
$$C^I = \prod_{\kappa=0}^{P-1}\,\prod_{j\in A_\kappa'} c_{j-\kappa}$$
in other words, if $j\in A_\kappa'$ then $0\le j-\kappa <P$.  Note first that since $\min(A_\kappa')=\sum_{i=0}^{\kappa-1}\ell_i'$
for all $\kappa$, $j\in A_\kappa'$ implies that $j\ge\sum_{i=0}^{\kappa-1}\ell_i'$.  Therefore, it will suffice to find a $0\le\gamma<P$
such that for all $\kappa$, $\sum_{i=0}^{\kappa-1}\ell_i' - \kappa\ge 0$ so that $j-\kappa \ge \sum_{i=0}^{\kappa-1}\ell_i' - \kappa\ge 0$.

Let $0\le \gamma<P$ be such that the quantity $\sum_{i=0}^{\gamma-1} \ell_i - \gamma$ is minimized, let
$$\ell' = (\ell_i')_{i=0}^{L-1} = (\ell_{(i+\gamma)mod\ P})_{i=0}^{P-1},$$
and let $A' = (A_\kappa')_{\kappa=0}^{P-1}$ be the corresponding trivial partition.  Now fix $\kappa$ and assume that $\kappa+\gamma \le P$.
Then
\begin{eqnarray*}
\sum_{i=0}^{\kappa-1}\ell_i' - \kappa
&  =  &  \sum_{i=0}^{\kappa-1}\ell_{(i+\gamma)} - \kappa \\
&  =  &  \bigg(\sum_{i=0}^{\kappa+\gamma-1}\ell_i - (\kappa+\gamma)\bigg) - \bigg(\sum_{i=0}^{\gamma-1}\ell_i - \gamma\bigg) \\
& \ge & 0
\end{eqnarray*}
since the second term in the difference is minimal.  If $\kappa+\gamma \ge P+1$ then remembering that $\sum_{i=0}^{P-1}\ell_i = L$
\begin{eqnarray*}
\sum_{i=0}^{\kappa-1}\ell_i' - \kappa
&  =  &  \sum_{i=0}^{\kappa-1}\ell_{(i+\gamma)mod\ P} - \kappa \\
&  =  &  \sum_{i=\gamma}^{P-1} \ell_i + \sum_{i=0}^{\kappa+\gamma-P-1}\ell_i - \kappa \\
&  =  &  \sum_{i=0}^{P-1} \ell_i - \sum_{i=0}^{\gamma-1} \ell_i + \sum_{i=0}^{\kappa+\gamma-P-1}\ell_i - \kappa \\
&  =  &  \bigg(\sum_{i=0}^{(\kappa+\gamma-P)-1}\ell_i - (\kappa+\gamma-P)\bigg)  - \bigg(\sum_{i=0}^{\gamma-1}\ell_i - \gamma\bigg) \\
& \ge & 0.
\end{eqnarray*}

In order to complete the proof, we must show that $\Lambda(C^\sigma)\ge\Lambda(C^I)$ for all $\sigma\in S_P/\Gamma$ with equality
holding if and only if $\sigma$ is trivial.  This will follow by direct calculation together with the following lemma which follows from a classical result on
rearrangements of series (\cite{HLP52}, Theorems~368, 369).  This result is Lemma~3.3 in \cite{M13}.

First, however, we adopt the following notation.  For $0\le n<P$, let $b_n = \kappa$ if $n\in A_\kappa$.  With this notation, given $\sigma\in S_P/\Gamma$,
$$C^\sigma = \prod_{n=0}^{P-1} c_{(\sigma(n)-b_n)\ mod\ P}$$
and under the above assumptions,
$$C^I = \prod_{n=0}^{P-1} c_{(n-b_n)}.$$
Moreover,
\begin{eqnarray*}
\Lambda(C^\sigma)
&  =  &  \sum_{i=0}^{P-1} i^2\,\alpha_i \\
&  =  &  \sum_{i=0}^{P-1} i^2\,(\#\{n\colon (\sigma(n)-b_n)\ mod\ P = i\}) \\
&  =  &  \sum_{i=0}^{P-1} \big( (\sigma(n)-b_n)\ mod\ P \big)^2.
\end{eqnarray*}

\begin{lemma}\label{lem:rearrangement}
Given two finite sequences of real numbers $(\alpha_n)$ and $(\beta_n)$ defined up to rearrangement,
the sum
$$\sum_n \alpha_n\,\beta_n$$
is maximized when $\alpha$ and $\beta$ are both monotonically increasing or monotonically decreasing.
Moreover, if for every rearrangement $\alpha'$ of $\alpha$,
$$\sum_n \alpha_n'\,\beta_n \le \sum_n \alpha_n\beta_n$$
then $\alpha$ and $\beta$ are {\em similarly ordered}, that is, for every $j,\,k$,
$$(\alpha_j-\alpha_k)(\beta_j-\beta_k)\ge 0.$$

In particular, for every $\sigma\in S_P$,
$$\sum_{n=0}^{P-1} n\,b_n \ge \sum_{n=0}^{P-1} \sigma(n)\,b_n$$
with equality holding if and only if $\sigma$ is trivial.
\end{lemma}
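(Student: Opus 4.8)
The plan is to recognize this as the classical rearrangement inequality (Hardy--Littlewood--P\'olya, Theorems~368--369) and to prove it by a finite exchange argument, then specialize to the permutation statement. First I would fix the sequence $(\beta_n)$ and view $\sum_n \alpha_n\beta_n$ as a function of the pairing between the multisets $\{\alpha_n\}$ and $\{\beta_n\}$; since only finitely many pairings exist, a maximizing one exists. The core observation is purely local: for any two indices $j,k$, exchanging the partners of $\alpha_j$ and $\alpha_k$ changes the sum by
$$(\alpha_j\beta_k + \alpha_k\beta_j) - (\alpha_j\beta_j + \alpha_k\beta_k) = -(\alpha_j-\alpha_k)(\beta_j-\beta_k).$$
Thus if a pairing contains a \emph{discordant} pair, meaning $(\alpha_j-\alpha_k)(\beta_j-\beta_k)<0$, the swap strictly increases the sum, so no maximizing pairing can contain such a pair. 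This immediately yields both the maximization claim (the concordantly sorted arrangement has no discordant pairs and hence is optimal) and the ``moreover'' characterization: a pairing optimal against all rearrangements of $\alpha$ must satisfy $(\alpha_j-\alpha_k)(\beta_j-\beta_k)\ge 0$ for all $j,k$, i.e.\ $\alpha$ and $\beta$ are similarly ordered.

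For the specialization I would apply the inequality with $\alpha_n=\sigma(n)$ ranging over the values $\{0,\dots,P-1\}$ and $\beta_n=b_n$. The decisive structural fact is that $(b_n)_{n=0}^{P-1}$ is non-decreasing: by definition $b_n=\kappa$ exactly when $n\in A_\kappa$, and the trivial-partition blocks $A_0,\dots,A_{P-1}$ are consecutive intervals, so $b$ is a step function that increases by one at each block boundary. Consequently the identity assignment $\sigma(n)=n$ is concordant with $b$ and maximizes $\sum_n \sigma(n)\,b_n$, giving $\sum_n n\,b_n \ge \sum_n \sigma(n)\,b_n$ for every $\sigma\in S_P$.

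The delicate point, and the step I expect to require the most care, is the equality clause, since $b$ has repeated values and the naive equality condition of the rearrangement inequality is only ``no strictly discordant pair.'' I would argue that the level sets of $b$ are precisely the blocks $A_\kappa$ (distinct blocks carry distinct values $\kappa$). The no-discordant-pair condition then forces $\sigma(j)\le\sigma(k)$ whenever $j\in A_\kappa$ and $k\in A_{\kappa'}$ with $\kappa<\kappa'$; since $\sigma$ is a bijection and $|A_0|=\ell_0$, this pins $\sigma(A_0)$ to the $\ell_0$ smallest values, namely $A_0$ itself, and then inductively $\sigma(A_\kappa)=A_\kappa$ for every $\kappa$. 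That is exactly the statement that $\sigma$ is trivial, so equality holds if and only if $\sigma\in\Gamma$; within each block $\sigma$ may permute freely without affecting the sum, which is consistent with the ties and completes the characterization.
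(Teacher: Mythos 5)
Your argument is correct in substance, and it is partly a different route from the paper's. The paper does not prove the rearrangement statements at all: it simply cites Theorems~368 and 369 of Hardy--Littlewood--P\'olya for both the maximization claim and the ``similarly ordered'' characterization, and devotes its proof entirely to the specialization. There it argues \emph{contrapositively}: if $\sigma$ is non-trivial, take the minimal $\kappa$ with $\sigma(A_\kappa)\ne A_\kappa$ and extract $m\in A_\kappa$ and $k\in A_\lambda$, $\lambda>\kappa$, with $b_m<b_k$ but $\sigma(m)>\sigma(k)$; so $\sigma(n)$ and $b_n$ are not similarly ordered and equality is impossible. You instead argue \emph{directly}: equality forces the pairing $(\sigma(n),b_n)$ to be free of discordant pairs, and then bijectivity of $\sigma$ plus the block structure of $b$ pins $\sigma(A_0)=A_0$ and, inductively, $\sigma(A_\kappa)=A_\kappa$ for all $\kappa$, i.e.\ $\sigma\in\Gamma$. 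Both versions rest on the same two ingredients (the similarly-ordered criterion and the fact that $b$ is non-decreasing and constant on blocks); yours has the advantage of making the lemma self-contained via the exchange argument, while the paper's is shorter because it outsources the classical part.

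There is one inferential slip in your self-contained part: from ``no maximizing pairing contains a discordant pair'' you conclude that ``the concordantly sorted arrangement has no discordant pairs and hence is optimal.'' The swap argument gives optimal $\Rightarrow$ discordant-free; you are using the converse, discordant-free $\Rightarrow$ optimal, which does not follow (with ties there are many discordant-free pairings, and their mutual equality of sums is exactly what needs proof). Two standard patches: (i) transform an \emph{arbitrary} pairing into the sorted one by selection-sort swaps --- at each step pair the smallest remaining $\alpha$ with the smallest remaining $\beta$; each such swap changes the sum by $-(\alpha_j-\alpha_k)(\beta_j-\beta_k)\ge 0$, so the sorted arrangement dominates every arrangement; or (ii) in the application at hand, invoke your own level-set argument: any discordant-free pairing with $b$ preserves every block $A_\kappa$, hence has the same sum as the identity (since $b$ is constant on blocks), so the attained maximum equals $\sum_{n=0}^{P-1} n\,b_n$. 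Either patch is two lines, and with it your proof is complete.
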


\begin{proof}
The first part of the lemma is simply a restatement of Theorems~368 and 369 of \cite{HLP52}.  To prove the second part,
note first that $b_n$ is a non-decreasing sequence and in particular is constant on each $A_\kappa$.
Theorem~368 in \cite{HLP52} states that a sum of the form $\sum_{n=0}^{P-1} \sigma(n)\,b_n$ is maximized when $\sigma(n)$ is monotonically increasing,
which proves the given inequality.  Since $b_n$ is constant on each $A_\kappa$, it follows that if $\sigma$ is trivial, then
we have equality.

If $\sigma$ is not trivial then we will show that the sequences $\sigma(n)$ and $b_n$ are not similarly ordered.  Letting $\kappa$ be the minimal index
such that $A_\kappa$ is not left invariant by $\sigma$, there exists $m\in A_\kappa$ such that $\sigma(m)\in A_\mu$ for some $\mu>\kappa$, and for some
$\lambda>\kappa$ there exists $k\in A_\lambda$ such that $\sigma(k)\in A_\kappa$.  Therefore, $b_m=\kappa < \lambda = b_k$ but since $\mu>\kappa$,
$\sigma(m) > \sigma(k)$, and so $\sigma(n)$ and $b_n$ are not similarly ordered.
\end{proof}

In order to complete the proof, define $\C_1,\,\C_2\subseteq\{0,\,\dots,\,P-1\}$ by $n\in\C_1$ if $0\le \sigma(n)-b_n <P$,
and $n\in\C_2$ if $-P+1\ge \sigma(n)-b_n <0$ (note that always $\abs{\sigma(n)-b_n} <P$) so that when $n\in C_2$,
$(\sigma(n)-b_n)\ mod\ P = \sigma(n)-b_n + P$.  Let $\sigma'(n)=\sigma(n)$ if $n\in\C_1$ and $\sigma(n)+P$ if $n\in\C_2$, and let
$(a_n)_{n=0}^{P-1}$ be an increasing sequence enumerating the set $\sigma(\C_1)\cup(\sigma(\C_2)+P)$.  Therefore,
\begin{align*}
\Lambda(C^\sigma) - \Lambda(C^I)
&  =    \sum_{n=0}^{P-1} (\sigma'(n)-b_n)^2 - \sum_{n=0}^{P-1} (n-b_n)^2 \\
&  =    \bigg[\sum_{n=0}^{P-1} (\sigma'(n)-b_n)^2 - \sum_{n=0}^{P-1} (a_n-b_n)^2\bigg]
\\ & \qquad \qquad+ \bigg[\sum_{n=0}^{P-1} (a_n-b_n)^2 - \sum_{n=0}^{P-1} (n-b_n)^2\bigg] \\
&  =    2\,\bigg[\sum_{n=0}^{P-1} a_nb_n - \sigma'(n)b_n\bigg] + \bigg[\sum_{n=0}^{P-1} (a_n-b_n)^2 - (n-b_n)^2\bigg] \\
&  =    I + II.
\end{align*}
Since $a_n$ is increasing, $I\ge 0$ by Lemma~\ref{lem:rearrangement}, and since $a_n\ge n$ for all $n$,
$(a_n-b_n)\ge(n-b_n)\ge 0$ so that $(a_n-b_n)^2\ge (n-b_n)^2$ and hence $II\ge 0$.  It remains to show that equality holds only if $\sigma$ is trivial.
If $\Lambda(C^\sigma)=\Lambda(C^I)$ then $I = II = 0$.  Since $II=0$, $\C_2=\emptyset$ for if $a_n\in\sigma(\C_2)+P$ then $a_n>n$ and we would have $II > 0$.
Since $\C_2=\emptyset$, $\sigma'(n)=\sigma(n)$ so that
\begin{eqnarray*}
0 &  =  &  \Lambda(C^\sigma) - \Lambda(C^I) \\
  &  =  &  \sum_{n=0}^{P-1} (\sigma(n)-b_n)^2 - \sum_{n=0}^{P-1} (n-b_n)^2 \\
  &  =  &  2\,\sum_{n=0}^{P-1} (n\,b_n - \sigma(n)\,b_n)
\end{eqnarray*}
which by Lemma~\ref{lem:rearrangement} implies that $\sigma$ is trivial.  The proof is complete.

%\bibliography{walnut-butzer}
%\input{referenc}
\end{document}